\documentclass[12pt]{article}
\usepackage[utf8]{inputenc}
\usepackage[margin=2cm]{geometry}
\usepackage{fullpage,enumitem,amssymb,amsmath,amsthm,xcolor,cancel,gensymb,hyperref,graphicx,bbm}
\usepackage{caption}
\usepackage{subfigure}
\usepackage{comment}

\usepackage{indentfirst}



\usepackage{natbib}
\bibliographystyle{apalike}

\usepackage{algorithm}
\usepackage{algpseudocode}

\usepackage{hyperref}
\usepackage{cleveref}
\usepackage{setspace}
\setlength{\parskip}{0.5em}

\usepackage{titlesec}
\titlespacing\section{0pt}{6pt plus 4pt minus 2pt}{0pt plus 2pt minus 2pt}
\titlespacing\subsection{0pt}{6pt plus 4pt minus 2pt}{0pt plus 2pt minus 2pt}
\titlespacing\subsubsection{0pt}{6pt plus 4pt minus 2pt}{0pt plus 2pt minus 2pt}

\newtheorem{thm}{Theorem}
\newtheorem{lem}{Lemma}
\newtheorem{rmk}{Remark}
\newtheorem{cor}{Corollary}
\newtheorem{prop}{Proposition}
\newtheorem{defn}{Definition}
\newtheorem{assm}{Assumption}
\crefname{lem}{Lemma}{Lemmas}

\DeclareMathOperator*{\argmax}{arg\,max}
\DeclareMathOperator*{\argmin}{arg\,min}
\DeclareMathOperator*{\Var}{\text{Var}}

\DeclareMathOperator*{\bbE}{\mathbb{E}}
\DeclareMathOperator*{\bbR}{\mathbb{R}}

\DeclareMathOperator*{\bbP}{\mathbb{P}}
\DeclareMathOperator*{\calB}{\mathcal{B}}

\DeclareMathOperator*{\calE}{\mathcal{E}}

\DeclareMathOperator*{\calL}{\mathcal{L}}
\DeclareMathOperator*{\calN}{\mathcal{N}}

\DeclareMathOperator*{\calR}{\mathcal{R}}
\DeclareMathOperator*{\calS}{\mathcal{S}}

\DeclareMathOperator*{\tr}{\textup{tr}}
\DeclareMathOperator*{\rk}{\textup{rk}}
\DeclareMathOperator*{\vect}{\textup{vec}}

\DeclareMathOperator*{\br}{\mathbf{r}}

\providecommand{\keywords}[1]
{
  \small	
  \textbf{\textit{Keywords---}} #1
}

\title{Low-Rank Online Dynamic Assortment with Dual Contextual Information}

\author{
    Seong Jin Lee\thanks{Ph.D. student, Department of Statistics and Operations Research, University of North Carolina, Chapel Hill. Email: slee7@unc.edu} 
    \and
    Will Wei Sun\thanks{Associate Professor, Daniels School of Business, Purdue University. Email: sun244@purdue.edu} 
    \and
    Yufeng Liu\thanks{Professor, Department of Statistics, University of Michigan, Email: yufliu@umich.edu.}
}

\date{}

\usepackage{xr-hyper}

\begin{document}

\newpage

\maketitle

\onehalfspacing

\begin{abstract}
    As e-commerce expands, delivering real-time personalized recommendations from vast catalogs poses a critical challenge for retail platforms. Maximizing revenue requires careful consideration of both individual customer characteristics and available item features to continuously optimize assortments over time. 
    In this paper, we consider the dynamic assortment problem with dual contexts -- user and item features. 
    In high-dimensional scenarios, the quadratic growth of dimensions complicates computation and estimation.
    To tackle this challenge, we introduce a new low-rank dynamic assortment model to transform this problem into a manageable scale. Then we propose an efficient algorithm that estimates the intrinsic subspaces and utilizes the upper confidence bound approach to address the exploration-exploitation trade-off in online decision making. Theoretically, we establish a regret bound of $\tilde{O}((d_1+d_2)r\sqrt{T})$, where $d_1, d_2$ represent the dimensions of the user and item features respectively, $r$ is the rank of the parameter matrix, and $T$ denotes the time horizon. This bound represents a substantial improvement over prior literature, achieved by leveraging the low-rank structure. Extensive simulations and an application to the Expedia hotel recommendation dataset further demonstrate the advantages of our proposed method.
\end{abstract}
\keywords{Bandit algorithm, low-rankness, online decision making, regret analysis, reinforcement learning}

\newpage
\doublespacing
\setlength{\abovedisplayskip}{6pt minus 2pt}
\setlength{\belowdisplayskip}{6pt minus 2pt}
\setlength{\abovedisplayshortskip}{6pt minus 2pt}
\setlength{\belowdisplayshortskip}{6pt minus 2pt}


\baselineskip=23.5pt

\section{Introduction}

The assortment selection problem is a key challenge for retailers, who often face constraints on the number of items they can offer customers despite having a wide array of products. In the dynamic e-commerce environment, rich real-time information about customers and items drives the need for personalized algorithms adept at adapting to changing conditions. In online retail, the platform determines product recommendations by leveraging past assortment offerings and the users' choices on their assortment offerings. Figure \ref{fig:illust_assortment} illustrates the contextual dynamic assortment problem: At each time, a customer arrives with user features, the platform estimates user preferences on items based on historical data and current features, selects a subset from the item catalog to present to the current user and records user choices for improving future assortment selection.

\begin{figure}[!htb]
    \centering
    \captionsetup{width=.9\linewidth}
    \includegraphics[width = 0.7\textwidth]{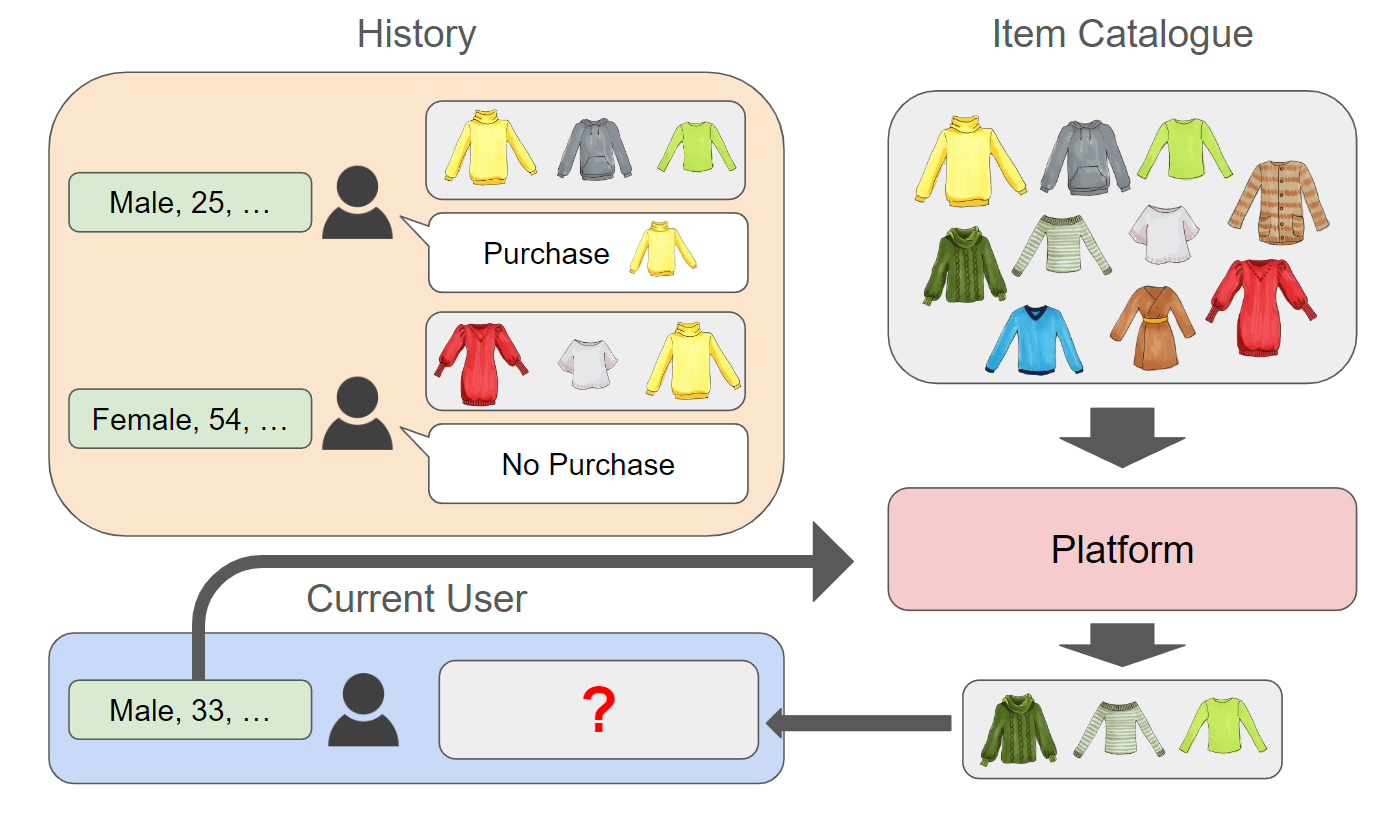}
    \caption{Illustration of the contextual dynamic assortment problem.}
    \label{fig:illust_assortment}
\end{figure}
\vspace{-1em}

The primary objective of the platform is to maximize the total revenue over a time horizon $T$. It can adopt a `greedy strategy' aimed at maximizing the revenue for each individual user by offering what is estimated as the best assortment at each time. However, due to the uncertainty of estimation, pure exploitation would lead to sub-optimal actions. To improve the accuracy of estimating these optimal assortments, the platform must delve into exploring user behaviors across various items. These two strategies, exploration for improving user preference estimation and exploitation for revenue maximization, often do not align in online decision making problems. Reinforcement learning, an area gaining increasing attention in statistics \citep{chakraborty2014dynamic, zhao2015new, shi2018high, shi2023multiagent, zhou2024estimating}, provides valuable insights for navigating the delicate balance of the exploration-exploitation trade-off.

In recent years, a plethora of studies have utilized bandit and reinforcement learning algorithms \citep{lattimore2020bandit} to address the dynamic assortment problem. Some research focuses on the non-contextual scenario \citep{chen2018note, agrawal2019mnl}, while others delve into the contextual setups \citep{chen2020dynamic, oh2021multinomial, goyal2022dynamic} under the multinomial logit (MNL) choice model \citep{mcfadden1974conditional} to characterize user behaviors. Advances in digital retail platform infrastructure have empowered these platforms with rich data on items and users, including demographics, search histories, and purchase records. By harnessing user features, the platform can tailor personalized assortments by looking back at the choice history of users with similar user contexts. Similarly, utilizing item features allows the platform to infer item preferences by analyzing the choice history of items sharing similar item contexts, even when the item has limited inclusion in assortment history. This motivates us to study the contextual dynamic assortment problem, specifically focusing on the incorporation of ``dual contextual information'' involving both items and users.

In contextual dynamic assortment, a common strategy for handling dual contexts is to stack the dual features as a joint feature vector and utilize its linear form \citep{sumida2023optimizing}. Let $q_t \in \bbR^{d_1}$ be the vector representation of the feature of user $t$ and $p_i \in \bbR^{d_2}$ be a vector representation of the feature of item $i$. By stacking the dual features as $x_{it} = (1, p_i^\top , q_t^\top)^\top \in \bbR^{d_1+d_2+1}$, the utility of item $i$ for user $t$ is represented as $v_{it} = x_{it}^\top \phi$ with vector parameter $\phi \in \bbR^{d_1+d_2+1}$; see Figure \ref{fig:StackUCBMNL_v}. 
\begin{figure}[h!]
    \centering
    \includegraphics[width = 0.7\textwidth]{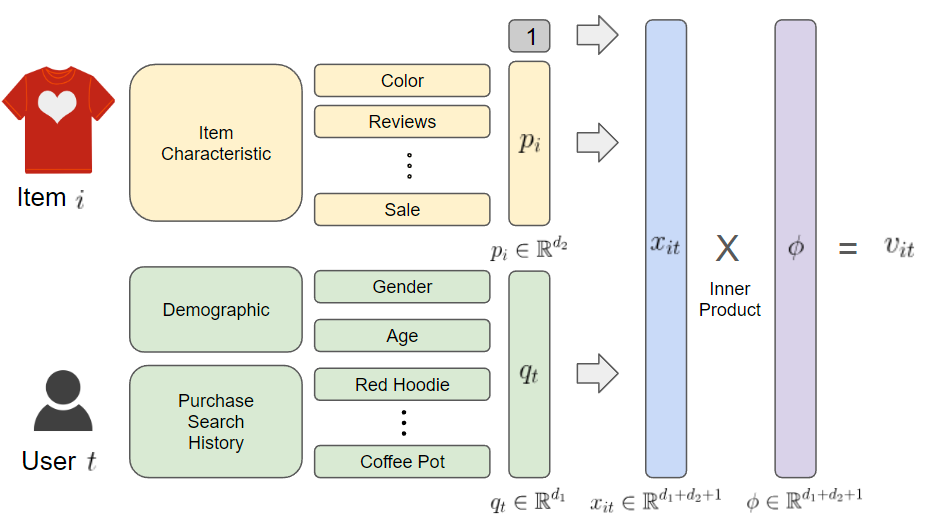}
    \caption{Stacked utility formulation of item and user features.}
    \vspace{-1em}
    \label{fig:StackUCBMNL_v}
\end{figure}
However, this approach fails to capture the interaction terms between the user feature and the item feature. Consider the effect of item price on the utility of an item to a user. The effect will be different across different users, some users might be sensitive to the price, while others may be not. Similarly, the red color of an item may have a positive effect on the utility if the user prefers red items, but will have a negative effect if the user dislikes red items. The additive nature of the stack formulation cannot capture this intrinsic interaction between the two sides of features.

Another approach involves utilizing the vectorization of the combined effects of the two features as a joint feature vector, followed by the application of existing contextual dynamic assortment selection methods \citep{agrawal2019mnl, oh2021multinomial}. By including the intercept and taking the outer product of the two extended features, it accounts for the intercept, main effects, and user-item interaction effects simultaneously; see Figure \ref{fig:VectUCBMNL_v}. However, in high-dimensional settings, the dimension of the outer product is $(d_1+1)\times (d_2+1)$, which scales with the product of the dimensions of the dual features, making the estimation of the effects computationally challenging.

\begin{figure}[h!]
    \centering
    \includegraphics[width = 1.0\textwidth]{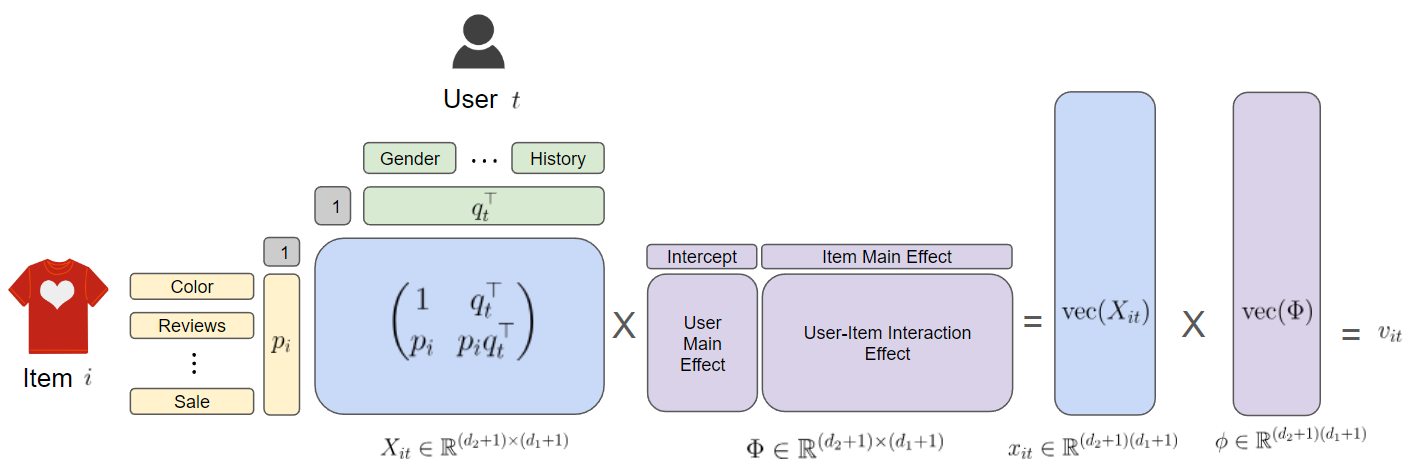}
    \caption{Vectorized utility formulation of interactions between user and item features.}
    \vspace{-1em}
    \label{fig:VectUCBMNL_v}
\end{figure}

In online retail, high-dimensional features can often be condensed into lower-dimensional latent factors. For instance, user traits like buying power can be inferred as a combination of income, education level, and past purchase behaviors. This buying power directly influences item utility, interacting with features such as price. Similarly, color preferences can be estimated from demographic data and previous red item purchases. Despite a user's extensive purchase history spanning numerous items, its impact on red item utility can be simplified into a scalar value. Likewise, an item's characteristics can be projected onto a reduced-dimensional latent space. Consequently, interaction effects between these dual contexts can be characterized by a smaller set of latent factors, as depicted in Figure \ref{fig:parametermatrix}. 
\vspace{-0.5em}
\begin{figure}[h!]
    \centering
    \includegraphics[width = 0.6\textwidth]{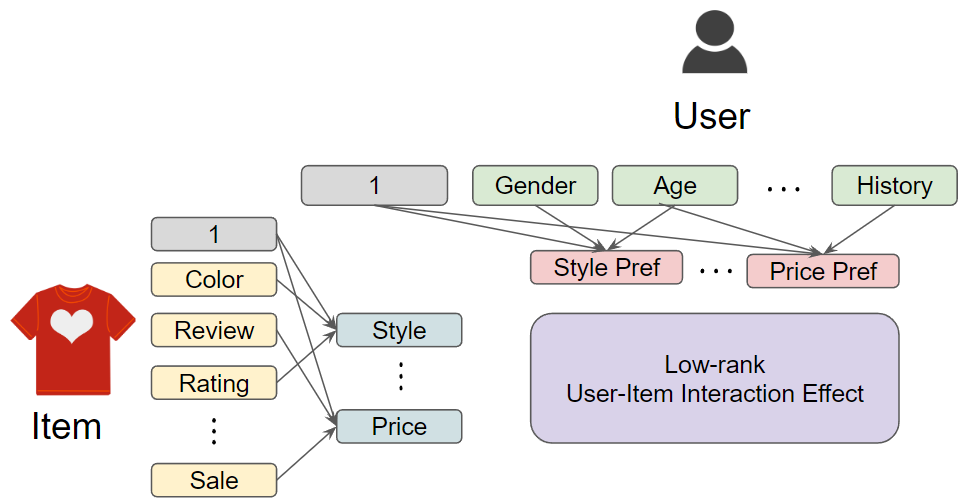}
    \caption{Illustration of low-rank user-item interaction effect.}
    \vspace{-1em}
    \label{fig:parametermatrix}
\end{figure}
By addressing the dual contextual problem within a low-rank structure, while retaining the impact of the features on the user preferences for items, we effectively reduce the dimension of the target parameter. This not only enhances computational efficiency but also reduces the cumulative estimation error across parameter components.

Inspired by this observation, in this paper we adopt the bilinear form $v = p^\top \Phi q$ with a low-rank matrix parameter $\Phi$ to represent utilities. This form offers flexibility, accommodating both stacking and vectorization approaches of incorporating dual features. 
The stack approach can be interpreted as the bilinear form with intercepts and a rank-2 parameter matrix while the vectorization approach can be interpreted as the scenario with full rank. Moreover, we can reduce the computational complexity by imposing a low-rank structure on $\Phi$, reducing the dimension of the parameter from $d_1d_2$ to $(d_1+d_2)r$, where $r(\ll \min\{d_1, d_2\})$ is the rank of the parameter matrix. Low-rank models have garnered growing interest in statistics \citep{jain2013low, xia2021statistical, xia2022inference, ma2024statistical} with diverse scientific and business applications \citep{bell2007lessons, koren2009matrix, udell2019big, zhen2024nonnegative, zhou2024stochastic}. Following this trend, our paper considers a new Low-rank Online Assortment with Dual-contexts (LOAD) problem.

To solve the LOAD problem, we propose an Explore-Low-rank-Subspace-then-Apply-UCB (ELSA-UCB) algorithm which consists of two main stages: subspace exploration and the application of Upper Confidence Bound (UCB). The major computational bottleneck arises from estimating the entire preference parameter matrix.
To overcome this challenge, we project the features and parameters to a low-dimensional subspace. In the first stage, we leverage the inherent low-rank structure to estimate the low-dimensional subspace of the matrix parameter. To accomplish this, we explore the space by employing random assortments and then solve the rank-constrained likelihood maximization problem to estimate the subspace. Given the non-convex nature of the rank-constrained optimization problem, we adopt the alternating gradient descent algorithm based on the Burer-Monteiro formulation \citep{burer2003nonlinear, zheng2016convergence, chi2019nonconvex} to estimate the parameter matrix. Subsequently, we perform singular value decomposition (SVD) on the estimated matrix to derive the intrinsic subspace. Utilizing the estimated subspace, we rotate the features and parameters in accordance with the subspace, then truncate negligible terms to effectively reduce the problem's dimensionality. In the second stage, we employ the UCB-based strategy on the reduced dimension based on careful construction of the confidence bounds in this space. UCB uses an ``optimism in the face of uncertainty'' idea that addresses the exploration-exploitation trade-off by selecting the best action based on the ``optimistic reward'', the upper confidence bound of the estimated reward \citep{auer2002using, lattimore2020bandit}. Contrary to the unbiased model in the original parameter space, the preference model within the reduced space is inherently biased with respect to the estimated subspaces. To handle this, we introduce a novel tool that provides a confidence bound for the expected reward on the reduced space, by correcting this bias. Implementing the UCB-based policy in this reduced space reduces the parameter dimension from $d_1 \times d_2$ to $r(d_1 + d_2) - r^2$. This reduction markedly improves computational efficiency, reduces estimation error, and enhances the algorithm performance. In practice, the rank of the underlying parameter matrix is often unknown. To address this issue, we propose ELSA-GIC, incorporating the rank selection procedure using Generalized Information Criterion (GIC) \citep{konishi1996generalised, fan2013tuning} into ELSA-UCB. We also prove rank selection consistency of ELSA-GIC in the LOAD problem.

To assess the theoretical performance of our proposed algorithm, we study the cumulative regrets of our policy. The regret is defined as the difference between the optimal reward from the oracle possessing complete knowledge of environmental parameters and the reward obtained from a given policy. In theory, we establish an $\tilde{O}((d_1+d_2)r\sqrt{T})$ regret upper bound for our proposed ELSA-UCB policy (Theorem \ref{thm:regretbound}). Notably, in low-rank setups with $r\ll \min\{d_1, d_2\}$, our regret bound is much improved from $\tilde{O}(d_1d_2\sqrt{T})$ regret bounds achieved by existing methods \citep{chen2020dynamic, oh2021multinomial} that vectorize the outer product of the dual features.

The regret analysis for our policy presents several challenges. Firstly, in the dual-contextual environment, the distribution of joint features depends on the assortment selection, violating the i.i.d. feature assumption in existing dynamic assortment selection literature \citep{chen2020dynamic, oh2021multinomial}. To accommodate this, we impose minimal assumptions on the item features (Assumption \ref{assm:item_feature}) to ensure the identifiability of parameters. Secondly, the non-convex nature of the low-rank optimization problem precludes us from leveraging classical convex optimization theories. While \citet{jun2019bilinear, kang2022efficient} studied generalized linear bandits with low-rank structure, their analysis tools are not applicable to our LOAD problem due to the unique combinatorial structure in the dynamic assortment selection. Thus, we establish novel estimation bounds specifically tailored for the rank-constrained optimization problem on assortment scenarios.

Finally, we present numerical results to demonstrate the effectiveness of our algorithm. Initially, we evaluate the ELSA-GIC policy's performance on synthetic data, exploring various scenarios including changing ranks, feature dimensions, number of items, and assortment capacity. We compare cumulative regret over different time horizons with two variants of the UCB-MNL policy \citep{oh2021multinomial}: Stacked UCB-MNL and Vectorized UCB-MNL. Our analysis consistently shows that ELSA-GIC outperforms both methods across all scenarios, with the performance gap widening in scenarios with lower ranks and higher dimensions. We also show the robust performance of ELSA-GIC under model mis-specification. Additionally, we conduct real data analysis on the Expedia dataset to optimize hotel recommendations. Our findings demonstrate that the parameter matrix exhibits a low-dimensional structure, and our method effectively leverages this to achieve improved performance, highlighting its practical usefulness.

\subsection{Related Work}

Our work is closely related to recent studies on contextual dynamic assortment selection and low-rank bandits. Additional relevant literature is provided in Section S.1 of the supplement.

\begin{itemize}
\item \textbf{Contextual Dynamic Assortment Selection}

Inspired by the foundational work by \citet{caro2007dynamic}, numerous studies have explored the dynamic assortment problem under the MNL bandit model. Various policies, including Explore-Then-Commit \citep{rusmevichientong2010dynamic, saure2013optimal}, Thompson Sampling \citep{agrawal2017thompson} and UCB \citep{agrawal2019mnl} have been proposed to address the balance between exploration and exploitation in the MNL bandit model. Other studies consider various scenarios such as robustness against outliers \citep{chen2023robust}, multi-stage choices \citep{xu2023assortment} or extension to different choice models \citep{aouad2023exponomial}. Contextual dynamic assortment considers various facets, including item-specific features \citep{wang2019online}, time-varying item features \citep{chen2020dynamic}, user types \citep{kallus2020dynamic}, item features \citep{shao2022sparse}, or features associated with user-item pairs \citep{oh2021multinomial, goyal2022dynamic}. These features are utilized in modeling utilities for the MNL bandit model. However, while existing algorithms address single-context scenarios, focusing solely on users or items, they lack the capability to estimate interactions between dual contexts. Although algorithms designed for joint features on user-item pairs could be applied to the dual contextual setup by interpreting the vectorization of the outer product of dual features, this approach inevitably encounters large computational burdens and fails to capture the low-rank structure of the parameter matrix in our LOAD problem. Our proposed method aims to fill this gap.

\item \textbf{Low-Rank Bandits}

The dynamic environment involving high-dimensional covariates has also been a vibrant area of research. Based on a rich background of statistical tools for high dimensional problems \citep{wainwright2019high}, numerous studies addressed the `curse of dimensionality' in high-dimensional bandits. These studies often assume sparsity in the parameter vector and employ the LASSO-based approach \citep{abbasi2012online, kim2019doubly} or presume the low-rank structure within the parameter matrix \citep{jun2019bilinear, kang2022efficient, cai2023doubly, zhou2024stochastic}. While certain studies have utilized the LASSO-based approach to address the sparse dynamic assortment problem \citep{wang2019online, shao2022sparse}, there has been comparatively less emphasis on leveraging the low-rank structure in the dynamic assortment. Recently, \citet{cai2023doubly} approached the dynamic assortment problem by modeling rewards as a bilinear form of action and context. However, they did not consider any user choice model and overlooked the inherent structure between assortment sets. Because of this, their algorithms and analysis tools are not applicable to our problem. To the best of our knowledge, our approach is the first one to incorporate the low-rank structure in the dynamic assortment selection with dual contexts.

\end{itemize}

\subsection{Notations}

Let $[n]$ denote the set $\{1,2, \ldots, n\}$. The Frobenius norm of a matrix is denoted by $\|\cdot\|_F$, and the matrix-induced norm  $\|v\|_W = \sqrt{v^\top W v}$ is denoted by $\|\cdot\|_W$. The ball induced by the Frobenius norm, centered at $\Theta_0$ with radius $r$ is denoted by $\calB(\Theta_0, r) := \{\Theta: \|\Theta - \Theta_0\|_F \le r\}$. For a vector $v$, $(v)_i$ represents its $i$-th component. We write $\Sigma_1 \succ \Sigma_2$ if $\Sigma_1 - \Sigma_2$ is positive definite. Denote the Kronecker product of two matrices by $\otimes$. We use the notation $\tilde{O}(a_n)$ to denote that the quantity is bounded by $a_n$ up to logarithmic factors.

\section{Problem Formulation}


In this section, we introduce the mathematical formulation of the LOAD problem. At time $t$, a user with feature vector $q_t \in \bbR^{d_1}$ arrives at the platform. Assume there are $N$ products, each with item feature vector $p_i \in \bbR^{d_2}$ for $i\in [N] = \{1, \ldots , N\}$.  The platform offers an assortment $S_t$ of size at most $K$ from the catalog $[N]$. In other words, $S_t\subset [N]$ and $|S_t| \le K$. When the user is provided with an assortment, the user either chooses one of the items in the assortment or chooses not to purchase at all. To quantify this user choice, let $y_{it} \in \{0,1\}$ represent whether user $t$ chooses item $i$, for $i\in [N]$ and $t\ge 1$. In addition, denote $y_{0t}=1$ as the indicator that user $t$ does not choose any item at all. Let $i_t\in S_t$ be the item that user $t$ chooses ($0$ if no purchase). Note that for each user $t$, $y_{i_t,t} = 1$ and the remaining is $y_{jt} = 0$ for $j \in S_t \cup \{0\} \backslash \{i_t\}$, i.e., $\sum_{i \in S_t\cup\{0\}} y_{it} = 1$. Denote $\textbf{y}_t = (y_{1t}, \ldots , y_{Nt})^\top \in \{0,1\}^N$ as the choice vector of user $t$.

In assortment selection, a widely adopted choice model is the multinomial logit (MNL) model \citep{mcfadden1974conditional, rusmevichientong2010dynamic, agrawal2019mnl}. Suppose the utility of item $i$ to user $t$ is given as a random variable $X_{it} = v_{it} + \epsilon_{it}$, where $v_{it}$ is the expected utility and $\epsilon_{it}$ is a mean-zero error term. 
Assume $v_{0t} = 0$ for identifiability. The MNL model assumes that 
the probability of user $t$ choosing item $i$ is given as
$$
p_t(i) := \bbP(y_{it} = 1) = \frac{\exp(v_{it})}{1+\sum_{j\in S_t} \exp(v_{jt})}, \quad  i \in S_t,
$$
and the probability of no purchase is given as $1/(1+\sum_{j\in S_t}\exp(v_{jt}))$. In this paper, we model the utility $v_{it}$ as a low-rank bilinear form of the dual contexts with $$v_{it} = p_i^\top \Phi^* q_t,$$ where $\Phi^*\in \bbR^{d_2 \times d_1}$ is a rank-$r$ matrix, illustrated in Figure \ref{fig:parametermatrix}, with singular values $\sigma_1 \ge \cdots \ge\sigma_r > 0$. 

After the user's purchasing decision under the MNL model, the platform earns revenue from the selected item (or zero if no purchase is made). Let $r_{i}$ denote the revenue that the platform achieves when item $i$ is sold. 
Given an assortment $S$, the revenue that the platform achieves from user $t$ is $\sum_{i\in S} r_i y_{it}$, and the expected revenue of the platform for user $t$ is
$$
R_t(S, \Phi^*) = \bbE \left[\sum_{i\in S} r_{i} \cdot y_{it}\right] =  \sum_{i\in S} r_{i} p_t(i |S, \Phi^*).
$$

The platform aims to choose the personalized assortment for each user $t$ at time $t$ to maximize the cumulative expected revenue gained from all users. Note that the possible assortment set with constraint $K$, i.e., $\calS = \{S\subset [N]: |S|\le K\}$ is finite. Therefore, when $\Phi^*$ is known in hindsight, there exists an optimal assortment $S_t^*$ for user $t$ that maximizes the expected revenue since we can calculate the expected utility for every $S\in \calS$. Denote the optimal assortment at time $t$ as
$$
S_t^* = \argmax_{S \in \calS} R_t(S, \Phi^*).
$$
However, in practice, the true parameter $\Phi^*$ is not known to the platform. Instead, the platform chooses the assortment for each user $t$ based on some policy. To evaluate the performance of a policy, we define regret as the difference in expected revenue between the optimal assortment and the chosen assortment from the given policy. In other words, if we choose assortments $S_t$ for each user $t$, the total cumulative regret over time horizon $T$ is defined as
$$
{\calR}_T = \bbE \left[ \sum_{t=1}^T \left( R_t(S_t^*, \Phi^*) - R_t(S_t, \Phi^*) \right) \right].
$$

Note that maximizing the cumulative revenue is equivalent to minimizing the cumulative regret. At time $t$, given the current user context $q_t$ and historical data, including past assortments $S_1, S_2, \ldots, S_{t-1}$ presented to users $1, 2, \ldots, t-1$ with respective features $q_1, q_2, \ldots, q_{t-1}$, and the corresponding feedback $\textbf{y}_1, \textbf{y}_2, \ldots, \textbf{y}_{t-1}$, the platform needs to determine the assortment $S_t$ for the user $t$ so that the cumulative regret could be small.

\section{Algorithm}

In this section, we propose our Explore-Low-rank-Subspace-then-Apply-UCB (ELSA-UCB) method to address the LOAD problem defined in the previous section. The algorithm is mainly composed of two stages. The first stage is the exploration stage with random assortments which aims to estimate the low-dimensional subspace of the matrix parameter $\Phi^*$. Using the estimated subspace, we rotate the parameter and feature space and truncate the negligible dimensions. In the second stage, we use a UCB-based approach on the estimated subspace to select the assortment with the highest possible reward.

\subsection{Estimation of Subspace}

The first stage of the algorithm aims to acquire estimations of low-rank subspaces. Let $\Phi^* = U^* D^* (V^*)^\top$ be the singular value decomposition of $\Phi^*$, where $U^* \in \bbR^{d_2\times d_2}, V^* \in \bbR^{d_1\times d_1}$ are square orthonormal matrices and $D^*\in \bbR^{d_2\times d_1}$ is a diagonal matrix with singular values as its diagonal components. Let $u^*_{(j)}$ be the $j$-th column of $U^*$ and $v^*_{(j)}$ be the $j$-th column of $V^*$. Consider dual contexts $p\in \bbR^{d_2}$ and $q \in \bbR^{d_1}$. The bilinear form of $p$ and $q$ with respect to $\Phi^*$ can be rewritten as
$$
p^\top \Phi^* q = ((U^*)^\top p)^\top D^* ((V^*)^\top q) = \sum_{j=1}^r d_j p'_j q'_j,
$$
where $p'_j = (u^*_{(j)})^\top p$ and $q'_j = (v^*_{(j)})^\top q$ are rotations of $p$ and $q$ each with respect to orthonormal basis $U^*$ and $V^*$. Thus, the first $r$ columns of $U$ and $V$ each capture the $r$ latent features originating from dual contexts, which affect user preferences toward items. Consequently, by estimating the subspace of the matrix $\Phi^*$ via estimating $U^*$ and $V^*$, the platform effectively extracts the low-dimensional latent features, maintaining the representational capability of the original features.

For estimation of the subspace, we start with random assortments offered to the first $T_0$ users to achieve an initial estimate of $\Phi^*$. Consider the rank-constrained likelihood maximization problem:
\begin{equation} \label{eqn:rank_constrained_MLE}
\min_{\Phi \in \bbR^{d_2\times d_1}, \ \text{rank}(\Phi)\le r} {\calL}_n(\Phi),
\end{equation}
where $\calL_n(\Phi)$ is the negative log-likelihood of the MNL choice model with $n$ samples, expressed as 
$$
{\calL}_n(\Phi) = -\frac{1}{n}\sum_{t=1}^n \left[ \sum_{i\in S_t} y_{it} p_i^\top \Phi q_t - \log \left(1+ \sum_{j\in S_t} \exp(p_j^\top \Phi q_t) \right) \right].
$$

To solve this rank-constrained optimization, we use the Burer-Monteiro formulation \citep{burer2003nonlinear} to optimize over $U \in \bbR^{d_2\times r}$ and $V \in \bbR^{d_1\times r}$, where $\Phi = UV^\top$. We add a regularization term to enforce identifiability \citep{zheng2016convergence}. This leads to our final optimization problem:
\begin{equation} \label{eqn:Monteiro_Burer_MLE}
\min_{U \in \bbR^{d_2\times r}, \ V\in \bbR^{d_1\times r}} \left( \tilde{\calL}_n(U,V) :=  {\calL}_n(UV^\top) + \frac{1}{8} \|U^\top U - V^\top V\|_F^2 \right).
\end{equation}

This non-convex optimization can be solved effectively via an alternating factored gradient descent (FGD). FGD simply applies gradient descent on each factor,
\begin{align*}
    U \leftarrow U - \eta \cdot \nabla_U \tilde{\calL}_n(U,V), \quad V \leftarrow V - \eta \cdot \nabla_V \tilde{\calL}_n(U,V),
\end{align*}
where $\eta$ is the learning rate, $\nabla_U, \nabla_V$ are gradients with respect to each component when other parameters are fixed. Applying these gradient descent steps alternatively, the estimated parameters are known to converge to the true parameters given conditions on the initial estimate \citep{zheng2016convergence, zhang2023generalized}. Details of FGD are given in Algorithm \ref{alg:fgd}.

\begin{algorithm}
 \caption{FGD - Factored Gradient Descent}
 \label{alg:fgd}
 \begin{algorithmic}[1]
 \Require Gradients of the loss function $\nabla_U\tilde\calL$ and $\nabla_V\tilde\calL$, initial estimate $\Phi_0$ and step size $\eta$.
 \Ensure SVD $\Phi_0 = U_0 D_0 V_0^\top$ and initialize $U^{(0)} = U_0D_0^{1/2}$ and $V^{(0)} = V_0D_0^{1/2}$.
 \For{$k=1, \ldots$,}
 \State Update $U^{(k+1)} = U^{(k)} - \eta \nabla_U {\tilde\calL}|_{U = U^{(k)}, V = V^{(k)}}$.
 \State Update $V^{(k+1)} = V^{(k)} - \eta \nabla_V {\tilde\calL}|_{U = U^{(k+1)}, V = V^{(k)}}$.
 \State Repeat until objective function converges.
 \EndFor
 \end{algorithmic}
\Return $\hat\Phi = U^{(k)} (V^{(k)})^\top$.
\end{algorithm}

For the initialization of FGD, we use the unconstrained maximum likelihood estimator (MLE) of $\Phi^*$. As the likelihood function is convex, we can apply a convex optimization algorithm such as the gradient-based algorithm or L-BFGS \citep{liu1989limited} to calculate the initialization.

With the estimate $\hat\Phi$ from Algorithm \ref{alg:fgd}, we obtain the estimated subspaces $\hat{U}$ and $\hat{V}$ via SVD, $\hat\Phi = \hat{U} \hat{D} \hat{V}^\top$. Next, we rotate the original subspace with respect to the estimated subspace. We partition $U^*, V^*, \hat{U}, \hat{V}$ into $(U^*_1, U_2^*), (V_1^*, V_2^*), (\hat{U}_1, \hat{U}_2), (\hat{V}_1, \hat{V}_2)$, where $(\cdot)_1$ denotes the first $r$ columns and $(\cdot)_2$ denotes the remaining columns of a matrix. Define
\begin{align*}
    \Theta^* &:= \hat{U}^\top \Phi^* \hat{V}
    = (\hat{U}_1, \hat{U}_2)^\top U_1^* D_{11} (V_1^*)^\top (\hat{V}_1, \hat{V}_2)  \\
    &= \begin{pmatrix}
        (\hat{U}_1^\top U_1^*) D_{11}^* (\hat{V}_1^\top V_1^*)^\top & (\hat{U}_1^\top U_1^*) D_{11}^* (\hat{V}_2^\top V_1^*)^\top \\
        (\hat{U}_2^\top U_1^*) D_{11}^* (\hat{V}_1^\top V_1^*)^\top  & (\hat{U}_2^\top U_1^*) D_{11}^* (\hat{V}_2^\top V_1^*)^\top 
    \end{pmatrix}.
\end{align*}

We adopt the ``subtraction method'' prevalent in the low-rank matrix bandit \citep{kveton2017stochastic, jun2019bilinear, kang2022efficient}. Note that $\hat{U}_2^\top U_1^*$ and $\hat{V}_2^\top V_1^*$ are expected to be small, assuming that $\hat{U}, \hat{V}$ are good estimations of $U^*, V^*$. Thus its product $(\hat{U}_2^\top U_1^*) (\hat{V}_2^\top V_1^*)^\top$ is expected to be negligible. To utilize this structure, we introduce the notation of ``truncation-vectorization'' of matrices. Consider $\Theta \in \bbR^{d_2\times d_1}$ with block-sub-matrices $\Theta_{11} \in \bbR^{r\times r}$, $\Theta_{12} \in \bbR^{r\times (d_1-r)}$, $\Theta_{21} \in \bbR^{(d_2-r)\times r}$, $\Theta_{22} \in \bbR^{(d_2-r)\times (d_1-r)}$. Define ``truncation-vectorization'' of $\Theta$ as
\begin{equation}\label{eqn:rtv}
\theta_{rtv} := (\vect(\Theta_{11})^\top , \vect(\Theta_{12})^\top , \vect(\Theta_{21})^\top)^\top \in {\bbR}^{(d_1+d_2)r-r^2},
\end{equation}
where ``$rtv$'' is an abbreviation of ``rotation-truncation-vectorization'' to represent that the original matrix $\Phi^*$ has been rotated with respect to $\hat{U}$ and $\hat{V}$, then truncated and vectorized.

Note that $\Theta^*_{22} = (\hat{U}_2^\top U_1^*) D_{11}^* (\hat{V}_2^\top V_1^*)^\top$ is expected to be negligible, so we only need to focus on the estimation of $\theta^*_{rtv}$, the truncation-vectorization of $\Theta^*$. The computational advantage of this is that the dimension of the parameter is now reduced from $d_2 d_1$ of $\Theta^*$ to $r(d_1+d_2)-r^2$ of $\theta^*_{rtv}$. In the rotated space, the user utility of an item is expressed as
$$
v_{it} = p_i^\top \Phi^* q_t = (\hat{U}^\top p_i)^\top \Theta^* (\hat{V}^\top q_t) = \langle (\hat{U}^\top p_i) (\hat{V}^\top q_t)^\top, \Theta^* \rangle.
$$

Denote $X_{it} = (\hat{U}^\top p_i) (\hat{V}^\top q_t)^\top$ and define the truncation-vectorization of $X_{it}$ as $x_{it,rtv}$. Then
$$
v_{it} = \langle X_{it}, \Theta^* \rangle = \langle x_{it,rtv}, \theta_{rtv} \rangle + \langle (X_{it})_{22}, \Theta^*_{22} \rangle \approx \langle x_{it,rtv}, \theta_{rtv} \rangle,
$$
as $\Theta^*_{22} \approx 0$. Hence we can represent our original dual features as $x_{it,rtv}$ and focus on the estimation of low-dimensional parameter $\theta_{rtv}$ for our dynamic assortment selection policy.

When the rank $r$ of the parameter matrix is unknown, we can estimate the rank using Generalized Information Criterion (GIC). GIC \citep{konishi1996generalised, fan2013tuning, morimoto2024information, park2024low} is a generalization of information criteria including Bayesian Information Criterion (BIC) and Akaike Information Criterion (AIC) \citep{akaike1987factor}, and is widely used in the high-dimensional estimation problems to select hyper parameters to avoid over-fitting. 

Adapting to our setting, we utilize the GIC in the following form:
\begin{equation} \label{eqn:GIC}
GIC(r) := {\calL}_n(\hat\Phi_r) + a_n \cdot (d_1 + d_2 - r) \cdot r,
\end{equation}
which penalizes the number of free parameters under the low-rank assumption. We show that the selection of $r$ is theoretically consistent with proper selection of $a_n$ (Proposition 2 in the Appendix) and demonstrate the accuracy of rank selection with $a_n=\log(n)/n$ in our numerical results (Figure \ref{fig:rankest}). We provide details of incorporating GIC to ELSA-UCB in Appendix Section S.5.2.

\subsection{UCB-based Stage}

The second stage of the algorithm employs the UCB approach on $\theta^*_{rtv}$. UCB algorithm, also known as ``optimism in the face of uncertainty'' \citep{lattimore2020bandit}, operates based on the principle that the decision-maker should select the assortment with the highest ``optimistic reward'', or the upper confidence bound of the estimated reward. Even if an assortment has a low estimated expected reward, when it has a high upper confidence bound due to large uncertainty, the algorithm can still favor that assortment. This approach inherently strikes a balance between exploiting the estimated reward and exploring assortments with higher uncertainty. As the algorithm is encouraged to further explore assortments with greater uncertainty, the confidence bounds gradually tighten, converging towards the true expected reward.

To adopt the UCB approach, we formulate the optimistic utility of item $i$ to user $t$ as
\begin{equation} \label{eqn:def_zit}
z_{it} = x_{it,rtv}^\top \widehat{\theta}_{t,rtv} + \beta_{it}.
\end{equation}
The first term $x_{it,rtv}^\top \widehat{\theta}_{t,rtv}$ is the estimate for utility of the item $i$ for user $t$, and the latter term $ \beta_{it}$ quantifies the uncertainty of the utility. For $\beta_{it}$, it can be calculated as $\beta_{it} = \alpha_t \cdot \|x_{it,rtv}\|_{W_t^{-1}}$, where the selection of $\alpha_t$ is based on the confidence bound of $\widehat{\theta}_{t,rtv}$ established in Lemma \ref{lem:thetabound} of Section \ref{sect:theory} and $W_t = \sum_{s=1}^t \sum_{i\in S_s} x_{is,rtv}x_{is,rtv}^\top$. Then we choose the optimal assortment that maximizes the revenue under the optimistic utilities. In other words, we offer the assortment $S_t = \argmax_{S \in \calS} \tilde{R}_t(S)$, where 
$$
\tilde{R}_t(S) := \sum_{i\in S} r_i \cdot \frac{\exp(z_{it})}{1 + \sum_{j\in S} \exp(z_{it})}.
$$
After obtaining the optimistic utilities, this problem can be solved via the StaticMNL policy introduced for the static assortment problem \citep{rusmevichientong2010dynamic}.

During the construction of confidence bounds for the optimistic reward, the algorithm simultaneously updates the estimate $\widehat{\theta}_{t,rtv}$. Distinct from the rank-constrained MLE in the first stage, in the UCB-based stage, we solve a new estimator for the updated parameter $\theta^*_{rtv}$. Since the parameter dimensionality has reduced from $d_1\times d_2$ in the first stage to $(d_1+d_2)r-r^2$ in this UCB-stage, we can calculate the MLE directly on the reduced space to obtain $\widehat{\theta}_{n,rtv} = \argmin \calL_{n, rtv}(\theta_{rtv})$, where $\calL_{n,rtv}$ is the negative log-likelihood in the reduced space with
\begin{align}
\label{eqn:loss_rtv}
{\calL}_{n,rtv}(\theta_{rtv}) = \frac{1}{n} \sum_{t=1}^n \left[ \sum_{i\in S_t} y_{it} x_{it,rtv}^\top \theta_{rtv} - \log\left( 1+ \sum_{j\in S_t} \exp(x_{it,rtv}^\top \theta_{rtv}) \right) \right].
\end{align}
It is crucial to note that this optimization problem is simpler than the rank-constrained MLE in (\ref{eqn:rank_constrained_MLE}) or (\ref{eqn:Monteiro_Burer_MLE}), since this optimization is unconstrained and many existing convex optimization techniques can be employed to solve (\ref{eqn:loss_rtv}).

\begin{algorithm}
 \caption{ELSA-UCB - Explore Low-dimensional Subspace then Apply UCB}
 \label{alg:ELSA-UCB}
 \begin{algorithmic}[1]
 \Require Assortment capacity $K$, rank of parameter matrix $r$, learning rate $\eta$, initial parameter estimate $\Phi_0$ and exploration length $T_0$.
 \State Observe item feature vectors $p_i$, $i\in [N]$.
 \For{$t = 1, \ldots, T_0$,} 
 \State Observe the current user feature vector $q_t$.
 \State Randomly select a size $K$ assortment $S_t \in \calS$ and observe user choice $\mathbf{y}_t$.
 \EndFor 
 \State Estimate the low-rank matrix $\hat\Phi = \argmax_{rk(\Phi)=r} {\calL}_n(\Phi)$ using Algorithm \ref{alg:fgd}.
 \State Estimate the subspace using SVD $\hat\Phi = \hat{U}\hat{D}\hat{V}^\top$.
 \State Initialize $\widehat{\theta}_{t,rtv} \in \bbR^k$ as the truncated vectorization of $\hat{D}$ as in (\ref{eqn:rtv}).
 \State Rotate the item features $p_i' = \hat{U}^\top p_i$ for $i\in [N]$.
 \For{$t = T_0+1, \ldots, T$,} 
 \State Observe user feature vector $q_t$
 \State Rotate the user feature $q_t' = \hat{V}^\top q_t$.
 \State Compute $x_{it,rtv}$, the truncated vectorization of $p_i'(q_t')^\top$ as in (\ref{eqn:rtv}).
 \State Compute $z_{it} = x_{it,rtv}^\top \widehat{\theta}_{t,rtv} + \beta_{it}$ as in (\ref{eqn:def_zit}).
 \State Select $S_t = \argmax_{S} \tilde{R}_t(S)$ via StaticMNL and observe choice $\mathbf{y}_t$.
 \State Update the MLE $\widehat{\theta}_{t,rtv}$ by solving (\ref{eqn:loss_rtv}).
 \EndFor
 \end{algorithmic}
\end{algorithm}

\begin{rmk}[Cold-start and Long-tail Issues] 
A common challenge in recommender systems is the \textit{cold-start problem} \citep{schein2002methods, lika2014facing}, where either new users or new items arrive with limited or no interaction history. Our framework addresses these scenarios through its contextual and low-rank structure. Each arriving user is represented by a feature vector, so the model does not depend on repeat arrivals. Similarly, when a new item is introduced, its feature vector $p_i$ can be directly incorporated into the bilinear utility model $p_i^\top \Theta^* q_t$, enabling preference estimation and uncertainty quantification without requiring past interaction data. This ensures robustness in environments with dynamically evolving user and item sets.  

For \textit{long-tail items} that may be chosen infrequently, the bilinear low-rank formulation helps mitigate data sparsity by sharing statistical strength across items and users. This allows utilities to be estimated reliably even when direct observations of a particular item are limited. Nonetheless, in cases where item or user features are partially unobserved, imputations based on historical data or clustering methods can be employed, though such procedures may introduce bias. Additional strategies, such as incorporating richer content-based features or adopting active exploration targeted at under-represented items, may further improve performance, and we view these extensions as promising directions for future research. 
\end{rmk}

\section{Theory} \label{sect:theory}

In this section, we first state and discuss the feasibility of the required assumptions (Section \ref{subsect:assumptions}). Next, we derive the estimation error bound on our rank-constrained maximum likelihood estimation for the LOAD problem (Proposition \ref{prop:lowrankest}), and use this result to establish a confidence bound on the parameter in the reduced space (Lemma \ref{lem:thetabound}). Finally we establish an upper bound on the regret of our ELSA-UCB policy (Theorem \ref{thm:regretbound}).

\subsection{Assumptions} \label{subsect:assumptions}

First, we impose conditions on the distribution of the feature vectors.

\begin{assm}[Distribution of user features] \label{assm:user_feature}
    Each user feature $q_t$ is i.i.d. from an unknown distribution, where $\|q_t\|_2 \le S_q$ for some constant $S_q$ and $\bbE [q_t q_t^\top] = \Sigma_q$ with positive definite $\Sigma_q$. Moreover, $\Sigma_q^{-1/2}q_t$ is $\sigma$-sub-Gaussian and its $\ell_2$ norm bounded by some constant $S_v$.
\end{assm}

The boundness of feature vectors is common in contextual bandit and assortment problems \citep{li2017provably, oh2021multinomial}, and the invertibility of the second moment ensures that the distribution of user features is sufficient to estimate the related parameters. In addition, note that we only require the i.i.d. condition on the user features, not on the joint user and item features as in the existing dynamic assortment literature \citep{oh2021multinomial, chen2020dynamic}.

\begin{assm}[Design of item features] \label{assm:item_feature}
    The item features satisfy $\|p_i\|_2 \le S_p$ for some constant $S_p$ and assume $\Sigma_p := \frac{1}{N} \sum_{i=1}^N p_i p_i^\top$ is positive definite.
\end{assm}

We assume that item features are also bounded with a positive definite second moment. Note that the positive definite condition is needed to guarantee the identifiability of the matrix parameter $\Phi^*$, and can be verified in practice since $p_i$'s are given in advance. Next, we impose a standard assumption in contextual MNL problems \citep{oh2021multinomial}, which is a slight modification of the assumption on the link function in generalized linear contextual bandits \citep{li2017provably}.

\begin{assm} \label{assm:kappa}
    There exists a constant $\kappa > 0$ such that
    $$
    \min_{\|\Phi - \Phi^*\|_F^2 \le 1} p_t(i|S, \Phi) p_t(0|S, \Phi) \ge \kappa,
    $$
    for every $S \in \calS$, $i \in S$ and $t\in [T]$, where $p_t(i|S,\Phi)$ is the probability of selecting item $i$ for user $t$ under the true parameter $\Phi$ and assortment $S$.
\end{assm}

This assumption ensures that the choice probabilities under the MNL model are “well-behaved” around the true parameter $\Phi^*$. 
Intuitively, this assumption prevents degeneracy in the choice probabilities. That is, it avoids situations where one item dominates the entire choice set (i.e., is almost always chosen), or where the no-purchase probability becomes negligible or overwhelming. Such extreme cases would make the model highly sensitive to noise and impede reliable estimation of user preferences. In MNL-bandit models, analogous assumptions have been used in prior work, including \citet{cheung2017thompson}, \citet{chen2020dynamic}, and \citet{oh2021multinomial}.

\subsection{Main Results}

First we establish an estimation error bound on our rank-constrained maximum likelihood estimation for the LOAD problem.

\begin{prop}[Accuracy of Low-Rank Estimation]\label{prop:lowrankest}
    Suppose Assumptions \ref{assm:user_feature}, \ref{assm:item_feature} and \ref{assm:kappa} hold. Further assume we have an initial estimate $\Phi_0\in \bbR^{d_2\times d_1}$ such that $\|\Phi_0 - \Phi^*\|_F \le c_2 \sqrt{\sigma_r}$ for some constant $c_2$, where $\sigma_r$ is $r$-th largest singular value of $\Phi^*$. Let $\hat{\Phi}$ be the rank-constrained estimator using Algorithm \ref{alg:fgd} with initial estimate $\Phi_0$. If the exploration length $T_0$ satisfies 
    $$
    T_0 \ge 4\sigma^2 (C_1\sqrt{d_1d_2} + C_2\sqrt{\log(1/\delta)})^2 + \frac{2B}{\lambda_{\min}(\Sigma)K} , 
    $$
    for some constants $C_1, C_2$ and any $\delta \in (0,1)$, $B>0$, where $\Sigma = \Sigma_q \otimes \Sigma_p$, then
    $$
    \|\hat{\Phi} - \Phi^*\|_F^2 \le \frac{288\sigma_1^2}{\sigma_r \kappa}\left( \frac{3}{4} + \frac{8T_0}{\kappa B} + \frac{2}{K\|\Sigma\|_2} \right) \frac{Kr\log(\frac{d_1+d_2}{\delta})}{B} S_2^4
    $$
    with probability at least $1-\delta$, where $S_2 = \max\{S_p, S_q\}$ is a constant.
\end{prop}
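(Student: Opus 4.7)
The proof of Proposition \ref{prop:lowrankest} will proceed through three main ingredients that are standard in the analysis of low-rank recovery via the Burer-Monteiro factorization, but each must be adapted to the combinatorial MNL choice structure. My plan is to first establish a local restricted strong convexity (RSC) of the negative log-likelihood $\calL_n$ around $\Phi^*$ over the low-rank manifold, next control the statistical noise $\nabla \calL_n(\Phi^*)$ in operator norm, and finally plug these into the convergence theory for factored gradient descent \citep{zheng2016convergence, zhang2023generalized} to translate the basin-of-attraction guarantee into the stated Frobenius error bound.

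For the RSC step, I would start from the Hessian $\nabla^2 \calL_n(\Phi) = \frac{1}{n}\sum_{t} \sum_{i \in S_t} p_t(i) (1-p_t(i))\, (q_t p_i^\top) \otimes (q_t p_i^\top)$-type expression in an appropriate vectorized sense. Assumption \ref{assm:kappa} lower bounds the probability factors by $\kappa$, so it remains to lower bound the feature-side quadratic form. Here the key observation is that the exploration stage uses \emph{random} assortments of size $K$, so conditionally on $q_t$ each item in $S_t$ is a uniformly random selection from $[N]$; averaging over $i \in S_t$ and over $t$ gives an expected second-moment tensor proportional to $K \Sigma_q \otimes \Sigma_p$. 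Applying a matrix Bernstein (or matrix Chernoff) inequality under the bounded-feature Assumptions \ref{assm:user_feature}--\ref{assm:item_feature} shows that, once $T_0 \gtrsim \sigma^2 d_1 d_2$ (absorbing the $\sqrt{\log(1/\delta)}$ term), the empirical second-moment operator concentrates around $K \Sigma_q \otimes \Sigma_p$, and restricting to the tangent space at $\Phi^*$ then yields an RSC constant of order $\kappa K \lambda_{\min}(\Sigma) / S_2^4$. The $B$ in the statement plays the role of a required sample-size threshold for this concentration, which is why it appears as $T_0/B$ in the final bound.

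For the statistical-noise step, writing $\nabla \calL_n(\Phi^*) = \frac{1}{n}\sum_{t}\sum_{i\in S_t \cup \{0\}}(p_t(i) - y_{it}) p_i q_t^\top$, each summand is a bounded, mean-zero matrix-valued martingale increment of operator norm $\lesssim S_p S_q$. A matrix Bernstein / Freedman inequality delivers $\|\nabla \calL_n(\Phi^*)\|_{\mathrm{op}} \lesssim S_2^2 \sqrt{K \log((d_1+d_2)/\delta)/T_0}$ with probability $1-\delta$. Because $\hat\Phi - \Phi^*$ lies in a tangent space of dimension $r(d_1+d_2) - r^2$, the inner product $\langle \nabla \calL_n(\Phi^*), \hat\Phi - \Phi^*\rangle$ can be controlled via this operator-norm bound together with the nuclear-norm bound $\|\hat\Phi - \Phi^*\|_* \le \sqrt{2r}\,\|\hat\Phi - \Phi^*\|_F$, introducing the factor $Kr \log((d_1+d_2)/\delta)/T_0$ that is apparent in the conclusion.

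Finally, combining RSC with the noise bound through the standard convex-analytic one-step argument of Burer-Monteiro theory gives $\|\hat\Phi - \Phi^*\|_F^2$ bounded by (noise)$^2$/(RSC constant) up to the condition-number factor $\sigma_1^2/\sigma_r$ that arises because FGD is run on the factors $U,V$ rather than on $\Phi$ directly: the penalty $\frac{1}{8}\|U^\top U - V^\top V\|_F^2$ keeps the factors balanced, and \citet{zheng2016convergence}-type contraction inequalities convert factor-level error into matrix-level error with a $\sigma_r^{-1}$ loss and a $\sigma_1^2$ gain. Assembling the pieces yields precisely the bound stated in the proposition. I expect the main obstacle to be step one: the standard low-rank RSC results assume i.i.d. sub-Gaussian design, whereas here the rows of the ``design'' depend jointly on a random user $q_t$ and on $K$ items drawn without replacement from $[N]$, so I will need to handle the within-assortment correlation carefully (e.g., via a Hoeffding-style decoupling over $t$ and a conditional Bernstein over $i \in S_t$) and keep explicit track of the $K$ factor that propagates through both the RSC constant and the noise bound.
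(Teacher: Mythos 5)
Your plan follows essentially the same route as the paper's proof: establish restricted strong convexity/smoothness of $\calL_n$ from the MNL Hessian together with Assumption 3 and concentration of the empirical Gram matrix around $T_0 K\,\Sigma_q\otimes\Sigma_p$ (which is exactly where the $d_1d_2$ burn-in and the threshold $B$ enter), bound $\|\nabla\calL_n(\Phi^*)\|_2$ by matrix Bernstein to get the $S_2^2\sqrt{K\log((d_1+d_2)/\delta)/T_0}$ noise level, and feed both into the \citet{zheng2016convergence,zhang2023generalized} factored-gradient-descent contraction to pick up the $\sigma_1^2/\sigma_r$ condition-number factor. The only cosmetic differences are that the paper proves the strong-convexity lower bound over the full space via $\lambda_{\min}(V_n)\ge B$ rather than on the tangent space, and extracts the rank factor $r$ from the constant $C_1\propto r$ in the cited one-step FGD lemma rather than from the nuclear-norm duality $\|\Delta\|_*\le\sqrt{2r}\|\Delta\|_F$; both choices lead to the same bound.
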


\begin{rmk}[Initial Estimate $\Phi_0$]
    Note that we require the initialization error $\|\Phi_0 - \Phi^*\|_F^2$ to be bounded. This assumption, often referred to as the `basin of attraction' is a standard practice in the non-convex low-rank matrix estimation literature \citep{jain2013low, chi2019nonconvex}. It ensures that the iterative algorithm converges to the desired target. Such assumptions are valid in diverse situations, such as when offline data are available, or when the length of the exploration stage is large enough to provide a warm start for the gradient descent algorithm.

    The initial estimator $\Phi_0$ can also be chosen as the unconstrained maximum likelihood estimator (MLE) based on data collected during the exploration phase. Under standard regularity conditions for the MNL model, the unconstrained MLE achieves the convergence rate: $\|\hat\Phi_{MLE} - \Phi^*\|_F = O(\sqrt{d_1d_2/T_0})$ \citep{van2000asymptotic}. Therefore, if the exploration length $T_0$ satisfies $T_0 \gtrsim d_1d_2/\sigma_r^{2}$, the unconstrained MLE will satisfy the initial condition. 
\end{rmk}

Note that the result holds for an arbitrary choice of $B>0$. Increasing $B$ and therefore increasing the length of exploration $T_0$ will naturally decrease the estimation error. In this case, accurate estimation of the subspace will lead to better exploitation. However, increasing $T_0$ will also increase the total cumulative regret. Hence the choice of $T_0$ shows the exploration-exploitation trade-off. In our method, we choose the optimal length of exploration $T_0 = O(\sqrt{T})$ to minimize the cumulative regret. Finally, note that the rate is proportional to the rank $r$, which would have been scaling with $\min\{d_1, d_2\}$ without the low-rank assumption. This illustrates how the low-rank assumption plays a crucial role in reducing the estimation error.

Next we first establish confidence bound on the parameter $\theta^*_{rtv}$.
\begin{lem}[Confidence bound on reduced space] \label{lem:thetabound}
    Suppose Assumptions \ref{assm:user_feature}, \ref{assm:item_feature} and \ref{assm:kappa} hold. Further assume that $\|\Theta^*_{r+1:d_2,r+1:d_1}\|_F^2 \le S_\perp$ for some $S_\perp$. Let $\widehat\theta_{n,rtv}$ be the maximum likelihood estimator of $\theta^*_{rtv}$ using observations from user $t=1, \ldots, n$ by minimizing (\ref{eqn:loss_rtv}). Then,
    \begin{align*}
        \|\widehat\theta_{n,rtv} - \theta^*_{rtv}\|_{W_n}
        &\le \frac{1}{\kappa nK} \left( \sigma \sqrt{2 df \log\left( 1 + \frac{n}{df} \right) + 4\log{n}} + S_2^2 \cdot S_\perp \cdot \sqrt{nK} \right) =: \alpha_n
    \end{align*}
    with probability at least $1-1/n^2$, where $W_n = \frac{1}{nK} \sum_{t=1}^n \sum_{i\in S_t} x_{it,rtv}x_{it,rtv}^\top$ and $df = (d_1+d_2)r - r^2$ is the dimension of the reduced space.
\end{lem}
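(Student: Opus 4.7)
My plan is to combine the MLE's first-order optimality with a mean-value expansion of $\nabla \calL_{n,rtv}$, and then split the gradient at $\theta^*_{rtv}$ into a martingale noise piece and a deterministic bias piece coming from the truncation of $\Theta^*_{22}$. Since $\widehat\theta_{n,rtv}$ minimises the convex loss $\calL_{n,rtv}$, it satisfies $\nabla \calL_{n,rtv}(\widehat\theta_{n,rtv})=0$. Applying the fundamental theorem of calculus to the gradient along the segment from $\theta^*_{rtv}$ to $\widehat\theta_{n,rtv}$ produces an averaged Hessian $\bar H$ with $\bar H(\widehat\theta_{n,rtv}-\theta^*_{rtv}) = -\nabla \calL_{n,rtv}(\theta^*_{rtv})$. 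The explicit MNL Hessian together with Assumption \ref{assm:kappa} applied pointwise along the segment gives $\bar H \succeq (\kappa/n)\,V_n$, where $V_n := \sum_{t=1}^n\sum_{i\in S_t} x_{it,rtv} x_{it,rtv}^\top = nK\,W_n$. A Cauchy--Schwarz step in the $\bar H$-norm then converts this into
\[
\|\widehat\theta_{n,rtv}-\theta^*_{rtv}\|_{W_n} \,\le\, \frac{1}{\kappa\sqrt{nK}}\,\bigl\|n\,\nabla \calL_{n,rtv}(\theta^*_{rtv})\bigr\|_{V_n^{-1}}.
\]

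Next I decompose $n\,\nabla \calL_{n,rtv}(\theta^*_{rtv}) = M_{\mathrm{noise}} + M_{\mathrm{bias}}$, where $M_{\mathrm{noise}} = \sum_{t,\,i\in S_t}(y_{it}-p_t^*(i))\,x_{it,rtv}$ and $M_{\mathrm{bias}} = \sum_{t,\,i\in S_t}(p_t^*(i)-p_t(i\mid\theta^*_{rtv}))\,x_{it,rtv}$, with $p_t^*(i)$ the choice probability under the true $\Phi^*$ and $p_t(i\mid\theta^*_{rtv})$ its reduced-space counterpart. The first term is a vector-valued martingale-difference sum with conditionally mean-zero, bounded (hence sub-Gaussian) increments, so the self-normalized concentration of Abbasi-Yadkori et al.\ applied with effective dimension $df=(d_1+d_2)r-r^2$ and $\delta=1/n^2$ gives $\|M_{\mathrm{noise}}\|_{V_n^{-1}} \le \sigma\sqrt{2\,df\,\log(1+n/df)+4\log n}$ with probability at least $1-1/n^2$. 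The second term is deterministic: softmax Lipschitzness yields $|p_t^*(i)-p_t(i\mid\theta^*_{rtv})|\le |\langle (X_{it})_{22},\Theta^*_{22}\rangle|\le \|(X_{it})_{22}\|_F\,\|\Theta^*_{22}\|_F$, and combining with $\|(X_{it})_{22}\|_F \le S_p S_q \le S_2^2$, the bound $\|\Theta^*_{22}\|_F^2\le S_\perp$, and the Cauchy--Schwarz estimate $\|\sum_{t,i} b_{it}\,x_{it,rtv}\|_{V_n^{-1}}\le \sqrt{\sum_{t,i} b_{it}^2}$ in the $V_n$-weighted inner product produces $\|M_{\mathrm{bias}}\|_{V_n^{-1}} \lesssim S_2^2\,S_\perp\,\sqrt{nK}$. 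Substituting both bounds into the display above delivers the stated radius $\alpha_n$.

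The hard part will be the bias piece $M_{\mathrm{bias}}$. In well-specified MNL-bandit analyses such as \citet{oh2021multinomial}, the gradient at the truth is already centred and only the self-normalized martingale component appears. In the LOAD setting, however, projecting onto the estimated subspaces $\hat U,\hat V$ inherently mis-specifies the reduced-space model, so $\nabla \calL_{n,rtv}(\theta^*_{rtv})$ is systematically off-centre. Cleanly separating the stochastic noise from this approximation bias and scaling each piece correctly in the $V_n^{-1}$-norm is what allows the confidence radius to depend on the reduced dimension $(d_1+d_2)r-r^2$ rather than the ambient $d_1 d_2$, and it is the novel ingredient required for this lemma.
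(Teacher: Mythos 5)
Your proposal is correct and follows essentially the same route as the paper's proof: the paper's map $J_n$ is exactly your gradient decomposition, with $Z_n=J_n(\widehat\theta_{n,rtv})$ playing the role of $M_{\mathrm{noise}}$ (bounded via the self-normalized concentration of Lemma 2 in \citet{oh2021multinomial}, i.e.\ the Abbasi-Yadkori-type bound you cite), $J_n(\theta^*_{rtv})$ playing the role of $M_{\mathrm{bias}}$ (bounded via the Lipschitzness of the MNL probabilities in the utilities and the same projection/Cauchy--Schwarz step), and the $1/\kappa$ factor coming from the same strong-convexity lower bound that you obtain through the averaged Hessian. The two arguments differ only in presentation, not in substance.
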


With this definition of $\alpha_n$, we can calculate the optimistic utility defined in (\ref{eqn:def_zit}). Finally, we establish a non-asymptotic upper bound on the cumulative regret of our ELSA-UCB algorithm.

\begin{thm}[Regret Bound of ELSA-UCB]\label{thm:regretbound}
Suppose assumptions of Proposition \ref{prop:lowrankest} hold. Define
\begin{align*}
T_0 &:= 4\sigma^2 (C_1\sqrt{d_1d_2} + C_2\sqrt{\log{T}})^2 + \frac{2B}{\lambda_{\min}(\Sigma)K}, \\
S_{\perp} &:= \frac{288\sigma_1^3}{\sigma_r^3 \cdot \kappa} \cdot \left( \frac{3}{4} + \frac{8T_0}{\kappa B} + \frac{2}{K\|\Sigma\|_2} \right) \frac{Kr\log(\frac{d_1+d_2}{\delta})}{B} S_2^4, \\
\alpha_n &:= \frac{1}{\kappa} \left( \sigma \sqrt{2 df \log\left( 1 + \frac{n}{df} \right) + 4\log{n}} + S_2^2 \cdot S_\perp \cdot \sqrt{nK} \right),
\end{align*}
for any $B>0$, where $df = (d_1+d_2)r - r^2$ is the degree of freedom on the reduced space. Then applying Algorithm \ref{alg:ELSA-UCB} with $T_0$ and $\alpha_n$, the cumulative regret of our ELSA-UCB policy satisfies
$$
{\calR}_T 
\le R_{\max} \left( T_0 + 3 + 2 \alpha_T \sqrt{T \cdot 2\cdot df \log(T/df)} + 2S_2^2 S_\perp(T-T_0) \right),
$$
where $R_{\max} = \max_{i\in [N]} r_i$.
\end{thm}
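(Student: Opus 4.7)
The plan is to split $\calR_T$ at the exploration cutoff $T_0$: the first $T_0$ steps trivially contribute at most $R_{\max} T_0$, and the rest of the argument treats the UCB stage through an optimism-based analysis built on Lemma \ref{lem:thetabound}. I would define the good event $\calG := \{\|\widehat\theta_{t,rtv} - \theta^*_{rtv}\|_{W_t} \le \alpha_t \text{ for all } t > T_0\}$; by Lemma \ref{lem:thetabound} and a union bound, $\bbP(\calG^c) \le \sum_t t^{-2} \le \pi^2/6$, so the complement contributes at most $3 R_{\max}$ in expectation, accounting for the constant $3$ in the final bound.

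Working on $\calG$, the per-step regret analysis has three ingredients. First, for the optimism direction I would bound $|\langle x_{it,rtv}, \widehat\theta_{t,rtv} - \theta^*_{rtv}\rangle| \le \alpha_t \|x_{it,rtv}\|_{W_t^{-1}} = \beta_{it}$ by Cauchy-Schwarz, and separately control the truncation bias $|\langle (X_{it})_{22}, \Theta^*_{22}\rangle| \le S_2^2 S_\perp$, where the bound on $\|\Theta^*_{22}\|_F$ follows by writing $\Theta^*_{22} = \hat U_2^\top U_1^* D_{11}^* (V_1^*)^\top \hat V_2$ and applying Wedin's sin-$\Theta$ theorem together with the Frobenius-norm bound on $\hat\Phi - \Phi^*$ from Proposition \ref{prop:lowrankest}. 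Combined with monotonicity of the MNL revenue in individual utilities (all items have non-negative revenue), this yields $\tilde R_t(S_t^*) \ge R_t(S_t^*, \Phi^*) - R_{\max} K S_2^2 S_\perp$, and then the greedy rule $\tilde R_t(S_t) \ge \tilde R_t(S_t^*)$ gives the per-step regret bound $\tilde R_t(S_t) - R_t(S_t, \Phi^*) + R_{\max} K S_2^2 S_\perp$. Second, by a standard Lipschitz-in-utility lemma for MNL revenue, $|\tilde R_t(S_t) - R_t(S_t, \Phi^*)| \le R_{\max} \sum_{i \in S_t}(\beta_{it} + S_2^2 S_\perp)$. Third, Cauchy-Schwarz followed by the elliptical potential lemma on the reduced-space Gram matrix $W_t$ gives $\sum_{t > T_0}\sum_{i\in S_t} \beta_{it} \le \alpha_T \sqrt{T \cdot 2 df \log(T/df)}$, producing the $2 R_{\max} \alpha_T \sqrt{T \cdot 2 df \log(T/df)}$ term, while the non-shrinking deterministic bias sums to the $2 R_{\max} S_2^2 S_\perp (T - T_0)$ term. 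Summing all contributions yields the stated bound.

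The main obstacle is rigorously handling the deterministic truncation bias from the discarded block $\Theta^*_{22}$: Lemma \ref{lem:thetabound} controls only the reduced-space estimation error, so the standard UCB optimism argument must be augmented with a bias term that does not shrink with $t$ and can in principle corrupt the inequality $\tilde R_t(S_t^*) \ge R_t(S_t^*, \Phi^*)$. I would resolve this by piping the Frobenius-norm bound on $\|\hat\Phi - \Phi^*\|_F^2$ from Proposition \ref{prop:lowrankest} through Wedin's theorem to control the principal-angle quantities $\|\hat U_2^\top U_1^*\|$ and $\|\hat V_2^\top V_1^*\|$ each by $\|\hat\Phi - \Phi^*\|_F / \sigma_r$, producing the $\sigma_1^3/\sigma_r^3$ factor appearing in the definition of $S_\perp$ and the extra $R_{\max} K S_2^2 S_\perp$ per-step cost in the regret. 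Secondary technicalities include verifying boundedness of $\|x_{it,rtv}\|_2$ needed for the elliptical potential lemma, which follows from the isometry of the rotations by $\hat U, \hat V$ together with Assumptions \ref{assm:user_feature}--\ref{assm:item_feature}, and checking that the per-round summation over $i \in S_t$ is compatible with the definition of $W_t$ aggregated on the product index set.
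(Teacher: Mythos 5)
Your overall architecture matches the paper's: split at $T_0$, condition on a good event for the reduced-space estimator, use optimism plus a Lipschitz-in-utility bound, and finish with Cauchy--Schwarz and the elliptical potential lemma. But there are two concrete problems. The main one is your use of a \emph{sum} over $i\in S_t$ in the Lipschitz step, $|\tilde R_t(S_t)-R_t(S_t,\Phi^*)|\le R_{\max}\sum_{i\in S_t}(\beta_{it}+S_2^2S_\perp)$, together with the per-step optimism slack $R_{\max}KS_2^2S_\perp$. The lemma the paper relies on (Lemma 5 of \citet{oh2021multinomial}, restated as Lemma \ref{lem:optimistic_assortment}) gives $R(\{u_i\},S)-R(\{u_i'\},S)\le R\cdot\max_{i\in S}|u_i-u_i'|$ — a max, not a sum. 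With your sum you pick up a factor of $K$ in the bias term (yielding $2R_{\max}KS_2^2S_\perp(T-T_0)$ rather than the stated $2R_{\max}S_2^2S_\perp(T-T_0)$), and your claimed inequality $\sum_{t>T_0}\sum_{i\in S_t}\beta_{it}\le\alpha_T\sqrt{T\cdot 2\,df\log(T/df)}$ does not follow from the elliptical potential lemma as stated, which controls $\sum_t\max_{i\in S_t}\|x_{it,rtv}\|_{W_t^{-1}}^2$; a double sum would cost at least an extra $\sqrt{K}$. The clean fix, which is what the paper does in Lemma \ref{lem:zit_bound}, is to fold the truncation bias into the bonus, $\beta_{it}=\alpha_t\|x_{it,rtv}\|_{W_t^{-1}}+S_2^2S_\perp$, so that $0\le z_{it}-v_{it}\le 2\beta_{it}$ holds exactly and optimism incurs no separate $K$-dependent slack; the per-step regret is then $R_{\max}\max_{i\in S_t}(z_{it}-v_{it})$ and the stated constants come out.

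The second, smaller issue is bookkeeping of the failure probabilities. You attribute the additive constant $3$ entirely to $\sum_t t^{-2}$, but the subspace bound $\|\Theta^*_{22}\|_F^2\le S_\perp$ from Proposition \ref{prop:lowranksubspace} itself holds only with probability $1-\delta$; the paper takes $\delta=T^{-1}$ there, so the event $\calE$ on which the truncation bias is controlled fails with probability at most $T^{-1}$, contributing $R_{\max}$, with the remaining $2R_{\max}$ coming from $\sum_{t\ge 1}t^{-2}\le 2$. Your decomposition silently treats the Wedin/Proposition \ref{prop:lowrankest} step as deterministic, which leaves the expectation over $\calE^c$ unaccounted for. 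Both issues are repairable without changing your strategy, but as written the proposal proves a $K$-inflated bound rather than the theorem as stated.
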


Theorem \ref{thm:regretbound} provides a detailed regret bound of our policy. When applying our policy to e-commerce applications with a large number of users (large $T$), we can further simply this rate. With the choice $B = b\sqrt{T}$ for some constant $b$, we have $S_\perp = O(r/\sqrt{T})$ by definition and therefore the sub-linear regret $\calR_T = \tilde{O}(r(d_1+d_2)\sqrt{T})$, where $\tilde{O}(\cdot)$ ignores some logarithm term, which implies that the mean regret converges to $0$ as $T\to \infty$. For $d$-dimensional contextual assortment problem, \citet{chen2020dynamic} established an $\Omega(d\sqrt{T}/K)$ lower bound on the cumulative regret. As the capacity $K$ is the maximum number of items that are displayed each time, we can interpret $K$ as a constant, and consequently, the lower bound becomes $\Omega(d\sqrt{T})$. When these results are applied by interpreting vectorization of the outer product of dual features as a single feature, the lower bound becomes $\Omega(d_1d_2\sqrt{T})$, without the low-rank assumption. Our algorithm significantly reduces the rate of regret to $\tilde{O}(df \cdot \sqrt{T})$ in the low-rank regime.

Establishing this regret bound possesses two main technical challenges. The first challenge involves incorporating the low-rank structure. To address this, we utilize results from the low-rank matrix estimation literature and adapt them to the LOAD problem. This enables us to establish an estimation bound on the estimated subspace. The second challenge arises from the inherent combinatorial structure of the assortment problem. To overcome this, we construct bound on the difference between the rewards in the original problem and the transformed problem in the reduced space, which enables us to establish regret bounds for our algorithm. We provide a proof sketch in Section S.7.1 of the Appendix.

\section{Numerical Studies}

In this section, we evaluate the cumulative regret  of our ELSA-GIC policy across varying scenarios, including different numbers of users, feature dimensions, ranks of parameter matrix, numbers of items and capacity of the assortment. We begin by  examining it using synthetic data in Section \ref{subsect:simulation} and subsequently validate its practicability in a real-world dataset in Section \ref{subsect:realdata}.  

\subsection{Simulation}\label{subsect:simulation}

To generate the parameter matrix $\Phi^*$, we first construct a diagonal matrix $D^*$ with $r$ non-zero components, each set to be 10. Next we multiply random orthonormal matrices $U^*\in \bbR^{d_2\times r}$ and $V^*\in \bbR^{d_1\times r}$ to generate a random rank-$r$ matrix $\Phi^* = U^*D^*(V^*)^\top \in \bbR^{d_2\times d_1}$. 
For the user features, we generate $q_t \in \bbR^{d_1}$ for every $t \in [T]$ from the $d_1$-dimensional standard Gaussian distribution. Similarly, for item features, we generate $p_i \in \bbR^{d_2}$ from the $d_2$-dimensional standard Gaussian distribution for every $i \in [N]$. Note that the $N$ item features are fixed over the time horizon while user features are generated at each time step. The revenue is set as $r_i = 1$ for every $i \in [N]$, following the setup in \citet{oh2021multinomial}. This accounts for scenarios where the goal is to maximize the sum of binary outcomes, aiming to maximize the total number of conversions, such as clicks or purchases in online retail.

Utilizing the generated parameters and features, we simulate user choices and the corresponding assortment selections following the policy. We compare the performance of our proposed ELSA-UCB with the estimated rank from GIC (denoted ELSA-GIC) against two versions of the UCB-MNL \citep{agrawal2019mnl, oh2021multinomial}. As UCB-MNL considers features for each user-item pair as $x_{it}$, we consider two formulations of the joint features; ``stacked'' and ``vectorized''.

\begin{itemize}
    \item \textbf{Stacked UCB-MNL}: As shown in Figure \ref{fig:StackUCBMNL_v}, we stack the user features and the item features together including intercept as $x_{it} = (1, p_i, q_t) = (1, p_{i1}, p_{i2}, \ldots, p_{id_2}, q_{t1}, \ldots q_{td_1}) \in {\bbR}^{d_1+d_2+1}$,
    which allows the linear formulation of the utilities to take into account the main effects of both features with parameter $\theta = (\alpha, \beta_1, \ldots, \beta_{d_2}, \gamma_1, \ldots, \gamma_{d_1}) \in \bbR^{d_1 + d_2 + 1}$ :
    $$
    v_{it} = \theta^\top x_{it} = \alpha + \sum_{j=1}^{d_2} p_{ij} \beta_j + \sum_{k=1}^{d_1}q_{tk} \gamma_k. 
    $$

    \item \textbf{Vectorized UCB-MNL}: As shown in Figure \ref{fig:VectUCBMNL_v}, we consider the vectorization of the outer product of the dual features with intercept, $p_i' = (1, p_{i1}, \ldots p_{id_2})$ and $q_t' = (1, q_{t1}, \ldots, q_{td_1})$: 
    \begin{align*}
    X_{it} &= p_i' \otimes q_t' = \begin{pmatrix}
        1 & q_{t1} & \ldots & q_{td_1} \\
        p_{i1} & p_{i1} q_{t1} & \ldots & p_{i1} q_{td_1} \\
        \vdots & \vdots & \ddots & \vdots \\
        p_{id_2} & p_{id_2} q_{t1} & \ldots & p_{id_2} q_{td_1}
    \end{pmatrix} \in {\bbR}^{(d_2+1)\times (d_1+1)},\\
    x_{it} &= \vect(X_{it}) = (1, p_{i1}, \ldots, p_{id_2} q_{td_1}) \in {\bbR}^{(d_2+1) (d_1+1)},
    \end{align*}
    as the joint feature for the user-item pair. This takes into account for the main effects and the interaction terms of the dual features with parameter $\theta = (\alpha, \ldots, \delta_{11}, \ldots, \delta_{d_2d_1}) \in \bbR^{(d_2+1)(d_1+1)}$, which leads to the following formulation of utilities:
    $$
    v_{it} = \theta^\top x_{it} = \alpha + \sum_{j=1}^{d_2} p_{ij} \beta_j + \sum_{k=1}^{d_1}q_{tk} \gamma_k + \sum_{j=1}^{d_2}\sum_{k=1}^{d_1} \delta_{jk} p_{ij} q_{tk}. 
    $$
\end{itemize}
Note that the stacked formulation can be viewed as a special case of the vectorized formulation, when there is no interaction. In this case, the parameter matrix contains non-zero elements only on the first row and the first column, which restricts the rank of the matrix to be at most $2$.

We apply our proposed ELSA-GIC policy and the two variations of UCB-MNL as baselines for comparison. We repeat the simulation 50 times and plot the mean and 95\% confidence interval of the cumulative regret for each $T\in \{1000, 2000, 3000, 5000, 7500, 10000\}$ under different environments.

First we demonstrate the consistency of rank-estimation through GIC (\ref{eqn:GIC}). We fix $d_1 = 30, d_2 = 20, N = 20, K = 3, r = 3$ and change the number of users over $T = 100, 200, 500, 1000, 2000$ to compare the estimated rank when $T$ increases. As shown in Figure \ref{fig:rankest}, the probability of accurate rank estimation nearly converges to $1$ as the sample size increases.
\begin{figure}[h]
    \centering
    \includegraphics[width = 0.5 \textwidth]{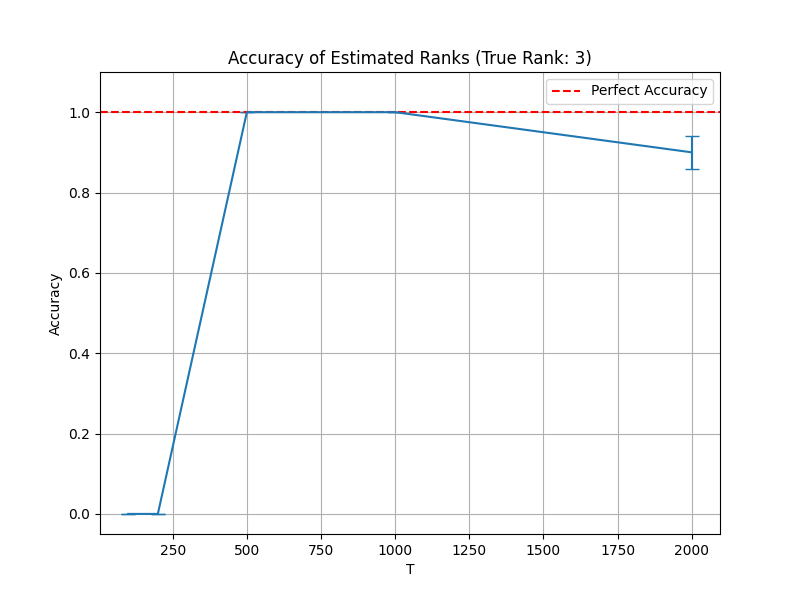}
    \caption{Accuracy of estimated rank using GIC with increasing $T$.}
    \label{fig:rankest}
\end{figure}

\begin{figure}[h]
    \centering
    \includegraphics[width = 1.0 \textwidth]{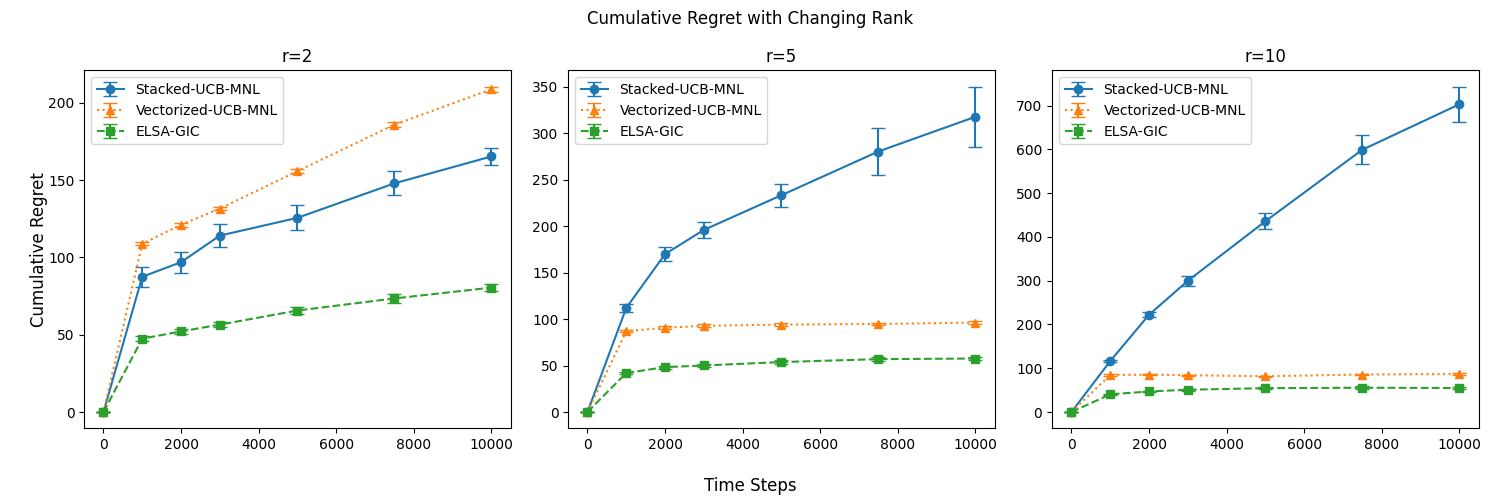}
    \caption{Cumulative regrets of our proposed ELSA-GIC algorithm compared with Stacked UCB-MNL and Vectorized UCB-MNL for different ranks $r\in \{2,5,10\}$.}
    \label{fig:change_r}
\end{figure}

For the first line of experiments, to assess the performance in terms of cumulative regret, we fix $d_1 = 50, d_2 = 20, N = 20, K = 3$ and change the rank $r \in \{2,5,10\}$ and number of users over $T =1000, 2000, 3000, 5000, 7500, 10000$ and plot the corresponding cumulative regret in Figure \ref{fig:change_r}. In higher ranks, the Stacked UCB-MNL produces larger cumulative regret, as it does not take consideration for the diverse interactions between the dual features. On the other hand, the Vectorized UCB-MNL has comparable performance when the rank is high. However, when the rank is low, the performance of Vectorized UCB-MNL is not satisfactory. Our ELSA-GIC policy universally outperforms both methods as it utilizes the low-rank structure and therefore efficiently estimate how the interactions between the dual features affect user preference of items. We also note that the Stacked method corresponds to a special case with $r=2$, which is only correctly specified when the true parameter matrix has nonzero entries solely in the main-effect terms, the first row and the first column. In our simulations, however, the parameter matrix includes nonzero interaction terms, so the Stacked method is mis-specified even when the true rank equals $2$. Our method, in contrast, remains correctly specified in this case and achieves superior performance.

Note that the cumulative regrets across different settings are not directly comparable. For example, in Figure \ref{fig:change_r}, we observe a decrease in cumulative regrets for our method as the rank increases. This phenomenon occurs as we fix the scale for the singular values of $\Phi^*$, the intrinsic value $v_{it}$ increases, making it be easier to distinguish the optimal assortments. However, if we decrease the scale of the singular values as rank increases, the scale of the intrinsic value might remain the same, but the parameter estimation becomes challenging, potentially leading to performance degradation. Therefore, it would be fair to compare cumulative regret only within the same settings and compare the relative gap between different settings.

\begin{figure}[h]
    \centering
    \includegraphics[width = 1.0 \textwidth]{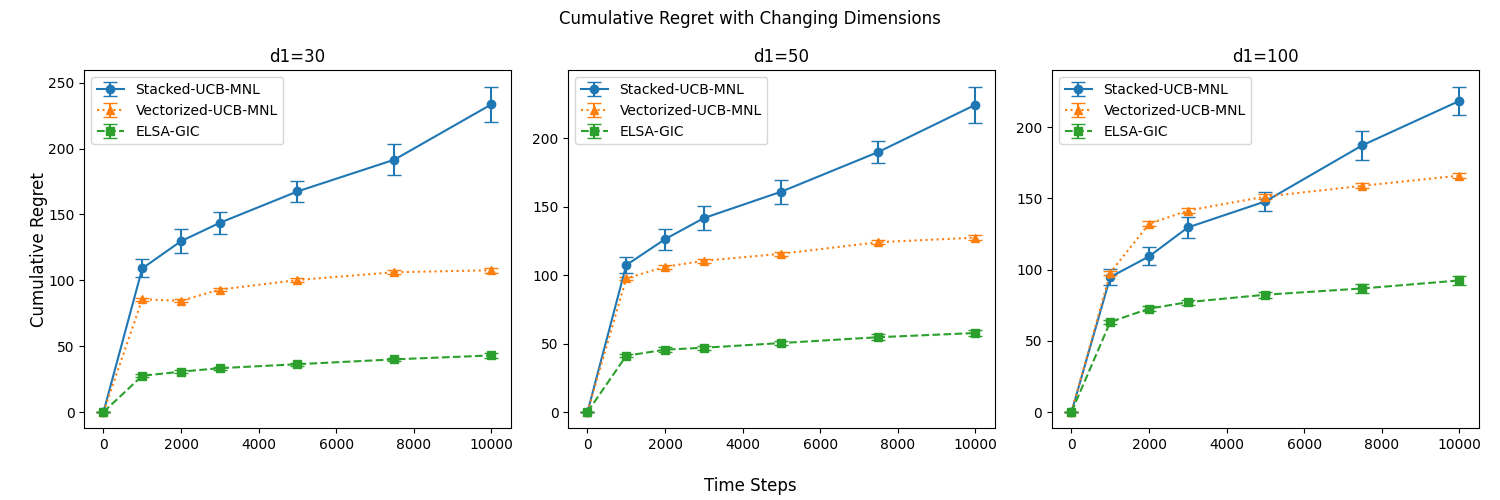}
    \caption{Plots of cumulative regret by $T$ for different dimensions.}
    \label{fig:change_d1}
\end{figure}

In the next set of experiments, we fix $ d_2 = 20, r = 3, N = 20, K = 3$ while changing $d_1 \in \{30, 50, 100\}$. The results are shown in Figure \ref{fig:change_d1}. The Stacked UCB-MNL shows poor performance across all dimensions. Moreover, we can observe that the gap between the cumulative regret of ELSA-GIC and Vectorized UCB-MNL widens as $d_1$ increases. This indicates that our algorithm is even preferable in environments with higher dimensional features. Furthermore, note that the scale of the regret increases as $d_1$ increases which aligns with the results of Theorem \ref{thm:regretbound}.

\begin{figure}[h]
    \centering
    \includegraphics[width = 1.0 \textwidth]{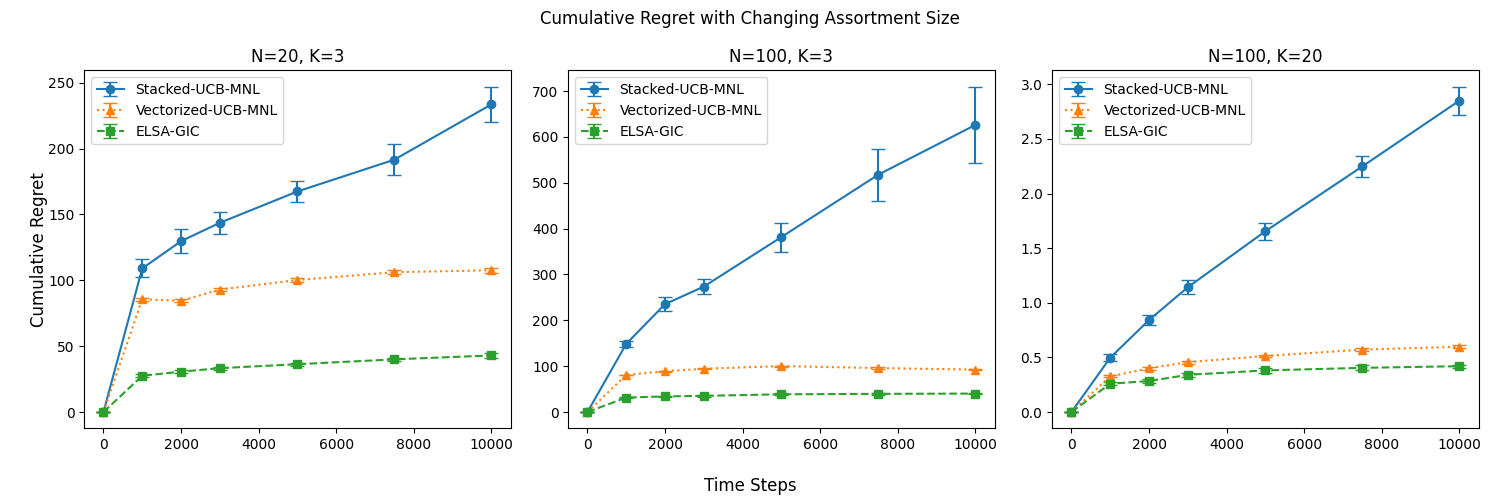}
    \caption{Plots of cumulative regret by $T$ for different numbers of items and capacity.}
    \label{fig:change_N}
\end{figure}

In the third set of experiments, we fix $d_1 = 30, d_2 = 20, r = 3$ and change $(N,K)$ to be $\{(20,3), (100,3), (100,20)\}$. The results are illustrated in Figure \ref{fig:change_N}. Note that as the assortment capacity $K$ increases, the gap between the optimal assortment and the near-optimal assortments decreases, resulting in smaller regrets. Overall, our proposed method outperforms both Stacked UCB-MNL and Vectorized UCB-MNL across all setups.

In the final set of experiments, we evaluate the robustness of ELSA-GIC under model misspecification. We consider three scenarios:
\begin{enumerate}
    \item \textbf{Approximately low-rank $\Phi^*$}: The first $r$ singular values of $\Phi^*$ are fixed at a constant level, while the remaining singular values are set to be small  - specifically, 1\% of the leading singular values - capturing approximate low-rank structure with residual noise.
    \item \textbf{Full-rank $\Phi^*$}: The parameter matrix is drawn as a full-rank matrix without dominant singular directions, favoring the Vectorized UCB-MNL model, which does not assume any low-rank structure.
    \item \textbf{Main-effect-only model}: The utility is generated without any interaction between user and item features and is influenced solely by the main effects. In this setting, the effective rank is 2, and the structure favors the Stacked UCB-MNL model.
\end{enumerate}

We fix $(d_1, d_2, N, K) = (30, 20, 20, 3)$, and use $r=3$ for the approximately low-rank matrix. As shown in Figure \ref{fig:misspecified}, ELSA-GIC achieves the lowest regret in the approximately low-rank case, demonstrating its strength when the low-rank assumption only holds approximately. Moreover, it remains competitive even when the data-generating process favors alternative models, achieving comparable regret to the specialized methods under their respective advantageous regimes.

\begin{figure}[h]
    \centering
    \includegraphics[width = 1.0 \textwidth]{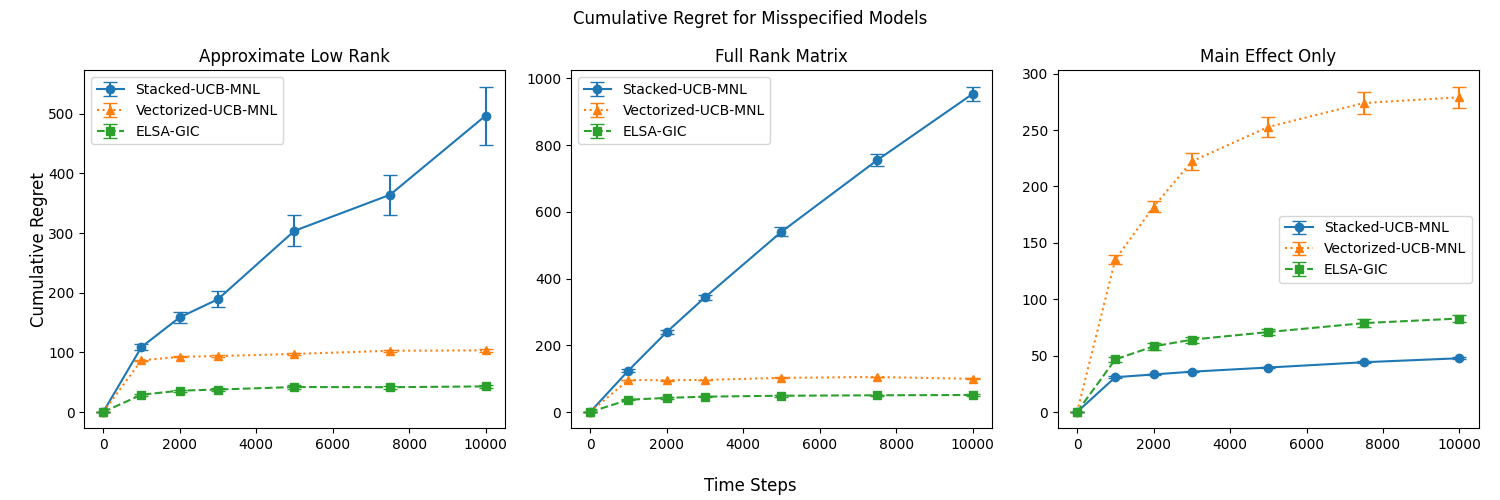}
    \caption{Cumulative regret over time $T$ under various model misspecification settings.}
    \label{fig:misspecified}
\end{figure}

\subsection{Real Data Application} \label{subsect:realdata}

In this section, we apply our algorithm to the ``Expedia Hotel" dataset \citep{expedia-personalized-sort} to evaluate its performance in real data.

\subsubsection{Data Pre-processing}

The dataset consists of 399,344 unique searches on 23,715 search destinations each accompanied by a recommendation of maximum 38 hotels from a pool of 136,886 unique properties. User responses are indicated by clicks or hotel room purchases. The dataset also includes features for each property-user pair, including hotel characteristics such as star ratings and location attractiveness, as well as user attributes such as average hotel star rating and prices from past booking history.

It is important to note that search queries constrain assortments to those within the search destination. We focus on data from the top destination with the highest search volume. Additionally, we filter out columns with missing over $90\%$ of searches. To address missing values in the remaining data for average star ratings and prices of each user from historical data, we employ a simple regression tree imputation. For hotel features, we compute the mean of feature values across searches. We discretize numerical features such as star ratings or review scores into categories to improve model fitting. After pre-processing, we have $T = 4465$ unique searches encompassing $N = 124$ different hotels, with $d_2=10$ hotel features and $d_1=18$ user features. The description of features is given in Table \ref{table:features}. We normalize each feature to have mean $0$ and variance $1$. As for the item features, we truncate the feature values to lie within $[-3,3]$ to avoid the effect of outliers. 

\begin{center}
\begin{table}[]
\begin{tabular}{l|l}
\hline
Item (Hotel) Features  & \begin{tabular}[c]{@{}l@{}}Star rating (4 levels), Average review score (3 levels), Brand \\of property,  Location score, Log historical price, Current price,\\ Promotion\end{tabular}                                                                 \\ \hline
User (Search) Features & \begin{tabular}[c]{@{}l@{}}Length of stay (4 levels), Time until actual stay (4 levels), \\ \# of adults (3 levels), \# of children (3 levels), \# of rooms\\ (2 levels), Inclusion of Saturday night, Website domain, User\\ historical star rating (3 levels), User historical price (4 levels)
\end{tabular} \\ \hline
\end{tabular}
\caption{Table of item and user features.}
\label{table:features}
\end{table}
\end{center}
\vspace{-2em}

\subsubsection{Analysis of the Expedia Dataset}

To evaluate different dynamic assortment selection policies, we first estimate the ground truth parameter $\Phi^*$ using rank-constrained maximum likelihood estimation (\ref{eqn:rank_constrained_MLE}) on the full dataset. 
Consequently, the Generalized Information Criterion (GIC) selects a rank of $r=5$ across a range of candidate ranks. Following the procedure used in our simulation studies, we compare the cumulative regrets of our proposed ELSA-GIC policy with rank estimation against Stacked UCB-MNL and Vectorized UCB-MNL. We evaluate the algorithms at different time horizons $T\in \{2000, 5000, 10000\}$. To assess the performance of the algorithm over an extended time horizon, we resample the users based on the original dataset.

\begin{center}
\begin{table}[h!]
\begin{center}
\begin{tabular}{c|c|c}
\hline
Number of Samples & vs. Stacked-UCB-MNL & vs.  Vect-UCB-MNL  \\ \hline
2000  &  26.3 \% ($\pm$ 1.0 \%) & 2.5 \% ($\pm$ 0.7 \%)   \\ \hline
5000  & 45.8 \% ($\pm$ 1.0 \%) &  3.1 \% ($\pm$ 1.0 \%)  \\ \hline
10000  & 61.8 \% ($\pm$ 0.8 \%)  &  3.5 \% ($\pm$ 1.4 \%) \\ \hline
\end{tabular}
\caption{Performance improvement of our algorithm over baseline methods.}
\label{table:expedia_results}
\end{center}
\end{table}
\end{center}
\vspace{-2em}

As shown in Table \ref{table:expedia_results}, our proposed ELSA-GIC algorithm outperforms both Stacked UCB-MNL and Vectorized UCB-MNL in terms of cumulative regret. Notably, the performance gap between ELSA-GIC and the baseline methods widen as the time horizon increases. Even though the real dataset does not exhibit a strictly low-rank structure relative to its dimensions ($r=5$ vs. $(d_1, d_2) = (18,10)$), our method still achieves superior performance. By avoiding unnecessary full-dimensional estimation and instead concentrating learning within a reduced subspace, ELSA-GIC effectively improves the cumulative reward compared to the baseline methods.

\section*{Acknowledgments}
The authors would like to thank the editor, associate editor, and reviewers for their insightful comments, which have greatly improved the presentation. The research was partially supported by NSF grant SES 2217440.

\newpage

\appendix 

\baselineskip=24pt
\setcounter{page}{1}
\setcounter{equation}{0}
\setcounter{section}{0}
\renewcommand{\thesection}{S.\arabic{section}}
\renewcommand{\theequation}{S\arabic{equation}}

\begin{center}
{\Large\bf Supplementary Materials} \\
\medskip
{\Large\bf ``Low-Rank Online Dynamic Assortment with Dual Contextual Information"}  \\
\bigskip
\end{center}
\bigskip

\noindent
In this supplement, we first discuss and compare additional related works in Section \ref{sect:Appendix_related_Work}. We provide high probability regret bound in \ref{sect:Appendix_hpbound}. Next, we provide sensitivity analysis regarding the hyperparameters of UCB in Section \ref{sect:Appendix_sensitivity}. We also discuss the implementation of UCB-MNL, the baseline algorithm for comparison in \ref{sect:Appendix_UCBMNL}, and provide the detailed rank tuning implementation and the batch extension of our algorithm in Section \ref{sect:AppendixAlgorithm}. A few interesting future directions are provided in Section \ref{sect:Appendix_future_work}. Finally, we include detailed proofs of the theorems and lemmas in Section \ref{sect:AppendixProof}. 

\section{Additional Related Work} \label{sect:Appendix_related_Work}

In this section, we introduce and discuss the distinction with related works in the literature on low-rank bandits, bandits with sparsity assumption and recommendation systems.

\subsection{Comparison with Low-rank Bandits}

Here, we highlight the distinction between our work and existing literature in bandit problems.

\begin{enumerate}
    \item \textbf{Theoretical distinction from bilinear bandits}:
    We now clearly articulate the key differences between our setting and the bilinear bandit literature (e.g., \citet{jun2019bilinear}) in both theory and methodology. In particular, we highlight how the MNL feedback model, the combinatorial action space of assortments, and the non-linear reward structure lead to substantial differences in problem formulation, analysis, and algorithm design. These distinctions are now discussed in detail in our response to your next comment.
    \item \textbf{Algorithmic structure and adaptation}:
    While the factored gradient descent (FGD) algorithm we adopt for subspace estimation is indeed rooted in the low-rank matrix literature, our analysis tailors it to the categorical MNL feedback setting and proves convergence properties specific to this structure. Furthermore, our ELSA-UCB framework departs from existing two-stage bandits by incorporating a confidence-adjusted refinement over a learned low-rank subspace with categorical choice-based feedback. This leads to a new regret decomposition and necessitates novel bounding techniques.
\end{enumerate}

\subsection{Comparison with Sparsity Assumption}
Both low-rank and sparse structures are widely used to address high dimensionality in statistical learning. Prior works such as \citet{udell2019big}, \citet{koren2009matrix}, and \citet{bell2007lessons} support the presence of low-rank structures in high-dimensional settings, particularly in recommendation systems. On the other hand, sparsity assumptions have also been effectively employed in contextual dynamic assortment optimization, as seen in \citet{wang2019online} and \citet{shao2022sparse}, where LASSO-based methods are used to reduce sample complexity. In this work, we focus on the low-rank structure for two main reasons:

\begin{itemize}
    \item \textbf{Matrix structure and bilinear modeling}: 
    A sparsity assumption on the parameter matrix $\Phi^*$ is equivalent to a sparsity assumption on the vectorized parameter matrix $\vect(\Phi^*)$. This facilitates the use of standard sparse bandit techniques. In contrast, the low-rank assumption explicitly leverages the matrix structure of $\Phi^*$. This structural difference motivates the development of new theory and algorithms tailored to low-rank structure, as undertaken in our work.
    \item \textbf{Interpretability through latent subspaces}:
    The low-rank assumption enables learning of intrinsic latent representations of both user and item features. In particular, it reveals low-dimensional subspaces that drive user–item interactions, offering a more interpretable and compact model of preference formation. In contrast, sparsity implies that only a small number of feature pairs affect preference, without identifying coherent subspace structure. This difference is particularly important in dual-context environments, where subspace learning can enhance both prediction and generalization.
\end{itemize}

\subsection{Recommendation Systems}
Recommendation system is a field with extensive research closely related to the dynamic assortment selection problem. These systems collect user preferences across a range of items and forecast these preferences. Multiple methodologies were developed for recommendation systems, including content-based filtering, demographic filtering, and collaborative filtering methods \citep{bobadilla2013recommender, dai2019smooth, dai2021scalable}. Content-based filtering tailors personalized recommendations by drawing insight from user's history, suggesting similar items based on their previous choices \citep{van2000using}. In contrast, collaborative filtering observes user ratings on items and then offers recommendations for each user by drawing data from similar users. To mitigate issues arising from the sparsity of observations, dimension reduction techniques including low-rank matrix factorization \citep{sarwar2000application} and covariate-assisted matrix completion \citep{robin2018low, mao2019matrix, ibriga2023covariate} have been applied to learn user preferences. However, in the dynamic assortment problem, the decision maker can only observe binary outcomes of user choices, which are limited to at most one selection per user, without any quantitative observations such as user ratings. This line of literature motivates us to investigate the dynamic assortment problem, involving user-item interactions to model user preferences toward items.

\section{High Probability Bound of Regret}\label{sect:Appendix_hpbound}
Here we provide results involving high probability bounds rather than bounds on the expectation of the regret in Corollary \ref{cor:regretbound_hp}.
\begin{cor}[High Probability Bound of Regret]\label{cor:regretbound_hp}
Suppose assumptions of Proposition 1 hold. Define
\begin{align*}
T_0 &:= 4\sigma^2 (C_1\sqrt{d_1d_2} + C_2\sqrt{\log{T}})^2 + \frac{2B}{\lambda_{\min}(\Sigma)K}, \\
S_{\perp} &:= \frac{288\sigma_1^3}{\sigma_r^3 \cdot \kappa} \cdot \left( \frac{3}{4} + \frac{8T_0}{\kappa B} + \frac{2}{K\|\Sigma\|_2} \right) \frac{Kr\log(\frac{d_1+d_2}{\delta})}{B} S_2^4, \\
\alpha_n &:= \frac{1}{\kappa} \left( \sigma \sqrt{2 df \log\left( 1 + \frac{n}{df} \right) + 4\log{n}} + S_2^2 \cdot S_\perp \cdot \sqrt{nK} \right),
\end{align*}
for any $B>0$, where $df = (d_1+d_2)r - r^2$ is the degree of freedom on the reduced space. Then by applying Algorithm 2 with $T_0 > 1/\delta$ and $\alpha_n$, the cumulative regret of our ELSA-UCB policy satisfies
$$
\sum_{t=1}^{T} r_t
\le R_{\max} \left( T_0 + 2 \alpha_T \sqrt{T \cdot 2\cdot df \log(T/df)} + 2S_2^2 S_\perp(T-T_0) \right),
$$
with probability at least $1-2\delta$, where $R_{\max} = \max_{i\in [N]} r_i$.
\end{cor}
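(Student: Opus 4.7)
The plan is to mirror the proof of Theorem \ref{thm:regretbound} verbatim on a high-probability good event, and to replace the ``+3'' constant (which in the expected-regret bound absorbs the tail contribution $R_{\max}\sum_n n^{-2}$) with a clean union-bound argument made possible by the new hypothesis $T_0 > 1/\delta$. I would split the argument into three steps.

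Step 1: construct the good event. Let $E_1$ denote the event of Proposition \ref{prop:lowrankest}, which holds with probability at least $1-\delta$ and in particular yields $\|\Theta^*_{r+1:d_2,\,r+1:d_1}\|_F^2 \le S_\perp$ for the $S_\perp$ defined in the corollary (using the singular-value calculation $\Theta^*_{22} = (\hat U_2^\top U_1^*) D_{11}^* (\hat V_2^\top V_1^*)^\top$ together with $\sin\Theta$-type inequalities). Let $E_2$ denote the event that the conclusion of Lemma \ref{lem:thetabound} holds simultaneously for every $n \in \{T_0+1,\ldots,T\}$. Since Lemma \ref{lem:thetabound} fails at round $n$ with probability at most $1/n^2$, a union bound gives
\[
\Pr(E_2^c) \;\le\; \sum_{n=T_0+1}^{T} \frac{1}{n^2} \;\le\; \frac{1}{T_0} \;<\; \delta,
\]
where the final strict inequality uses $T_0 > 1/\delta$. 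Hence $\Pr(E_1 \cap E_2) \ge 1 - 2\delta$, which matches the claimed $1-2\delta$ guarantee.

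Step 2: bound the per-round regret on $E_1 \cap E_2$. On $E_1$, the truncated block satisfies the $S_\perp$ bound, so $v_{it}$ and its reduced-space surrogate $\langle x_{it,rtv}, \theta^*_{rtv}\rangle$ differ by at most $S_2^2 S_\perp$. On $E_2$, the confidence inequality of Lemma \ref{lem:thetabound} holds at time $t$, so the optimistic utility $z_{it}$ sandwiches the true $v_{it}$ up to an error $\alpha_t \|x_{it,rtv}\|_{W_t^{-1}} + S_2^2 S_\perp$. Inserting $\tilde R_t(S_t)$ and using the greedy choice $\tilde R_t(S_t) \ge \tilde R_t(S_t^*)$, the instantaneous regret decomposes exactly as in the proof of Theorem \ref{thm:regretbound} into
\[
r_t \;\le\; R_{\max} \Bigl( 2\alpha_t \max_{i \in S_t} \|x_{it,rtv}\|_{W_t^{-1}} \;+\; 2 S_2^2 S_\perp \Bigr),
\]
via the standard MNL reward-sensitivity step (mean-value expansion of the bilinear MNL reward map along the segment joining $\theta^*_{rtv}$ and the optimistic parameter).

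Step 3: sum over $t$. The first $T_0$ rounds contribute at most $R_{\max} T_0$. For $t > T_0$, monotonicity of $\alpha_n$ in $n$ allows replacing $\alpha_t$ by $\alpha_T$, Cauchy--Schwarz gives $\sum_t \|x_{i_tt,rtv}\|_{W_t^{-1}} \le \sqrt{T \sum_t \|x_{i_tt,rtv}\|_{W_t^{-1}}^2}$, and the elliptic-potential lemma bounds the inner sum by $2\, df \log(T/df)$. The $S_\perp$ bias contributes $2 S_2^2 S_\perp (T-T_0)$. Combining yields the stated inequality on $E_1 \cap E_2$.

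The only genuinely new ingredient relative to Theorem \ref{thm:regretbound} is the union-bound argument in Step 1: the assumption $T_0 > 1/\delta$ is precisely what is needed so that $\sum_{n>T_0} n^{-2} \le 1/T_0 \le \delta$, converting the tail contribution that appears as the constant ``$+3$'' in the expectation bound into a clean failure-probability budget. The main technical load sits in Step 2, but that machinery has already been developed for Theorem \ref{thm:regretbound} and is inherited here essentially unchanged; no additional work on the MNL non-linearity or on the truncation bias is required.
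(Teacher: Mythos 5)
Your proposal is correct and follows essentially the same route as the paper's own proof: the paper likewise union-bounds $\bbP(\cup_{t>T_0}\calE_t^c)\le\sum_{t>T_0}t^{-2}\le T_0^{-1}\le\delta$, intersects with the subspace-estimation event of probability $\ge 1-\delta$, and reuses the per-round decomposition and elliptic-potential summation from Theorem \ref{thm:regretbound} verbatim. Your observation that the hypothesis $T_0>1/\delta$ exists precisely to convert the ``$+3$'' tail term of the expectation bound into failure probability is exactly the content of the paper's argument.
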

The proof is provided in Appendix \ref{sect:AppendixProof}.

\section{Sensitivity Analysis of Hyperparameters}
\label{sect:Appendix_sensitivity}

\subsection{Confidence Bound Related Hyperparameters}
In this section, we provide a detailed description of the tuning parameter choices for the ELSA-UCB algorithm as well as implementation details for the Stacked UCB-MNL and Vectorized UCB-MNL baselines. The confidence radius in ELSA-UCB is based on Lemma 1 and is defined as follows:
$$
\|\widehat\theta_{n,rtv} - \theta^*_{rtv}\|_{W_n}
\le \frac{1}{\kappa nK} \left( \sigma \sqrt{2 df \log\left( 1 + \frac{n}{df} \right) + 4\log{n}} + S_2^2 \cdot S_\perp \cdot \sqrt{nK} \right) =: \alpha_n.
$$

To simplify tuning, we reparametrize $\alpha_n$ as:
$$
\alpha_n = \alpha\sqrt{2df \log(1+n/df) + 4\log{n}} + \beta\sqrt{nK},
$$
where $\alpha, \beta$ are tuning parameters. The first term mirrors the confidence radius used in UCB-MNL from \citet{oh2021multinomial}, and we use the same value across different methods. The second term accounts for the approximation error due to truncation. We perform a grid search over different values of $(\alpha, \beta)$ with $d_1 = 50, d_2 = 20, r=3$ to assess sensitivity and find the algorithm to be robust across a range of parameters. First we fix $\beta = 1$ and compare the regrets over $\alpha \in \{2, 5, 10\}$, as shown in the left plot of Figure \ref{fig:tuning}. Next, we compare the regrets over $\beta \in \{0.5, 1.0, 2.0\}$, as shown in the right plot of Figure \ref{fig:tuning}. We can verify that the performance of our algorithm is robust over the selection of hyperparameters. The final values $(\alpha, \beta) = (5,1)$ are used in all reported simulations. 

\begin{figure}[h!]
    \centering
    \includegraphics[width = 0.7\textwidth]{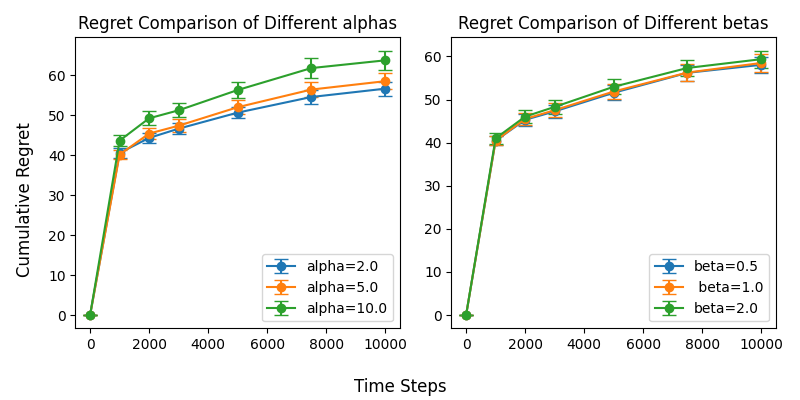}
    \caption{Cumulative regrets for different $\alpha,\beta$.}
    \vspace{-1em}
    \label{fig:tuning}
\end{figure}

\subsection{Exploration Length Related Hyperparameters}

As discussed in Theorem 1, the choice of the initial exploration length $T_0 = c d_1 d_2 + \sqrt{T}$ balances the trade-off between obtaining a reliable estimate of the low-rank structure and minimizing regret accumulated during random exploration. In our simulations, we fix $c = 0.2$. To examine sensitivity to this parameter, we conducted simulations varying the constant $c \in \{0.1, 0.2, 0.5\}$.

The results, shown in Figure~\ref{fig:change_c1}, demonstrate that our method achieves consistently low regret across a range of values of $c$. This confirms that the theoretical guideline provides a robust and practical choice for $T_0$.  

\begin{figure}[H]
    \centering
    \includegraphics[width=0.9\textwidth]{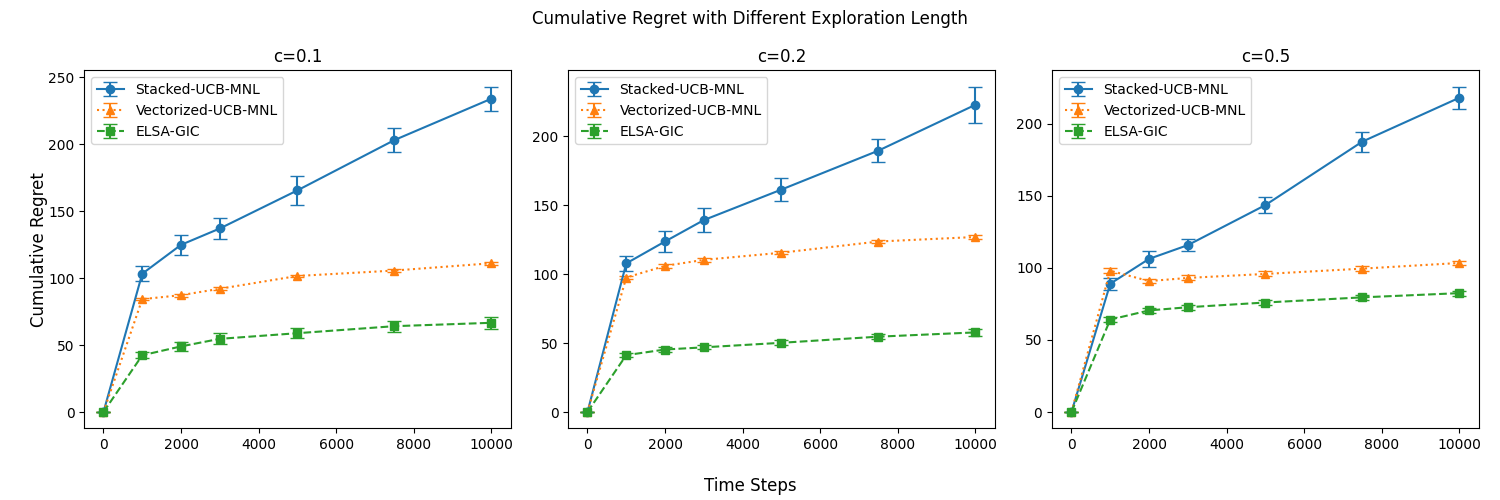}
    \caption{Cumulative regret by $T$ for different choices of the exploration parameter $c$.}
    \label{fig:change_c1}
\end{figure}

\section{Implementation of UCB-MNL} \label{sect:Appendix_UCBMNL}
In this section, we introduce UCB-MNL from \citet{oh2021multinomial} and how it is implemented for Stacked UCB-MNL and Vectorized UCB-MNL, as shown in Algorithm \ref{alg:UCB-MNL}.

\begin{algorithm}
 \caption{UCB-MNL \citep{oh2021multinomial}}
 \label{alg:UCB-MNL}
 \begin{algorithmic}[1]
 \Require Initialization $T_0$, confidence radius $\alpha_t$.
 \State For $t\in [T_0]$
 \State Randomly choose $S_t$ with $|S_t| = K$
 \State $V_t \leftarrow V_{t-1} + \sum_{i\in S_t} x_{it} x_{it}^\top$
 \For $t=T_0 + 1, \ldots, T$, 
 \State Compute $z_{it} = x_{it}^\top \hat\theta_{t-1} + \alpha_t \|x_{it}\|_{V_{t-1}^{-1}}$ for all $i$
 \State Offer $S_t = \argmax_{S} \tilde{R}_t(S)$ and observe $y_t$
 \State Update $V_t \leftarrow V_{t-1} + \sum_{i\in S_t} x_{it} x_{it}^\top$
 \State Compute MLE $\hat\theta_t$
 \EndFor
 \end{algorithmic}
\end{algorithm}

The notations $\tilde{R}_t(S)$ and $\alpha_t$ in Algorithm \ref{alg:UCB-MNL} are define as:
$$
\tilde{R}_t(S) := \frac{\sum_{i\in S} r_{it} \exp(z_{it})}{1 + \sum_{j\in S} \exp(z_{jt})},
$$
and $\alpha_t = \frac{1}{2\kappa} \sqrt{2d\log(1+\frac{t}{d}) + 2\log{t}}$.

The only difference between the Stacked and Vectorized versions lies in the construction of the context vector $x_{it}$: 
\begin{itemize}
    \item For Vectorized UCB-MNL, we use $x_{it} = \vect(p_iq_t^\top)$.
    \item For Stacked UCB-MNL, we use $x_{it} = (p_i^\top, q_t^\top)^\top$, i.e., the concatenation of the two context vectors.
\end{itemize}

\section{Additional Algorithms} \label{sect:AppendixAlgorithm}

In this section, we provide details of the sub-algorithm STATICMNL from \citet{rusmevichientong2010dynamic}, our ELSA-UCB algorithm with rank estimation, and the batch version of ELSA-UCB.

\subsection{STATICMNL}

Here we discuss and provide details of STATICMNL in Algorithm \ref{alg:STATICMNL}. The STATICMNL algorithm is used to compute the optimal assortment under the Multinomial Logit (MNL) model, given a fixed utility vector $\mathbf{v} = (v_1, \ldots, v_N)$ and reward vector $\br = (r_1, \ldots, r_N)$, subject to a capacity constraint $K$. Instead of exhaustively enumerating all $\sum_{k=1}^K {N\choose k}$ possible subsets (which is computationally heavy), the algorithm identifies a much smaller set of candidate assortments—specifically, $O(N^2)$ candidates, by iteratively examining intersection points that determine changes in the ranking of marginal value contributions.

The algorithm involves sorting utilities and intersection points, and updating candidate assortments accordingly. The resulting computational complexity is $O(N^2)$, ignoring logarithmic factors, which makes STATICMNL a computationally efficient subroutine for solving the inner maximization in our UCB-based stage. It is also worth noting that the algorithm does not have a dependency on $K$.

\begin{algorithm}
 \caption{STATICMNL}
 \label{alg:STATICMNL}
 \begin{algorithmic}[1]
 \Require Assortment capacity $K$, utility vector $v = (v_1, \ldots, v_N)$, reward vector $r = (r_1, \ldots, r_N)$.
 \State For each $j\in [N]$, define $I(0,j) = r_j$ and for $i,j \in [N]$ with $i\ne j$, define $I(i,j) = (v_iw_i - v_jw_j)/(v_i-v_j)$.
 \State Denote $L = N(N+1)/2$. Sort the pairs to $(i_1, j_1), \ldots(i_L, j_L)$ so that $0\le i_l < j_l \le N$ for every $l\in [L]$ and
 $$
 -\infty \equiv I(i_0, j_0) < I(i_1, j_1) \le \cdots \le I(i_L, j_L) < I(i_{L+1}, j_{L+1}) \equiv +\infty
 $$
 \State Sort the items based on $v_i$ so that $v_{\sigma_1^0} \ge v_{\sigma_2^0} \ge \ldots v_{\sigma_N^0}$.
 \State Let $A^0 = G^0 = \{\sigma_1^0, \ldots \sigma_K^0\}$ and $B^0 = \emptyset$. 
 \For{$t = 1, \ldots, L$,}
 \State If $i_t \ne 0$, let $\sigma_t$ be the permutation obtained from transposing $i_t$ and $j_t$ from $\sigma^{t-1}$, and set $B^t = B^{t-1}$.
 \State If $i_t=0$, let $\sigma^t=\sigma^{t-1}$ and $B^t = B^{t-1}\cup \{j_t\}$.
 \State Let $G_t = \{\sigma_1^t, \ldots ,\sigma_K^t\}$ and $A^t = G^t \backslash B^t$.
 \EndFor
 \State Find $A_l$ with the maximum expected reward, i.e., $l^* = \argmax_{l=0,\ldots,L} R(A_l)$. \\
 \Return $A_{l^*}$
 \end{algorithmic}
\end{algorithm}

\subsection{ELSA-UCB with Rank Estimation}

The first extension is the ELSA-UCB algorithm with rank estimation using GIC. We use the samples from random exploration to estimate the rank. We provide theoretical results on rank estimation using GIC in Proposition \ref{prop:rankest_gic}.

\begin{algorithm}[h!]
 \caption{ELSA-GIC : ELSA-UCB with GIC Rank Estimation}
 \label{alg:ELSA-GIC}
 \begin{algorithmic}[1]
 \Require Assortment capacity $K$, learning rate $\eta$, initial parameter estimate $\Phi_0$ and exploration length $T_0$.
 \State Observe item feature vectors $p_i$, $i\in [N]$.
 \For{$t = 1, \ldots, T_0$,} 
 \State Observe the current user feature vector $q_t$.
 \State Randomly select a size $K$ assortment $S_t \in \calS$ and observe user choice $\mathbf{y}_t$.
 \EndFor 
 \State Estimate the rank $\hat{r}$ using GIC (4) using the $T_0$ samples.
 \State Estimate the low-rank matrix $\hat\Phi = \argmax_{rk(\Phi)=\hat{r}} {\calL}_n(\Phi)$ using Algorithm 1.
 \State Estimate the subspace using SVD $\hat\Phi = \hat{U}\hat{D}\hat{V}^\top$.
 \State Initialize $\widehat{\theta}_{t,rtv} \in \bbR^k$ as the truncated vectorization of $\hat{D}$ as in (3).
 \State Rotate the item features $p_i' = \hat{U}^\top p_i$ for $i\in [N]$.
 \For{$t = T_0+1, \ldots, T$,} 
 \State Observe user feature vector $q_t$
 \State Rotate the user feature $q_t' = \hat{V}^\top q_t$.
 \State Compute $x_{it,rtv}$, the truncated vectorization of $p_i'(q_t')^\top$ as in (3).
 \State Compute $z_{it} = x_{it,rtv}^\top \widehat{\theta}_{t,rtv} + \beta_{it}$ as in (5).
 \State Select $S_t = \argmax_{S} \tilde{R}_t(S)$ via StaticMNL and observe choice $\mathbf{y}_t$.
 \State Update the MLE $\widehat{\theta}_{t,rtv}$ by solving (6).
 \EndFor
 \end{algorithmic}
\end{algorithm}

\begin{prop} [Consistency of Rank Estimation] \label{prop:rankest_gic}
    Define $GIC(r) := \calL_n(\hat\Phi_r) - a_n \cdot (d_1 + d_2 - r) \cdot r$, where $a_n = O(1)$ and $na_n \to \infty$ as $n\to \infty$. Let $\hat{r} = \argmin GIC(r)$. Then as $n \to \infty$, $\hat{r} \to r^*$.
\end{prop}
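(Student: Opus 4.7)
The plan is to establish rank consistency via a standard underfitting--overfitting dichotomy. Since the candidate set of ranks $\{0, 1, \ldots, \min(d_1, d_2)\}$ is finite, it suffices to show that $P(GIC(r) \le GIC(r^*)) \to 0$ for every $r \ne r^*$ and then apply a union bound. I would treat the two cases separately, because they rely on opposing properties of the likelihood gap: for $r < r^*$ the gap is bounded away from zero, while for $r > r^*$ it vanishes at rate $1/n$.

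\textbf{Case 1 (Underfitting, $r < r^*$).} The first step is to establish a non-vanishing population gap. Using Assumption \ref{assm:kappa} (which lower bounds the curvature of the MNL log-likelihood near $\Phi^*$) together with the bounded feature conditions in Assumptions \ref{assm:user_feature} and \ref{assm:item_feature}, I would show the existence of a constant $c_r > 0$ depending only on $\sigma_{r^*}$, $\kappa$, and the feature design such that
\begin{equation*}
\inf_{\mathrm{rank}(\Phi) \le r} \bbE[{\calL}_n(\Phi)] - \bbE[{\calL}_n(\Phi^*)] \ge c_r.
\end{equation*}
The intuition is that any rank-$r$ matrix differs from the rank-$r^*$ matrix $\Phi^*$ by at least $\sigma_{r^*}$ in Frobenius norm, and the expected negative log-likelihood is locally strongly convex around $\Phi^*$. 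A uniform concentration argument over the bounded feasible set then transfers this population gap to the empirical loss, giving ${\calL}_n(\hat\Phi_r) - {\calL}_n(\hat\Phi_{r^*}) \ge c_r/2$ with probability tending to one. The penalty difference is $O(a_n)$, which is bounded by assumption and for large $n$ cannot offset the constant $c_r/2$, so $GIC(r) > GIC(r^*)$ with probability $\to 1$.

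\textbf{Case 2 (Overfitting, $r > r^*$).} Since $\Phi^*$ has rank $r^* \le r$, it is feasible for the rank-$r$ constraint, so ${\calL}_n(\hat\Phi_r) \le {\calL}_n(\Phi^*)$ and similarly for $\hat\Phi_{r^*}$. Applying Proposition \ref{prop:lowrankest} (extended to general feasible ranks) together with a quadratic Taylor expansion of ${\calL}_n$ at $\Phi^*$ under Assumption \ref{assm:kappa}, I would bound both ${\calL}_n(\Phi^*) - {\calL}_n(\hat\Phi_r)$ and ${\calL}_n(\Phi^*) - {\calL}_n(\hat\Phi_{r^*})$ by $O_p(1/n)$, which yields $|{\calL}_n(\hat\Phi_{r^*}) - {\calL}_n(\hat\Phi_r)| = O_p(1/n)$. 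The penalty difference is $a_n \cdot \Delta_r$ with $\Delta_r = (d_1+d_2-r)r - (d_1+d_2-r^*)r^* > 0$ in the low-rank regime $r \le (d_1+d_2)/2$. Under the condition $na_n \to \infty$, this penalty strictly dominates the $O_p(1/n)$ likelihood fluctuation, so $GIC(r) > GIC(r^*)$ with probability $\to 1$. Combining the two cases and union-bounding over the finite candidate set completes the argument.

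\textbf{Main obstacle.} The most delicate step is establishing the sharp rate ${\calL}_n(\Phi^*) - {\calL}_n(\hat\Phi_r) = O_p(1/n)$ in Case 2. Proposition \ref{prop:lowrankest} was developed for $r$ matching the true rank, where the rank constraint is tight; for $r > r^*$ the constraint is inactive at $\Phi^*$, so I would need to verify that the non-convex rank-$r$ manifold behaves locally like a smooth surface near $\Phi^*$ and that standard unconstrained M-estimation expansions therefore apply. Careful handling of this local smoothness and the resulting Wilks-type control on the likelihood difference is the key technical piece; the rest reduces to plugging in the bounded-feature and curvature assumptions already used throughout the paper.
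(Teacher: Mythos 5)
Your proposal is correct in outline and follows the same underfitting/overfitting dichotomy as the paper: a constant likelihood gap for $r<r^*$ driven by the Eckart--Young bound $\inf_{\rk(\Phi)\le r}\|\Phi-\Phi^*\|_F^2\ge\sum_{i=r+1}^{r^*}\sigma_i^2$, versus an $O_p(1/n)$ likelihood gap for $r>r^*$ that is dominated by the penalty increment under $na_n\to\infty$. The one place where you genuinely diverge is the overfitting case, and it is worth noting that the obstacle you flag as the "key technical piece" --- extending Proposition \ref{prop:lowrankest} to overspecified ranks and controlling the local geometry of the rank-$r$ manifold at $\Phi^*$, where the constraint is inactive --- is entirely avoidable. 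The paper anchors the expansion at the \emph{unconstrained} MLE $\hat\Phi$ rather than at $\Phi^*$: since $\nabla\calL_n(\hat\Phi)=0$, the restricted strong convexity and smoothness of $\calL_n$ (Lemma \ref{lem:RSC} plus the eigenvalue bounds on $V_n$) give two-sided quadratic control $C_1\|\Phi-\hat\Phi\|_F^2\le\calL_n(\Phi)-\calL_n(\hat\Phi)\le C_2\|\Phi-\hat\Phi\|_F^2$, and the elementary sandwich $\calL_n(\hat\Phi)\le\calL_n(\hat\Phi_k)\le\calL_n(\hat\Phi_{r^*})$ for $k\ge r^*$ immediately yields $\|\hat\Phi_k-\hat\Phi\|_F^2\le(C_2/C_1)\|\hat\Phi_{r^*}-\hat\Phi\|_F^2=O(d_1d_2/n)$, hence the $O(1/n)$ likelihood gap, with no Wilks-type expansion or manifold-smoothness argument needed. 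Adopting that device would close the gap you identify. Two smaller points you share with the paper rather than introduce yourself: the strong-convexity bound is only guaranteed on the ball $\calB(\Phi^*,1)$ (Assumption \ref{assm:kappa}), so the underfitting case implicitly needs the underfitted estimators to land in, or be comparable on, that neighborhood; and for the penalty to be negligible against the constant gap in Case 1 one really wants $a_n\to 0$ rather than merely $a_n=O(1)$ (the paper's choice $a_n=\log(n)/n$ satisfies this).
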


To prove the consistency of the estimated rank, we first identify that the unconstrained MLE is consistent. Then by utilizing the strong convexity and smoothness of the negative log-likelihood function, we prove that the estimation error for the rank-constrained MLE with an over-specified rank is also consistent. Next we show that the estimation error for under-specified rank is lower bounded by a constant that depends on the last singular values of the matrix. Then, we finally prove that the GIC is minimized only at the true rank when the number of samples is sufficiently large. The detailed proof is provided in Appendix \ref{sect:A5}.

\subsection{ELSA-UCB with Batches}

There are scenarios where users arrive in batches, i.e., multiple users arrive at each time point. Under this scenario, the agent cannot update the parameters for every user, but can only update after each batch. We provide the details of the algorithm under this scenario. Note that similar theoretical results hold when batch size is a constant, as the estimation bound holds at each end of the batch.

\begin{algorithm}[h!]
 \caption{ELSA-UCB with Batches}
 \label{alg:ELSA-UCB-Batch}
 \begin{algorithmic}[1]
 \Require Assortment capacity $K$, rank of parameter matrix $r$, learning rate $\eta$, initial parameter estimate $\Phi_0$ and exploration length $B_0$.
 \State Observe item feature vectors $p_i$, $i\in [N]$.
 \For{$b = 1, \ldots, B_0$,} 
 \State Observe the user feature vectors $q_t$, $t=N_{b-1}+1, \ldots, N_b$.
 \State Randomly select a size $K$ assortment $S_t \in \calS$ for each user $t=N_{b-1}+1, \ldots, N_b$.
 \State Observe user choices $\mathbf{y}_t$ for each user $t=N_{b-1}+1, \ldots, N_b$.
 \EndFor 
 \State Estimate the low-rank matrix $\hat\Phi = \argmax_{rk(\Phi)=r} {\calL}_n(\Phi)$ using Algorithm 1.
 \State Estimate the subspace using SVD $\hat\Phi = \hat{U}\hat{D}\hat{V}^\top$.
 \State Initialize $\widehat{\theta}_{t,rtv} \in \bbR^k$ as the truncated vectorization of $\hat{D}$ as in (3).
 \State Rotate the item features $p_i' = \hat{U}^\top p_i$ for $i\in [N]$.
 \For{$b = B_0+1, \ldots, B$,} 
 \State Observe user feature vectors $q_t$ for each user $t=N_{b-1}+1, \ldots, N_b$.
 \State Rotate the user features $q_t' = \hat{V}^\top q_t$ for every user $t=N_{b-1}+1, \ldots, N_b$.
 \State Compute $x_{it,rtv}$, the truncated vectorization of $p_i'(q_t')^\top$ as in (3).
 \State Compute $z_{it} = x_{it,rtv}^\top \widehat{\theta}_{t,rtv} + \beta_{it}$ as in (5).
 \State Select $S_t = \argmax_{S} \tilde{R}_t(S)$ via StaticMNL for each user $t=N_{b-1}+1, \ldots, N_b$.
 \State Observe the batch choice $\mathbf{y}_t$ for each user $t=N_{b-1}+1, \ldots, N_b$.
 \State Update the MLE $\widehat{\theta}_{t,rtv}$ by solving (6).
 \EndFor
 \end{algorithmic}
\end{algorithm}

\section{Future Directions} \label{sect:Appendix_future_work}
First, from a theoretical standpoint, it would be valuable to investigate the precise dependence of the regret bound on the assortment capacity parameter $K$. The current regret bound is not tight on the assortment capacity $K$, and sharper bounds that explicitly characterize this dependence may yield deeper insights. Additionally, it is worth exploring more sophisticated initialization strategies for low-rank models. In particular, recent work by \citet{wang2017unified} suggests that projecting an unconstrained estimator onto a low-rank subspace can significantly reduce sample complexity, a technique that may be applicable to our framework with appropriate adaptation.

Second, although our model accommodates dynamic user and item contexts, our current theoretical analysis assumes a fixed item catalog. In many real-world applications, however, the set of available items evolves over time. Unlike the non-contextual case, where new items require re-learning from scratch, contextual models enable the estimation of utilities for new items using their feature vectors, thereby supporting more efficient adaptation in dynamic environments.

Finally, extending the framework to incorporate pricing decisions offers another compelling direction. Pricing plays a dual role: it influences user preferences (and thus utilities) and directly affects revenue. Several recent works have explored pricing under fixed or linear models \citep{javanmard2020multi,goyal2022dynamic,cai2023doubly,luo2024distribution}, but relatively little is known about joint assortment-pricing strategies in high-dimensional or dual-contextual settings. Investigating regret-minimizing algorithms for joint dynamic pricing and assortment optimization in such environments remains an open and important challenge.

\section{Technical Proofs} \label{sect:AppendixProof}

\allowdisplaybreaks

In this section, we provide technical proofs of the main theoretical statements:
\begin{enumerate}[label=]
    \item \ref{sect:A1} Estimation Bound for Low-Rank Optimization: Proof of Proposition 1.
    \item \ref{sect:A2} Bound on Error Induced by Truncation: Proof of Proposition \ref{prop:lowranksubspace}.
    \item \ref{sect:A3} Regret Upper Bound for the ELSA-UCB Policy: Proof of Theorem 1.
    \item \ref{sect:A4} High Probability Bound for Regret of ELSA-UCB Policy: Proof of Corollary \ref{cor:regretbound_hp}.
    \item \ref{sect:A5} Proof of Rank Consistency using GIC: Proof of Proposition \ref{prop:rankest_gic}.
\end{enumerate}

\subsection{Proof Sketch of Regret Bound} \label{subsect:appendix_proofsketch}

This section provides an outline of the three major steps in the proof of Theorem 1. Detailed proofs are presented in Section \ref{sect:AppendixProof} of the Appendix. Step 1, as shown in Proposition 1, establishes a bound on the bias induced from rotation and truncation of the parameter matrix. Step 2 shows a bound on the difference between the optimistic utility $z_{it}$ defined as (5) and the true utility $v_{it}$ (Lemma \ref{lem:zit_bound}). Finally, using this bound on the optimistic utility, Step 3 derives a bound on the regret of the $t$-th user, and the final result on the bound of cumulative regret (Theorem 1).

As our proposed algorithm ELSA-UCB truncates the lower right sub-block matrix of the rotated matrix, it is sufficient to establish a bound on the sub-block matrix.

\begin{prop}[Bounds for Subspace Estimation]\label{prop:lowranksubspace}
    Suppose assumptions of Proposition 1 hold. Further assume the exploration length $T_0$ satisfies
    $$
    T_0 \ge 4\sigma^2 (C_1\sqrt{d_1d_2} + C_2\sqrt{\log(1/\delta)})^2 + \frac{2B}{\lambda_{\min}(\Sigma)K},
    $$
    for some constants $C_1$, $C_2$ and any $\delta \in (0,1)$, $B>0$. Denote the SVD of the rank-constrained estimator as $\hat{U} \hat{D} \hat{V}^\top$. Then rotated true parameter $\Theta^* = \hat{U}^\top \Phi^* \hat{V}$ satisfies
    \begin{equation} \label{eqn:Sperp}
    \|\Theta^*_{r+1:d_2,r+1:d_1}\|_F^2 \le \frac{288\sigma_1^3}{\sigma_r^3 \cdot \kappa} \cdot \left( \frac{3}{4} + \frac{8T_0}{\kappa B} + \frac{2}{K\|\Sigma\|_2} \right) \frac{Kr\log(\frac{d_1+d_2}{\delta})}{B} S_2^4 =: S_\perp
    \end{equation}
    with probability at least $1-\delta$.
\end{prop}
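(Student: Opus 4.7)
The plan is to reduce the control of the lower-right block $\Theta^*_{r+1:d_2,\, r+1:d_1}$ to quantities already bounded by Proposition 1. The crucial structural observation is that, by construction, $\hat\Phi=\hat U\hat D\hat V^\top$ is a rank-$r$ matrix, so its SVD has $\hat D_{ij}=0$ for all $(i,j)$ outside the top-left $r\times r$ block. Consequently $\hat U_2^\top\hat\Phi\hat V_2=0$, and partitioning $\Theta^*=\hat U^\top\Phi^*\hat V$ exactly as in the display preceding equation~(\ref{eqn:rtv}) gives
\[
\Theta^*_{r+1:d_2,\,r+1:d_1}=\hat U_2^\top\Phi^*\hat V_2=\hat U_2^\top(\Phi^*-\hat\Phi)\hat V_2.
\]
Thus the entire residual block is driven by the Frobenius estimation error of Proposition 1, and the probabilistic content of Proposition~\ref{prop:lowranksubspace} comes entirely from that event (no new concentration is needed).

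The next step is to extract the dependence on the condition number $\sigma_1/\sigma_r$ so that the final bound takes the form $S_\perp$ claimed. I would factor the block using the SVD $\Phi^*=U_1^*D_{11}^*(V_1^*)^\top$ as
\[
\Theta^*_{r+1:d_2,\,r+1:d_1}=(\hat U_2^\top U_1^*)\,D_{11}^*\,(V_1^{*\top}\hat V_2),
\]
and apply Wedin's sin-$\Theta$ theorem in Frobenius norm. Because $\hat\Phi$ has exact rank $r$, we have $\sigma_{r+1}(\hat\Phi)=0$, so the singular-value gap is exactly $\sigma_r$, yielding
\[
\|\hat U_2^\top U_1^*\|_F,\ \|V_1^{*\top}\hat V_2\|_F\ \le\ \frac{\|\hat\Phi-\Phi^*\|_F}{\sigma_r}.
\]
Combining with $\|D_{11}^*\|_2=\sigma_1$ through a H\"older-type inequality on the triple product, and then plugging in the Proposition 1 bound on $\|\hat\Phi-\Phi^*\|_F^2$, produces the stated $S_\perp$ up to absorbing constants into the factor $288$. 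The low-rank estimator's exactness also supplies the useful identity $\hat U_2^\top U_1^*D_{11}^*=\hat U_2^\top(\Phi^*-\hat\Phi)V_1^*$ (since $\hat U_2^\top\hat\Phi=0$), which can be used to avoid applying Wedin twice and keep the dependence on $\sigma_1/\sigma_r$ as sharp as possible.

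The main difficulty is the bookkeeping of operator versus Frobenius norms in the triple product $(\hat U_2^\top U_1^*)D_{11}^*(V_1^{*\top}\hat V_2)$: using the wrong combination introduces spurious factors of $\sqrt{r}$ or additional $\sigma_1/\sigma_r$ terms that would not yield the clean $S_\perp$ form. A careful choice, combining one sin-$\Theta$ application with the algebraic identity above, is what makes the condition-number dependence collapse to $\sigma_1^3/\sigma_r^3$ after squaring and applying Proposition 1. Everything else is deterministic given that Proposition~1 holds on an event of probability at least $1-\delta$, so the same event carries the conclusion of Proposition~\ref{prop:lowranksubspace} without additional union-bound cost.
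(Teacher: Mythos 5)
Your proposal is correct and follows essentially the same route as the paper: the lower-right block $\hat U_2^\top\Phi^*\hat V_2=(\hat U_2^\top U_1^*)D_{11}^*(V_1^{*\top}\hat V_2)$ is controlled via Wedin's $\sin\Theta$ theorem (the paper invokes the product form $\|\hat U_2^\top U_1^*\|_F\,\|\hat V_2^\top V_1^*\|_F\le\|\hat\Phi-\Phi^*\|_F^2/\sigma_r^2$ from Jun et al.), multiplied by $\|D_{11}^*\|_2=\sigma_1$, and then combined with the Frobenius error bound of Proposition 1, exactly as you describe. The only caveat is that both your argument and the paper's own proof in fact bound the unsquared norm $\|\Theta^*_{22}\|_F$ by $S_\perp$ rather than $\|\Theta^*_{22}\|_F^2$ as the statement is phrased; this mismatch is inherited from the paper itself (and the unsquared version is what is used downstream), so it is not a defect of your approach.
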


Proposition \ref{prop:lowranksubspace} allows us to control the gap $S_\perp$ between the reduced subspace and the original subspace by controlling $B$, which relates to the exploration length. Next, we establish a bound on the difference between the optimistic utility $z_{it} = x_{it,rtv}^\top \widehat{\theta}_{t,rtv} + \alpha_t \cdot \|x_{it,rtv}\|_{W_t^{-1}}$ and the true utility $v_{it} = \langle X_{it}, \Theta^* \rangle$ using Proposition \ref{prop:lowranksubspace}. 

\begin{lem}[Bound on Optimistic Utility]\label{lem:zit_bound}
    Let $z_{it} = x_{it,rtv}^\top \widehat{\theta}_{t,rtv} + \beta_{it}$. Then for every $i\in [N]$,
    $$
    0 \le z_{it} - \langle X_{it}, \Theta^* \rangle \le 2 \beta_{it} = 2  \alpha_t \|x_{it,rtv}\|_{W_t^{-1}} + 2S_2^2 \cdot S_\perp
    $$
    with probability at least $1-t^{-2}$.
\end{lem}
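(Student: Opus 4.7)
The plan is to decompose $z_{it} - v_{it}$, where $v_{it} := \langle X_{it}, \Theta^*\rangle$ is the true utility, into a statistical fluctuation term and a deterministic truncation bias, and bound each piece separately. By the block structure of $\Theta^*$ together with the definition (3) of the truncation-vectorization, one has the identity
$$
\langle X_{it}, \Theta^*\rangle = \langle x_{it,rtv}, \theta^*_{rtv}\rangle + \langle (X_{it})_{22}, \Theta^*_{22}\rangle,
$$
so that the target quantity admits the clean decomposition
$$
z_{it} - v_{it} = x_{it,rtv}^\top(\widehat{\theta}_{t,rtv} - \theta^*_{rtv}) + \beta_{it} - \langle (X_{it})_{22}, \Theta^*_{22}\rangle.
$$

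For the statistical fluctuation I would apply the generalized Cauchy--Schwarz inequality with the positive definite weight $W_t$,
$$
\bigl|x_{it,rtv}^\top(\widehat{\theta}_{t,rtv} - \theta^*_{rtv})\bigr| \le \|x_{it,rtv}\|_{W_t^{-1}} \cdot \|\widehat{\theta}_{t,rtv} - \theta^*_{rtv}\|_{W_t},
$$
and invoke Lemma 1 to control the second factor by $\alpha_t$ with probability at least $1 - 1/t^2$. Since the concentration event of Lemma 1 does not depend on $i$, the resulting bound applies uniformly over all items in $[N]$. For the truncation residual, Frobenius Cauchy--Schwarz combined with $\|(X_{it})_{22}\|_F \le \|X_{it}\|_F = \|\hat{U}^\top p_i\|_2 \|\hat{V}^\top q_t\|_2 \le S_p S_q \le S_2^2$ (using orthonormality of the estimated bases together with Assumptions 1 and 2) and the subspace bound supplied by Proposition 2 yield $|\langle (X_{it})_{22}, \Theta^*_{22}\rangle| \le S_2^2 \cdot S_\perp$.

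Plugging both estimates back into the decomposition, and noting that $\beta_{it}$ is chosen so as to dominate $\alpha_t \|x_{it,rtv}\|_{W_t^{-1}} + S_2^2 S_\perp$, produces the upper bound
$$
z_{it} - v_{it} \le \alpha_t \|x_{it,rtv}\|_{W_t^{-1}} + \beta_{it} + S_2^2 S_\perp \le 2\beta_{it},
$$
while the optimism direction $z_{it} \ge v_{it}$ follows from exactly the same two-sided inequality applied with the opposite sign on the fluctuation term. The main subtlety I anticipate is the probabilistic bookkeeping: Lemma 1 supplies a stochastic concentration event of probability at least $1 - 1/t^2$, whereas the subspace bound of Proposition 2 is a deterministic statement conditional on an exploration-phase event. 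The claimed $1 - 1/t^2$ guarantee should therefore be read as holding on the intersection of these two events, with the dominant failure probability coming from the MLE concentration in Lemma 1 while the exploration-phase event is absorbed into the conditioning.
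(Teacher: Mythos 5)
Your proposal is correct and follows essentially the same route as the paper's proof: the same decomposition of $z_{it}-\langle X_{it},\Theta^*\rangle$ into a $W_t$-weighted Cauchy--Schwarz term controlled by Lemma \ref{lem:thetabound} and a truncation residual $\langle (X_{it})_{22},\Theta^*_{22}\rangle$ controlled by $S_2^2 S_\perp$ via Proposition \ref{prop:lowranksubspace}, with $\beta_{it}$ set to the sum of the two bounds so that both optimism and the $2\beta_{it}$ upper bound follow. Your remark on the probabilistic bookkeeping (the intersection of the Lemma \ref{lem:thetabound} event with the exploration-phase subspace event) is accurate and in fact slightly more explicit than the paper's own treatment.
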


Note that $z_{it}$ is greater than the true utility $v_{it} = \langle X_{it}, \Theta^*\rangle$ of item $i$ to user $t$ with probability at least $1-t^{-2}$. Let us denote $R(\{w_{i}\}, S)$ as the expected revenue of offering assortment $S$ under the MNL model assuming utility $w_i$ for each item. Let $S_t^*$ be the optimal assortment with respect to the true revenue $v_{it}$ and $S_t$ be the optimal assortment with respect to the optimistic revenue $z_{it}$. Then from the definition we have the following bound for regret $r_t$ for user $t$:
\begin{equation}
    r_t = R(\{v_{it}\}, S_t^*) - R(\{v_{it}\}, S_t) 
    \le R(\{z_{it}\}, S_t^*) - R(\{v_{it}\}, S_t) 
    \le R(\{z_{it}\}, S_t) - R(\{v_{it}\}, S_t). \label{eqn:rt}
\end{equation}
Moreover, we can bound the last term using the result from \citet{oh2021multinomial} that establishes a bound on the difference of expected reward of an assortment given different utilities:
\begin{lem}[Lemma 5 of \citet{oh2021multinomial}] \label{lem:optimistic_assortment}
    Let $u_i, u_i', i\in [N]$ be utilities and suppose $0\le r_i \le R$ for every item $i\in [N]$. Then $R(\{u_i\}, S) - R(\{u_i'\}, S) \le R \cdot \max_{i\in S} |u_i - u'_i|$.
\end{lem}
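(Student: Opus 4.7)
The plan is to treat $R(\{u_i\}, S) = \sum_{i \in S} r_i \cdot \exp(u_i)/\bigl(1 + \sum_{j\in S}\exp(u_j)\bigr)$ as a smooth scalar function of the vector of in-assortment utilities and to apply the fundamental theorem of calculus along the segment from $(u_i')$ to $(u_i)$. Writing $Z(u) = 1 + \sum_{j\in S}\exp(u_j)$, $p_i(u) = \exp(u_i)/Z(u)$, and $f(u) := R(\{u_i\}, S)$, this representation gives
$$
f(u) - f(u') = \int_0^1 \sum_{k\in S} \frac{\partial f}{\partial u_k}\bigl(u^{(\lambda)}\bigr)\,(u_k - u_k')\, d\lambda,
$$
where $u^{(\lambda)} = \lambda u + (1-\lambda)\,u'$. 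By H\"older's inequality in the coordinates indexed by $S$, it then suffices to produce a uniform bound on $\|\nabla f(u)\|_1$ along this segment.

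The key step will be the clean identity
$$
\frac{\partial f}{\partial u_k}(u) = p_k(u)\bigl(r_k - f(u)\bigr),
$$
obtained by a direct application of the quotient rule to $\exp(u_k)/Z(u)$ and collecting terms; this expresses each partial derivative as the deviation of item $k$'s reward from the current expected revenue, weighted by its choice probability. Since $0 \le r_k \le R$ and clearly $f(u) \in [0, R]$ as well, we get the pointwise bound $|r_k - f(u)| \le R$ for every $k\in S$. Combined with $\sum_{k\in S} p_k(u) = 1 - p_0(u) \le 1$, this yields the uniform estimate
$$
\|\nabla f(u)\|_1 = \sum_{k\in S} p_k(u)\,|r_k - f(u)| \;\le\; R \cdot \sum_{k\in S} p_k(u) \;\le\; R.
$$
Inserting this into the integral representation and using $|u_k - u_k'| \le \max_{i\in S}|u_i - u_i'|$ for every $k\in S$ delivers the two-sided inequality $|R(\{u_i\}, S) - R(\{u_i'\}, S)| \le R \cdot \max_{i\in S}|u_i - u_i'|$, which is stronger than the stated claim.

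There is no substantive obstacle in this argument: the crux is deriving the gradient formula $\partial_{u_k} f = p_k(r_k - f)$, after which the bound is essentially immediate from the trivial inequality $|r_k - f| \le R$ and $\sum_{k\in S} p_k \le 1$. If a sharper constant were desired, one could exploit the identity $\sum_{k\in S} p_k(r_k - f) = f \cdot p_0$ (which follows from $\sum_{k\in S} r_k p_k = f$) to involve the no-purchase probability explicitly, but this refinement is not needed for the lemma as stated.
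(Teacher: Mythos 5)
Your proof is correct, but note that the paper itself does not prove this statement: it is imported verbatim as Lemma 5 of \citet{oh2021multinomial} and used as a black box, so there is no in-paper argument to compare against. Your derivation is a clean, self-contained justification. The key identity $\partial f/\partial u_k = p_k(u)\,(r_k - f(u))$ follows exactly as you say from the quotient rule applied to $p_i(u)=e^{u_i}/Z(u)$, giving $\partial p_i/\partial u_k = \delta_{ik}p_i - p_i p_k$ and hence $\sum_i r_i(\delta_{ik}p_i - p_ip_k) = p_k(r_k - f)$; combined with $r_k, f(u)\in[0,R]$ and $\sum_{k\in S}p_k \le 1$ this yields $\|\nabla f\|_1 \le R$ uniformly, and the $\ell_1$--$\ell_\infty$ H\"older bound along the segment delivers the two-sided inequality $|R(\{u_i\},S)-R(\{u_i'\},S)|\le R\max_{i\in S}|u_i-u_i'|$, which is stronger than the one-sided claim as stated. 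This gradient-plus-interpolation route is arguably more transparent than the algebraic manipulations used in the MNL-bandit literature for Lipschitz continuity of the expected revenue, and your observation that $\sum_{k\in S}p_k(r_k-f) = f\,p_0$ correctly identifies where a sharper, $p_0$-dependent constant could be extracted if needed.
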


Combining (\ref{eqn:rt}), Lemmas \ref{lem:zit_bound} and \ref{lem:optimistic_assortment}, we have
\begin{align*}
{\calR}_T &= \sum_{t=1}^{T_0} r_t + \sum_{t=T_0+1}^T r_t \\
&\le R_{\max} T_0 + R_{\max} \left( \sum_{t=T_0+1}^T 2\alpha_t \max_{i\in S_t} \|x_{it,rtv}\|_{W_t^{-1}} + 2S_2^2 S_\perp(T-T_0) \right).
\end{align*}

Lastly, presenting an upper bound on $\sum_{t=T_0+1}^T 2\alpha_t \max_{i\in S_t} \|x_{it,rtv}\|_{W_t^{-1}}$, we obtain our final result Theorem 1. Detailed proofs of the theoretical results are presented in Section \ref{sect:AppendixProof}.

\subsection{Proof of Proposition 1}
\label{sect:A1}

\subsubsection{Sketch of Proof for Proposition 1}

Proof of Proposition 1 consists of three main steps. The first step is establishing a one-step bound of Algorithm 1 (Lemma \ref{lem:onestep} - \ref{lem:BG}). Next, we establish bounds on the eigenvalues of $V_n$ (Lemma \ref{lem:lamVn}) to verify the constants of Lemma \ref{lem:RSC}. Finally we extend the one-step result to the converged parameter (Lemma \ref{lem:intermediate}) to achieve Proposition 1.

We begin by re-stating the results on convergence of factored gradient descent algorithm on general loss functions from \citet{zhang2023generalized}. Since the result on \citet{zhang2023generalized} is a general result for where the loss is a joint function of a low-rank matrix and a sparse tensor, we state the simplified result without the assumptions on the tensor portion. First we define the minimal Frobenius norm under rotation.

\begin{defn}
    For $Z_1, Z_2 \in \bbR^{d\times r}$, $d(Z_1,Z_2)$ is defined as the minimal Frobenius norm between $Z_1$ and $Z_2$ under rotation i.e.,
    $$
    d(Z_1, Z_2) = \min_{R \in \mathbb{Q}_r} \|Z_1 - Z_2R\|_F ,
    $$
    where $\mathbb{Q}_r$ is the space of $r$-dimensional rotation matrices, i.e., $\mathbb{Q}_r = \{ R \in \bbR^{r\times r}: R^\top R = R R^\top = I_r\}$.
\end{defn}

Next we use the following Lemma to illustrate the one-step rate decrease in terms of the new metric.

\begin{lem}[Lemma 1 from \citep{zhang2023generalized}]\label{lem:onestep}
    Suppose $\ell$ satisfies the restricted convexity and restricted smoothness i.e., for any matrices $\Phi_1, \Phi_2 \in \mathbb{B}(\Phi^*, \kappa_1)$ with rank at most $r$,
    $$
    \frac{\mu}{2} \|\Phi_2 - \Phi_1\|_F^2 \le \ell(\Phi_2) - \ell(\Phi_1) - \langle \nabla \ell(\Phi_1), \Phi_2 - \Phi_1 \rangle \le \frac{L}{2} \|\Phi_2 - \Phi_1\|_F^2 .
    $$
    Let $\sigma_1, \ldots, \sigma_r$ denote the singular values of $\Phi^*$. Let $c_1, c_2$ be a constant such that $c_1 \le \min\{1/32, \mu/(192L^2)$, $c_2 \le \sqrt{\min\{\mu,2\}(6L+4)}$ and consider the step size $\delta = c_1/\sigma_1$. Let $Z^* = [U^* ; V^*]$, and $Z^{(t)} = [U^{(t)}, V^{(t)}$. If $d(Z^{(t)}, Z^*) \le c_2 \sqrt{\sigma_r}$, then Algorithm 1 satisfies
    $$
    d^2(Z^{(t+1)}, Z^*) \le \rho d^2(Z^{(t)}, Z^*) - \frac{\delta \mu}{4} \|\Phi^{(t)} - \Phi^*\|_F^2 + C_1 \|\nabla \ell(\Phi^*)\|_2^2 ,
    $$
    where $\rho = 1- \delta\mu \sigma_r/16$, $C_1 = 48r\delta^2 \sigma_1 + 2\delta(8r/\mu + r/L)$.
\end{lem}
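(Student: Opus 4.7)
The plan is to expand the one-step update in the rotation-invariant metric $d$ and reduce the nonconvex descent to standard restricted strong convexity / smoothness arguments on the matrix $\Phi = UV^\top$. Fix a minimizing rotation $R^* \in \argmin_{R \in \mathbb{Q}_r} \|Z^{(t)} - Z^* R\|_F$, so that $d^2(Z^{(t)}, Z^*) = \|Z^{(t)} - Z^* R^*\|_F^2$. Since $R^*$ need not be optimal for $Z^{(t+1)}$, we have $d^2(Z^{(t+1)}, Z^*) \le \|Z^{(t+1)} - Z^* R^*\|_F^2$, and substituting the update $Z^{(t+1)} = Z^{(t)} - \delta \nabla \tilde{\ell}(Z^{(t)})$ gives
$$
d^2(Z^{(t+1)}, Z^*) \le d^2(Z^{(t)}, Z^*) - 2\delta \langle \nabla \tilde{\ell}(Z^{(t)}), Z^{(t)} - Z^* R^* \rangle + \delta^2 \|\nabla \tilde{\ell}(Z^{(t)})\|_F^2.
$$
The task then reduces to lower-bounding the cross term and upper-bounding the squared gradient.

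For the cross term, I would decompose $\nabla \tilde{\ell}(Z) = \nabla_Z[\ell(UV^\top)] + \nabla g(Z)$, where $g(Z) = \frac{1}{8}\|U^\top U - V^\top V\|_F^2$. By the chain rule $\nabla_U[\ell(UV^\top)] = \nabla\ell(\Phi) V$ (and analogously for $V$), so
$\langle \nabla_Z[\ell(UV^\top)], Z^{(t)} - Z^* R^* \rangle = \langle \nabla \ell(\Phi^{(t)}), \Phi^{(t)} - \Phi^* + E \rangle$,
for a residual $E$ that is quadratic in $Z^{(t)} - Z^* R^*$. Restricted strong convexity yields $\langle \nabla \ell(\Phi^{(t)}) - \nabla \ell(\Phi^*), \Phi^{(t)} - \Phi^* \rangle \ge \mu \|\Phi^{(t)} - \Phi^*\|_F^2$, and a Young-type inequality disposes of the $\nabla \ell(\Phi^*)$ contribution; the basin assumption $d(Z^{(t)}, Z^*) \le c_2 \sqrt{\sigma_r}$ absorbs the quadratic residual $E$. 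The regularizer contribution $\langle \nabla g(Z^{(t)}), Z^{(t)} - Z^* R^* \rangle$ is nonnegative up to lower-order terms, since $g$ penalizes motion off the balanced submanifold that contains $Z^*$ and its orbit.

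For the squared-gradient term, restricted smoothness gives $\|\nabla \ell(\Phi^{(t)})\|_F \le L\|\Phi^{(t)} - \Phi^*\|_F + \|\nabla \ell(\Phi^*)\|_F$, while $\|\nabla_U[\ell(UV^\top)]\|_F = \|\nabla \ell(\Phi)V\|_F \lesssim \sqrt{r\sigma_1}\,\|\nabla \ell(\Phi)\|_F$ inside the basin, where $\|V\|_2^2 \lesssim \sigma_1$. Choosing $\delta = c_1/\sigma_1$ with $c_1 \le \mu/(192L^2)$ makes the $\delta^2 L^2 \sigma_1 r$ contribution absorbable into $-\tfrac{\delta \mu}{4}\|\Phi^{(t)} - \Phi^*\|_F^2$, while the residual $\|\nabla \ell(\Phi^*)\|_F^2$ accumulates into $C_1 = 48r\delta^2 \sigma_1 + 2\delta(8r/\mu + r/L)$. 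The $\sigma_r$ factor in $\rho = 1 - \delta\mu\sigma_r/16$ appears in the final step, when the $\Phi$-metric contraction is pushed back to the $Z$-metric via the well-known inequality $\|\Phi^{(t)} - \Phi^*\|_F^2 + g(Z^{(t)}) \gtrsim \sigma_r \cdot d^2(Z^{(t)}, Z^*)$ valid in the basin.

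The main obstacle is precisely the rotation ambiguity $(U,V) \mapsto (UR, VR)$ for $R \in \mathbb{Q}_r$, which causes $\tilde\ell$ to fail strong convexity in $Z$ along the orbit of equivalent minimizers, and also prevents the gradient step in $Z$ from being directly comparable to a $\Phi$-space descent. This is resolved by (i) using the rotation-invariant distance $d$, (ii) the balancing regularizer $g$ that pins down motion off the balanced submanifold so that $Z$-perturbations and $\Phi$-perturbations are equivalent, and (iii) the basin assumption that localizes the analysis. Tracking every cross term carefully enough to confirm that each is either a contraction or gets absorbed into $\|\nabla \ell(\Phi^*)\|_2^2$ with the announced numerical constants is the technically delicate part, and mirrors the derivation of Lemma 1 in Zhang et al.\ (2023).
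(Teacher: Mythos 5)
The paper does not actually prove this statement: it is imported verbatim as Lemma~1 of \citet{zhang2023generalized}, so the paper's ``proof'' is the citation itself. Your sketch is an attempt to reconstruct that external argument, and its skeleton is the correct one --- freeze the optimal rotation $R^*$ for $Z^{(t)}$, expand $\|Z^{(t+1)} - Z^*R^*\|_F^2$ after the gradient step, lower-bound the cross term via restricted strong convexity, upper-bound the squared gradient via restricted smoothness, and convert between the $Z$-metric and the $\Phi$-metric through the $\sigma_r$-weighted equivalence. This is exactly the Procrustes-flow style analysis that the cited proof follows.

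Two of the steps you defer are genuine gaps rather than routine bookkeeping, however. First, the assertion that $\langle \nabla g(Z^{(t)}), Z^{(t)} - Z^*R^*\rangle$ is ``nonnegative up to lower-order terms'' understates what is required: the standard argument must prove a quantitative lower bound of the form $\langle \nabla g(Z), Z - Z^*R\rangle \ge \tfrac14\|U^\top U - V^\top V\|_F^2 - (\text{terms controlled by } d^2(Z,Z^*))$, using that the balanced factorization satisfies $(U^*)^\top U^* = (V^*)^\top V^*$; this positive contribution is then \emph{needed}, together with $\|\Phi^{(t)}-\Phi^*\|_F^2$, to dominate $\sigma_r\, d^2(Z^{(t)},Z^*)$ and produce the contraction factor $\rho$, because the imbalance direction $U^\top U - V^\top V$ is not controlled by $\|UV^\top - U^*V^{*\top}\|_F$ alone. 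Second, the lemma's conclusion pays $\|\nabla\ell(\Phi^*)\|_2^2$ in the \emph{spectral} norm with an explicit factor $r$ inside $C_1$, whereas your sketch works with $\|\nabla\ell(\Phi^*)\|_F$. The factor $r$ arises precisely from the step $\langle \nabla\ell(\Phi^*),\Delta\rangle \le \|\nabla\ell(\Phi^*)\|_2\|\Delta\|_* \le \sqrt{2r}\,\|\nabla\ell(\Phi^*)\|_2\|\Delta\|_F$ applied to low-rank directions $\Delta$, and omitting that conversion changes the statement being proved: downstream (Lemma~6 of the supplement and Proposition~1) only the spectral norm of the gradient at $\Phi^*$ is small, so a Frobenius-norm version of the bound would forfeit the entire low-rank gain.
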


Next step is checking the assumptions of Lemma \ref{lem:onestep}. First we begin with defining $V_n$, which explains the variability of the features:
$$
V_n := \sum_{t=1}^n \sum_{i\in S_t} x_{it}x_{it}^\top .
$$

Then we have the following result on the restricted strong convexity and smoothness of the loss function:

\begin{lem} [Restricted Strong Convexity and Smoothness] \label{lem:RSC}
$$
\frac{\kappa}{n} \lambda_{\min}(V_n) \|Y- X\|_F^2 \le {\calL}_n(Y) - {\calL}_n(X) - \langle \nabla {\calL}_n (X), Y-X \rangle \le \frac{1}{n} \lambda_{\max}(V_n) \|Y -X\|_F^2 ,
$$
as long as we can guarantee $x,y \in \calB(\phi^*, 1)$, or equivalently $X,Y \in \calB_F(\Phi^*, 1)$. 
\end{lem}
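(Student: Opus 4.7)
The plan is to vectorize the problem and reduce the lemma to two-sided spectral bounds on the Hessian of $\calL_n$. Writing $\phi = \vect(\Phi) \in \bbR^{d_1 d_2}$, each bilinear utility becomes $p_i^\top \Phi q_t = x_{it}^\top \phi$ with $x_{it} = q_t \otimes p_i$, so $\calL_n$ is the usual (negative) MNL log-likelihood in $\phi$. Direct differentiation gives
$$
\nabla^2 \calL_n(\phi) = \frac{1}{n}\sum_{t=1}^n X_t P_t(\phi) X_t^\top,
$$
where $X_t$ is the matrix whose columns are $\{x_{it}\}_{i\in S_t}$ and $P_t(\phi) = \mathrm{diag}(p_t) - p_t p_t^\top$ with $p_t = (p_t(i\mid S_t,\phi))_{i\in S_t}$. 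Once I have two-sided bounds on $P_t$, summing yields bounds on $\nabla^2 \calL_n$ in terms of $V_n$, and Taylor's theorem finishes the argument.

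For the upper bound on $P_t$, I would use $\mathrm{diag}(p_t) \preceq I_{|S_t|}$ (since each $p_t(i) \le 1$) and $p_t p_t^\top \succeq 0$, giving $P_t \preceq I_{|S_t|}$ and therefore $X_t P_t X_t^\top \preceq \sum_{i\in S_t} x_{it}x_{it}^\top$. Summing over $t$ yields $\nabla^2 \calL_n(\phi) \preceq \frac{1}{n}V_n$. For the lower bound, I would reinterpret $v^\top P_t v$ as a variance: extending $v$ by a zero coordinate for the no-purchase option and letting $\tilde p_t = (p_t(0),p_t(1),\dots)$, the identity
$$
v^\top P_t v = \tfrac{1}{2}\sum_{i,j}\tilde p_t(i)\tilde p_t(j)(\tilde v_i - \tilde v_j)^2
$$
holds. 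Keeping only the pairs $(i,0)$ and $(0,i)$ and dropping the remaining nonnegative terms gives
$$
v^\top P_t v \ \ge\ \sum_{i\in S_t} p_t(i)\,p_t(0)\, v_i^2 \ \ge\ \kappa\,\|v\|_2^2,
$$
where the last inequality invokes Assumption \ref{assm:kappa} (valid since the line segment $X + s(Y-X)$ with $X,Y\in\calB_F(\Phi^*,1)$ stays inside $\calB_F(\Phi^*,1)$ by convexity of the Frobenius ball). Hence $P_t \succeq \kappa I_{|S_t|}$, and consequently $\nabla^2 \calL_n(\phi) \succeq \frac{\kappa}{n} V_n \succeq \frac{\kappa \lambda_{\min}(V_n)}{n} I$.

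Finally, I would combine the spectral bounds with Taylor's theorem with integral remainder:
$$
\calL_n(Y) - \calL_n(X) - \langle \nabla \calL_n(X), Y-X\rangle = \int_0^1 (1-s)\,(y-x)^\top \nabla^2 \calL_n\bigl(X + s(Y-X)\bigr)(y-x)\,ds.
$$
Since the integration path stays inside $\calB_F(\Phi^*,1)$, the Hessian bounds apply uniformly in $s$; using $\|y-x\|_2^2 = \|Y-X\|_F^2$ and $\int_0^1 (1-s)\,ds = \tfrac12$ yields the two claimed inequalities (the factor of $\tfrac12$ is absorbed into the constants, as is standard in restricted strong-convexity/smoothness statements).

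The main obstacle is the lower bound on $P_t$: naively this matrix is only positive semidefinite (its sum-direction eigenvalue can be small when $p_t(0) \to 0$ or $p_t(i)\to 1$), so obtaining a genuine $\kappa I$ floor requires exploiting the no-purchase alternative together with Assumption \ref{assm:kappa}. The variance representation with the appended zero coordinate is the clean device that turns the no-purchase probability into a usable spectral lower bound; the remaining steps are standard convex-analysis bookkeeping.
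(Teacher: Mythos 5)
Your proof is correct and follows essentially the same route as the paper's: compute the MNL Hessian, sandwich it between $\tfrac{\kappa}{n}V_n$ and $\tfrac{1}{n}V_n$ (the lower bound obtained by reducing the inner probability matrix to $\mathrm{diag}\bigl(p_t(i)p_t(0)\bigr)$ and invoking Assumption~\ref{assm:kappa}), then finish with Taylor's theorem along the segment, which stays in $\calB_F(\Phi^*,1)$ by convexity. Your variance-identity device for the lower bound on $P_t$ is algebraically equivalent to the paper's symmetrization inequality $ab^\top+ba^\top\preceq aa^\top+bb^\top$, and your integral-remainder form of Taylor's theorem is in fact slightly more careful about the factor of $\tfrac12$ that the paper's displayed expansion silently drops.
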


Next to utilize the result of Lemma \ref{lem:onestep}, we need a bound on the last term of the inequality, the gradient of the loss function evaluated in the true parameter. 

\begin{lem}[Bounded Gradient] \label{lem:BG}
    $$
    \left\| \frac{1}{n} \sum_{t=1}^n \sum_{i\in S_t} (y_{it} - p_t(i|\Phi^*)) p_i q_t^\top \right\|_2 \le \sqrt{\frac{K \log(\frac{d_1+d_2}{\delta})}{n}} S_2^2 =: \epsilon(n,\delta) ,
    $$
    with probability at least $1-\delta$. 
\end{lem}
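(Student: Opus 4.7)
The plan is to recognize the centered sum as a matrix-valued martingale difference sequence and apply a matrix concentration inequality. Set $M_t := \sum_{i \in S_t}\bigl(y_{it} - p_t(i|\Phi^*)\bigr)\, p_i q_t^\top \in \mathbb{R}^{d_2\times d_1}$, and let $\widetilde{\mathcal{F}}_{t-1}$ denote the $\sigma$-algebra generated by all past user features, assortments, and choices up to time $t-1$, augmented by the current $(q_t, S_t)$. Under the MNL model with true parameter $\Phi^*$, the choice vector $\mathbf{y}_t$ is conditionally multinomial given $\widetilde{\mathcal{F}}_{t-1}$ with mean $\bigl(p_t(i|\Phi^*)\bigr)_{i\in S_t}$, so $\mathbb{E}[M_t\mid\widetilde{\mathcal{F}}_{t-1}] = 0$, and $\{M_t\}$ is a matrix martingale difference sequence with respect to $\{\widetilde{\mathcal{F}}_t\}$.

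Next, I would derive the two quantities required to invoke the matrix Freedman inequality. Using $|y_{it} - p_t(i|\Phi^*)| \le 1$, $\|p_i\|_2 \le S_p \le S_2$, $\|q_t\|_2 \le S_q \le S_2$, and $|S_t| \le K$, the triangle inequality gives the almost-sure bound $\|M_t\|_{\mathrm{op}} \le K S_2^2$. For the predictable quadratic variation, I would apply Cauchy--Schwarz across the $i \in S_t$ terms to obtain
\[
\|M_t\|_F^2 \;=\; \|q_t\|_2^2\, \Bigl\|\sum_{i\in S_t} (y_{it} - p_t(i|\Phi^*))\, p_i\Bigr\|_2^2 \;\le\; K\, S_2^4 \sum_{i\in S_t} (y_{it} - p_t(i|\Phi^*))^2,
\]
and then take conditional expectation, noting that $\mathbb{E}\bigl[\sum_{i\in S_t}(y_{it} - p_t(i|\Phi^*))^2 \mid \widetilde{\mathcal{F}}_{t-1}\bigr] = \sum_{i\in S_t} p_t(i|\Phi^*)(1 - p_t(i|\Phi^*)) \le 1$ since $\mathbf{y}_t$ is a multinomial vector. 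Consequently $\|\mathbb{E}[M_tM_t^\top\mid\widetilde{\mathcal{F}}_{t-1}]\|_{\mathrm{op}}$ and $\|\mathbb{E}[M_t^\top M_t\mid\widetilde{\mathcal{F}}_{t-1}]\|_{\mathrm{op}}$ are each at most $\mathbb{E}[\|M_t\|_F^2 \mid \widetilde{\mathcal{F}}_{t-1}] \le K S_2^4$ (via the trace bound for PSD matrices), and summing over $t \le n$ yields a predictable quadratic variation at most $n K S_2^4$.

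Finally, I would apply the matrix Freedman inequality (employing the standard Hermitian dilation to handle the rectangular shape, giving effective ambient dimension $d_1+d_2$) to conclude that with probability at least $1-\delta$,
\[
\Bigl\|\sum_{t=1}^n M_t\Bigr\|_{\mathrm{op}} \;\le\; C_1 \sqrt{n K S_2^4 \log\tfrac{d_1+d_2}{\delta}} \;+\; C_2 K S_2^2 \log\tfrac{d_1+d_2}{\delta}
\]
for absolute constants $C_1, C_2$. Dividing by $n$, the leading-order term is $S_2^2\sqrt{K \log((d_1+d_2)/\delta)/n}$, which matches the claimed $\epsilon(n,\delta)$ after absorbing the lower-order Bernstein term into the constant in the operative regime $n \gtrsim K\log((d_1+d_2)/\delta)$. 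The only delicate point, and the main obstacle worth flagging, is the filtration bookkeeping: because $S_t$ is adaptively chosen (uniformly random during exploration, but in general a function of past history and the current $q_t$), one must carefully include $(q_t, S_t)$ in the conditioning before exploiting the MNL property of $\mathbf{y}_t$; once this is in place the rest is a routine application of matrix martingale concentration.
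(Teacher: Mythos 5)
Your proof is correct and follows the same basic template as the paper's: identify the per-round matrices, establish an almost-sure operator-norm bound and a variance bound, and invoke a matrix concentration inequality with ambient dimension $d_1+d_2$. The differences are worth noting. The paper uses the rectangular matrix Bernstein inequality for \emph{independent} matrices (Tropp 2012, Theorem 1.6), which is legitimate here because the lemma is only invoked during the exploration phase where $S_t$ is drawn uniformly at random and $q_t$ is i.i.d., so the $Z_t$ are genuinely independent; you instead use matrix Freedman for martingale differences, which is strictly more general and correctly handles the adaptive case you flag --- a sensible precaution, though not needed where the lemma is applied. Your variance bound goes through $\|M_t\|_F^2$ and the multinomial identity $\sum_{i\in S_t}p_t(i)(1-p_t(i))\le 1$, giving predictable quadratic variation $nKS_2^4$; the paper bounds each term by $p(1-p)\le 1/4$ to get $nKS_2^4/4$, and also obtains a sharper a.s.\ bound $\|Z_t\|_2\le 2S_2^2$ (versus your $KS_2^2$) by exploiting that the residuals $y_{it}-p_t(i)$ have a single nonnegative entry and nonpositive remainder summing to at least $-1$; these refinements only affect constants and the lower-order term. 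Two small caveats on your version: your conclusion carries unspecified absolute constants $C_1,C_2$ and a residual Bernstein term, whereas the lemma asserts the bound with constant exactly $1$ --- but the paper's own proof also quietly assumes $n\ge \tfrac{64}{9}K\log^2\!\big(\tfrac{d_1+d_2}{\delta}\big)$ to kill its lower-order term (your regime $n\gtrsim K\log(\tfrac{d_1+d_2}{\delta})$ is actually milder), so neither argument delivers the lemma unconditionally as stated. Your approach is a valid, slightly more robust substitute.
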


Plugging in the results of Lemma \ref{lem:RSC} and \ref{lem:BG} to Lemma \ref{lem:onestep}, we have the following result on the converged estimator $\hat\Phi$:
\begin{lem} \label{lem:intermediate}
    If $\|\Phi_0 - \Phi^*\| \le c_2 \sqrt{\sigma_r}$ where $c_2 \le \sqrt{\min\{\mu, 2\} \cdot (6L + 4)}$, then:
    \begin{align*}
    \|\hat\Phi - \Phi^*\|_F^2 &\le \frac{288\sigma_1^2}{\sigma_r\cdot \kappa} \cdot \left( \frac{3}{4} + \frac{8n}{\kappa \lambda_{\min}(V_n)} + \frac{n}{\lambda_{\max}(V_n)} \right) \frac{Kr\log(\frac{d_1+d_2}{\delta})}{\lambda_{\min}(V_n)} S_2^4.
    \end{align*}
\end{lem}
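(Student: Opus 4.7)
The proof combines the one-step contraction of Lemma \ref{lem:onestep} with the restricted convexity bound of Lemma \ref{lem:RSC} and the gradient bound of Lemma \ref{lem:BG}, then iterates to convergence and lifts the resulting factor-distance bound back to $\|\hat\Phi - \Phi^*\|_F^2$. In the first step, I would read off the restricted strong convexity and smoothness constants $\mu = 2\kappa\lambda_{\min}(V_n)/n$ and $L = 2\lambda_{\max}(V_n)/n$ from Lemma \ref{lem:RSC}, and the gradient bound $\|\nabla \calL_n(\Phi^*)\|_2 \le \epsilon(n,\delta) := \sqrt{K\log((d_1+d_2)/\delta)/n}\,S_2^2$ from Lemma \ref{lem:BG}, both holding on a high-probability event of measure at least $1-\delta$.

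Next, I would verify that the balanced initialization $U^{(0)} = U_0 D_0^{1/2}$, $V^{(0)} = V_0 D_0^{1/2}$ of Algorithm \ref{alg:fgd}, together with the hypothesis $\|\Phi_0 - \Phi^*\|_F \le c_2\sqrt{\sigma_r}$, places $Z^{(0)}$ in the basin of attraction required by Lemma \ref{lem:onestep}. This uses a standard Burer-Monteiro lifting of the form $d(Z^{(0)}, Z^*) \le C\|\Phi_0 - \Phi^*\|_F/\sqrt{\sigma_r}$ for a numerical constant $C$, after which an induction keeps all subsequent iterates inside $\calB(\Phi^*, 1)$ so that Lemmas \ref{lem:onestep} and \ref{lem:RSC} continue to apply at each step.

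Iterating the one-step bound
\begin{equation*}
d^2(Z^{(t+1)}, Z^*) + \tfrac{\delta\mu}{4}\|\Phi^{(t)} - \Phi^*\|_F^2 \le \rho\, d^2(Z^{(t)}, Z^*) + C_1\,\epsilon(n,\delta)^2,
\end{equation*}
with $\rho = 1 - \delta\mu\sigma_r/16$ and $C_1 = 48r\delta^2\sigma_1 + 2\delta(8r/\mu + r/L)$, and passing to the limit $t\to\infty$ (so that $\Phi^{(t)}\to\hat\Phi$ and $Z^{(t)}\to\hat Z$), I obtain $(1-\rho)\,d^2(\hat Z, Z^*) + (\delta\mu/4)\|\hat\Phi - \Phi^*\|_F^2 \le C_1\,\epsilon(n,\delta)^2$. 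Combining with the Burer-Monteiro lifting $\|\hat\Phi - \Phi^*\|_F^2 \le c_3\sigma_1\, d^2(\hat Z, Z^*)$ and $1-\rho = \delta\mu\sigma_r/16$ yields $\|\hat\Phi - \Phi^*\|_F^2 \le (16 c_3 \sigma_1 C_1)/(\delta\mu\sigma_r)\cdot\epsilon(n,\delta)^2$. Substituting $\delta = c_1/\sigma_1$ and the expressions for $\mu, L$, one finds $C_1 = (rc_1/\sigma_1)\bigl(48c_1 + 8n/(\kappa\lambda_{\min}(V_n)) + n/\lambda_{\max}(V_n)\bigr)$, and with $c_1$ chosen so that $48c_1 \le 3/4$ the inner factor matches the claim. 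Putting it all together with $\epsilon(n,\delta)^2 = K S_2^4\log((d_1+d_2)/\delta)/n$ produces the target bound.

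The main obstacle is bookkeeping: carefully tracking constants through the substitutions to recover the precise prefactor $288\sigma_1^2/(\sigma_r\kappa)$, which depends on the balanced-factorization constant $c_3$ in the lifting $\|\hat\Phi - \Phi^*\|_F^2 \le c_3\sigma_1 d^2(\hat Z, Z^*)$, the value of $c_1$ from the admissible step-size regime of Lemma \ref{lem:onestep}, and the constant in the initialization lifting. A secondary concern is certifying iterate convergence: the contraction $\rho<1$ together with a uniform gradient-noise budget implies that $\{\Phi^{(t)}\}$ is Cauchy in $\calB(\Phi^*, 1)$, but this requires a short inductive argument to rule out escape from the basin of attraction before taking the $t\to\infty$ limit in Step 3.
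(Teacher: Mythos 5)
Your proposal follows essentially the same route as the paper: it feeds the restricted strong convexity/smoothness constants from Lemma \ref{lem:RSC} and the gradient bound from Lemma \ref{lem:BG} into the one-step contraction of Lemma \ref{lem:onestep}, iterates to convergence, and lifts the limiting factor distance $d^2(\hat Z, Z^*)$ back to $\|\hat\Phi-\Phi^*\|_F^2$ via a Burer--Monteiro distance comparison (Proposition \ref{prop:dist} in the paper). The only loose ends are the ones you flag yourself: the paper pins down your unspecified lifting constant as $9\sigma_1^2$ (using $\|Z^*\|_2\le 2\sigma_1$ together with the smallness condition $\epsilon^2(n,\delta)\le(1-\rho)\sigma_1^2/(2C_1)$) and takes $24c_1\le 3/4$, so the remaining discrepancies (including the factor of $2$ in reading off $\mu$ and $L$) are pure constant bookkeeping rather than a gap.
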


Now it remains to establish a bound on $\lambda_{\min}(V_n)$ and $\lambda_{\max}(V_n) = \|V_n\|_2$.

\begin{lem}\label{lem:lamVn}
    If
    $$
    n \ge 4\sigma^2 (C_1\sqrt{d_1d_2} + C_2\sqrt{\log(1/\delta)})^2 + \frac{2B}{\lambda_{\min}(\Sigma)K} ,
    $$
    for some fixed constant $C_1, C_2$, with probability at least $1-\delta$ we have $\lambda_{\min}(V_n) \ge B$ and
    $$
    \frac{\|V_n\|_2}{n} \ge \frac{1}{2} K\|\Sigma\|_2 .
    $$
\end{lem}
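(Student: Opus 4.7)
The plan is to recognize $V_n$ as a sum of iid random matrices with known mean $nK\Sigma$, and then apply a sub-Gaussian matrix concentration inequality. First I would write $V_n = \sum_{t=1}^n Z_t$ with $Z_t := \sum_{i \in S_t} x_{it} x_{it}^\top$. Identifying $x_{it}$ with $\vect(p_i q_t^\top) = q_t \otimes p_i$ and using the Kronecker identity $(a\otimes b)(a\otimes b)^\top = (aa^\top)\otimes(bb^\top)$, the summand factors as $Z_t = (q_t q_t^\top)\otimes P_t$, where $P_t := \sum_{i\in S_t} p_i p_i^\top$. In the exploration phase the pairs $(q_t,S_t)$ are iid: $q_t$ is drawn from the user distribution, and $S_t$ is drawn uniformly among size-$K$ subsets of $[N]$ independently of $q_t$. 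Since each item lies in $S_t$ with probability $K/N$ by symmetry, $\bbE[P_t] = (K/N)\sum_{i=1}^N p_i p_i^\top = K\Sigma_p$, whence $\bbE[Z_t] = \Sigma_q\otimes(K\Sigma_p) = K\Sigma$ and $\bbE[V_n] = nK\Sigma$.

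Next I would bound the deviation $\|V_n - nK\Sigma\|_2$ using a sub-Gaussian covariance concentration inequality (e.g.\ Koltchinskii–Lounici, or Theorem 4.7.1 of Vershynin's high-dimensional probability textbook). By Assumption~\ref{assm:user_feature} the whitened vector $\tilde q_t := \Sigma_q^{-1/2} q_t$ is $\sigma$-sub-Gaussian with identity covariance, and by Assumption~\ref{assm:item_feature} the item features are uniformly bounded. To isolate the two sources of randomness I would use the decomposition
$$
Z_t - K\Sigma \;=\; (q_t q_t^\top - \Sigma_q)\otimes P_t \;+\; \Sigma_q \otimes (P_t - K\Sigma_p),
$$
control the first term via a sub-Gaussian covariance tail bound applied after conjugating by $\Sigma_q^{1/2}\otimes I$ (this contributes the $\sqrt{d_1 d_2}$ dimension factor in the $d_1 d_2$-dimensional ambient space), and control the second term via matrix Bernstein using $\|P_t - K\Sigma_p\|_2 \le 2KS_p^2$ (this contributes only $\sqrt{\log(d_1 d_2)}$ dimension factors that are absorbable). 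Combining these bounds yields, with probability at least $1-\delta$,
$$
\|V_n - nK\Sigma\|_2 \;\le\; K\lambda_{\min}(\Sigma)\,\sigma\bigl(C_1\sqrt{n d_1 d_2} + C_2\sqrt{n\log(1/\delta)}\bigr)
$$
for absolute constants $C_1, C_2$.

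By Weyl's inequality the deviation bound translates into $\lambda_{\min}(V_n)\ge nK\lambda_{\min}(\Sigma) - \|V_n - nK\Sigma\|_2$ and $\lambda_{\max}(V_n) \ge nK\|\Sigma\|_2 - \|V_n - nK\Sigma\|_2$. The hypothesis on $n$ is engineered so that the deviation is at most $nK\lambda_{\min}(\Sigma)/2$: the first summand $4\sigma^2(C_1\sqrt{d_1 d_2} + C_2\sqrt{\log(1/\delta)})^2$ absorbs the concentration term, and the second summand $2B/(\lambda_{\min}(\Sigma)K)$ is exactly what is needed to force $nK\lambda_{\min}(\Sigma)/2 \ge B$. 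Since $\lambda_{\min}(\Sigma)\le \|\Sigma\|_2$, the same deviation bound also gives $\lambda_{\max}(V_n)\ge nK\|\Sigma\|_2/2$, completing both claims.

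The main obstacle is to pin down the constants $C_1, C_2$ in the sub-Gaussian concentration step cleanly, because the summand $Z_t$ couples continuous sub-Gaussian randomness from $q_t$ with discrete uniform-subset randomness from $S_t$. The decomposition above is the key device: it separates the two sources so that each reduces to a well-known ingredient, and it keeps the sub-Gaussian parameter $\sigma$ of Assumption~\ref{assm:user_feature} visible in the final dimension-$\sqrt{d_1 d_2}$ term, which is exactly the form required by the sample-complexity expression in the lemma.
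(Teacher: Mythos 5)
Your proposal is correct in its overall strategy and reaches the stated conclusion, but it takes a genuinely different route from the paper. The paper fully whitens the joint feature, working with $A^\top A=\Sigma^{-1/2}V_n\Sigma^{-1/2}$ and proving the \emph{relative} bound $\|\tfrac{1}{nK}A^\top A-I\|_2\le\varepsilon$ via a $\tfrac14$-net of $S^{d_1d_2-1}$ combined with a scalar Bernstein inequality for the sub-exponential quadratic forms $\sum_{i\in S_t}(x_i^\top Vy_t)^2$, treating the randomness of $q_t$ and $S_t$ jointly; the eigenvalue bounds then follow by un-whitening. You instead keep $V_n$ un-whitened, split the deviation via the Kronecker decomposition $Z_t-K\Sigma=(q_tq_t^\top-\Sigma_q)\otimes P_t+\Sigma_q\otimes(P_t-K\Sigma_p)$, and handle the two sources of randomness with two separate tools before applying Weyl. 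This separation is clean and makes the $\sqrt{d_1d_2}$ versus $\sqrt{\log d}$ contributions transparent, but two caveats are worth recording. First, the term $\sum_t(q_tq_t^\top-\Sigma_q)\otimes P_t$ is not literally an instance of a sub-Gaussian covariance estimation theorem (Koltchinskii--Lounici or Vershynin 4.7.1), because each centered covariance increment is tensored with a \emph{random, $t$-dependent} bounded weight $P_t$; you must rerun the $\varepsilon$-net plus sub-exponential Bernstein argument with those weights (using $\|P_t\|_2\le KS_p^2$ and independence of $q_t$ from $S_t$), which is essentially the paper's argument in disguise. Second, because you never whiten the $p$-factor, your additive deviation bound naturally scales with $\|\Sigma\|_2$ and the raw feature bounds rather than with $\lambda_{\min}(\Sigma)$, so writing it with the prefactor $K\lambda_{\min}(\Sigma)\sigma$ forces $C_1,C_2$ to absorb a condition-number factor of $\Sigma$; the paper's whitened argument yields constants depending only on $K$, $\sigma$ and the whitened feature bound $S_v$. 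Since the lemma only asserts existence of ``some fixed constants,'' this costs sharpness rather than correctness.
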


Combining Lemma \ref{lem:intermediate} and Lemma \ref{lem:lamVn}, we have Proposition 1. In the remaining part of the section, we provide proof of the Lemmas.

\subsubsection{Proof of Lemma \ref{lem:RSC}}

For proof of Lemma \ref{lem:RSC}, we use Taylor expansion and bounds on the second moment of the loss function. From the definition of the loss function, We can easily check that
$$
\nabla^2 {\calL}_n (\phi) = \frac{1}{n} \sum_{t=1}^n \left[ \sum_{i\in S_t} p_t(i|\phi) x_{it} x_{it}^\top - \sum_{i\in S_t} \sum_{j\in S_t} p_t(i|\phi) p_t(j|\phi) x_{it} x_{jt}^\top \right] .
$$

Also note that from Taylor expansion we have:
\begin{equation} \label{eqn:hessian}
{\calL}_n(Y) - {\calL}_n(X) - \langle \nabla {\calL}_n (X), Y-X \rangle = (y-x)^\top \nabla^2 {\calL}_n(\bar{\phi}) (y-x) ,
\end{equation}

where $x = \vect(X), y = \vect(Y)$, $\bar{\phi} = cx + (1-c)y$ for some $c\in (0,1)$. Also note that for any vector $a,b$, $(a-b)(a-b)^\top \succeq 0$ and therefore:
\begin{equation}\label{eqn:vectorineq}
     aa^\top + bb^\top \succeq ab^\top + ba^\top.
\end{equation}

Using this fact we have:
\begin{align*}
     & \sum_{i\in S_t} p_t(i|\phi) x_{it} x_{it}^\top - \sum_{i\in S_t} \sum_{j\in S_t} p_t(i|\phi) p_t(j|\phi) x_{it} x_{jt}^\top \\
     =& \sum_{i\in S_t} p_t(i|\phi) x_{it} x_{it}^\top - \frac{1}{2} \sum_{i\in S_t} \sum_{j\in S_t} p_t(i|\phi) p_t(j|\phi) (x_{it} x_{jt}^\top + x_{jt} x_{it}^\top) \\
     \succeq & \sum_{i\in S_t} p_t(i|\phi) x_{it} x_{it}^\top - \frac{1}{2} \sum_{i\in S_t} \sum_{j\in S_t} p_t(i|\phi) p_t(j|\phi) (x_{it} x_{it}^\top + x_{jt} x_{jt}^\top) &(\because (\ref{eqn:vectorineq}))\\
     = & \sum_{i\in S_t} p_t(i|\phi) x_{it} x_{it}^\top - \sum_{i\in S_t} \sum_{j\in S_t} p_t(i|\phi) p_t(j|\phi) x_{it} x_{it}^\top \\
     = & \sum_{i\in S_t} p_t(i|\phi) p_t(0|\phi) x_{it}x_{it}^\top . & \left(\because 1 - \sum_{j\in S_t} p_t(j|\phi) = p_t(0|\phi) \right)
\end{align*}

Now assume $\bar\phi \in \calB(\phi^*, 1)$, and suppose $p_t(i|\phi) p_t(0|\phi) \ge \kappa$ for every $t$ and $\phi \in \calB(\phi^*, 1)$. Applying the inequality above to (\ref{eqn:hessian}), we have:
\begin{align*}
    &(y-x)^\top \nabla^2 {\calL}_n(\bar{\phi}) (y-x) \\
    \ge& (y-x)^\top \left( \frac{1}{n}  \sum_{t=1}^n \sum_{i\in S_t} p_t(i|\bar{\phi}) p_t(0|\bar\phi) x_{it} x_{it}^\top \right) (y-x) \\
    \ge& (y-x)^\top \frac{\kappa}{n} V_n (y-x) &\left( \because V_n = \sum_{t=1}^n\sum_{i\in S_t} x_{it} x_{it}^\top \right) \\
    \ge& \frac{\kappa}{n} \lambda_{\min}(V_n) \|y-x\|^2 \\
    =& \frac{\kappa}{n} \lambda_{\min}(V_n) \|Y-X\|_F^2 , & (\because \vect(X) = x, \vect(Y) = y)
\end{align*}

which gives the left side of the inequality. Now also consider that:
\begin{align*}
    & \sum_{i\in S_t} p_t(i|\phi) x_{it} x_{it}^\top - \sum_{i\in S_t} \sum_{j\in S_t} p_t(i|\phi) p_t(j|\phi) x_{it} x_{jt}^\top \\
    =& \sum_{i\in S_t} p_t(i|\phi) x_{it} x_{it}^\top - \left(\sum_{i\in S_t} p_t(i|\phi) x_{it}\right) \left(\sum_{i\in S_t} p_t(i|\phi) x_{it}\right)^\top  \\
    \preceq& \sum_{i\in S_t} p_t(i|\phi) x_{it} x_{it}^\top \\
    \preceq & \sum_{i\in S_t} x_{it} x_{it}^\top . &(\because p_t(i|\phi) \le 1)
\end{align*}

Applying the inequality to (\ref{eqn:hessian}), we have:
\begin{align*}
    (y-x)^\top \nabla^2 {\calL}_n(\bar{\phi}) (y-x) &\le (y-x)^\top \left( \frac{1}{n}  \sum_{t=1}^n \sum_{i\in S_t} x_{it} x_{it}^\top \right) (y-x) \\
    &= \frac{1}{n} (y-x)^\top V_n (y-x) \\
    &\le \frac{1}{n} \lambda_{\max}(V_n) \|Y-X\|_F^2 ,
\end{align*}

which completes the proof.

\subsubsection{Proof of Lemma \ref{lem:BG}}

We use the following Bernstein inequality for rectangular matrices from \citep{tropp2012user}:
\begin{lem}[(Bernstein Inequality for Rectangular Matrices, Theorem 1.6 of \citep{tropp2012user}] \label{lem:matrixbernstein}
    Suppose $Z_k \in \bbR^{d_1\times d_2}$ are independent random matrices. Suppose $\bbE Z_k = 0$ and $\|Z_k\|_2 \le R$ almost surely. Define
    $$
    \sigma^2 := \max \left\{ \left\| \sum_k \bbE (Z_k Z_k^\top) \right\|, \left\|\sum_k \bbE(Z_k^\top Z_k) \right\| \right\} .
    $$
    Then for all $t\ge 0$,
    $$
    \bbP\left\{ \left\| \sum_k Z_k \right\| \ge t \right\} \le (d_1 + d_2) \cdot \exp \left( \frac{-t^2/2}{\sigma^2 + Rt/3} \right) .
    $$
\end{lem}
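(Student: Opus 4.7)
The final statement is the rectangular matrix Bernstein inequality (Theorem 1.6 of \citep{tropp2012user}). My plan is to reduce the rectangular case to the Hermitian case via self-adjoint dilation and then run the matrix Laplace-transform (Chernoff) argument for Hermitian sums, using Lieb's concavity theorem to overcome the loss of scalar commutativity.

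First I would perform the Hermitian dilation. For $Z \in \bbR^{d_1 \times d_2}$, define
\[
H(Z) = \begin{pmatrix} 0 & Z \\ Z^\top & 0 \end{pmatrix} \in \bbR^{(d_1+d_2)\times(d_1+d_2)}.
\]
The map $H$ is linear, $\|H(Z)\|_2 = \|Z\|_2$, and $H(Z)^2 = \mathrm{diag}(ZZ^\top,\, Z^\top Z)$. Consequently $\|\sum_k H(Z_k)\|_2 = \|\sum_k Z_k\|_2$, each $\|H(Z_k)\|_2 \le R$, $\bbE\, H(Z_k) = 0$, and $\|\sum_k \bbE\, H(Z_k)^2\|_2 = \max\{\|\sum_k \bbE\, Z_k Z_k^\top\|_2,\, \|\sum_k \bbE\, Z_k^\top Z_k\|_2\} = \sigma^2$. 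It therefore suffices to prove the Hermitian version in dimension $d := d_1 + d_2$.

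Next I would control $Y := \sum_k H(Z_k)$ by the matrix Chernoff method. For any $\theta > 0$, Markov's inequality gives $\bbP(\lambda_{\max}(Y) \ge t) \le e^{-\theta t}\, \bbE[\mathrm{tr}\, e^{\theta Y}]$, and the master tail inequality derived from Lieb's concavity theorem gives
\[
\bbE\bigl[\mathrm{tr}\, e^{\theta Y}\bigr] \le \mathrm{tr}\, \exp\!\Big(\sum_k \log \bbE\, e^{\theta H(Z_k)}\Big),
\]
which substitutes for the factorization of MGFs unavailable in the non-commutative setting. Each per-summand matrix MGF I would bound by lifting a scalar Bernstein inequality: for a mean-zero Hermitian $M$ with $\|M\|_2 \le R$, the scalar bound $e^x \le 1 + x + (x^2/2)/(1 - x/3)$ valid for $x < 3$ transfers via the spectral mapping principle to
\[
\bbE\, e^{\theta M} \preceq I + \frac{\theta^2/2}{1 - \theta R/3}\, \bbE\, M^2 \preceq \exp\!\Big(\frac{\theta^2/2}{1 - \theta R/3}\, \bbE\, M^2\Big)
\]
for $0 < \theta < 3/R$. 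Monotonicity of the trace exponential under the Löwner order then yields $\bbE[\mathrm{tr}\, e^{\theta Y}] \le d \cdot \exp\!\big(\tfrac{\theta^2 \sigma^2/2}{1 - \theta R/3}\big)$.

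Combining with the Markov bound and optimizing by setting $\theta = t/(\sigma^2 + Rt/3)$ produces $\bbP(\lambda_{\max}(Y) \ge t) \le d \cdot \exp\!\big(-t^2/(2(\sigma^2 + Rt/3))\big)$. Repeating the argument with $-H(Z_k)$, whose bounds on norm and second moment are identical, and taking a union bound promotes the eigenvalue statement to a tail bound on $\|Y\|_2 = \|\sum_k Z_k\|_2$ with the claimed prefactor $d_1 + d_2$. The main obstacle is non-commutativity: because $H(Z_j) H(Z_k) \ne H(Z_k) H(Z_j)$ in general, one cannot simply factor $\bbE\, e^{\theta Y}$ as $\prod_k \bbE\, e^{\theta H(Z_k)}$. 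The replacement tool is Lieb's concavity of $A \mapsto \mathrm{tr}\, \exp(L + \log A)$ on positive definite $A$; a simpler Golden--Thompson / Ahlswede--Winter route is available but inflates the dimensional factor, so the Lieb-based argument is necessary to obtain exactly $d_1 + d_2$ in front of the exponential.
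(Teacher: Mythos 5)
The paper does not prove this lemma at all; it is imported verbatim as Theorem 1.6 of \citet{tropp2012user} and used as a black box in the proof of Lemma \ref{lem:BG}. Your proposal reconstructs Tropp's own argument — Hermitian dilation to reduce to the symmetric case, the matrix Laplace-transform method with Lieb's concavity theorem supplying the subadditivity $\bbE\,\tr e^{\theta Y} \le \tr\exp(\sum_k \log \bbE\, e^{\theta H(Z_k)})$, the per-summand MGF bound from the scalar inequality $e^x \le 1 + x + \frac{x^2/2}{1-x/3}$ via spectral mapping, and the choice $\theta = t/(\sigma^2 + Rt/3)$ — and all of these steps are correct, including the identification $\|\sum_k \bbE\,H(Z_k)^2\|_2 = \sigma^2$ and the optimization computation.

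One small but real slip in your last step: the final union bound over $\pm H(Z_k)$ is both unnecessary and, as written, would give the prefactor $2(d_1+d_2)$ rather than $d_1+d_2$. The point of the dilation is precisely that the spectrum of $H(W)$ consists of $\pm$ the singular values of $W$ (plus zeros), so $\lambda_{\max}\bigl(\sum_k H(Z_k)\bigr) = \lambda_{\max}\bigl(H(\sum_k Z_k)\bigr) = \bigl\|\sum_k Z_k\bigr\|_2$ already; a single application of the $\lambda_{\max}$ tail bound in dimension $d_1+d_2$ therefore controls the operator norm directly, with no union bound and with exactly the claimed constant. With that correction your proof matches the source the paper cites.
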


Let us plug in $Z_t = \sum_{i\in S_t} (y_{it} - p_t(i|\Phi^*)) p_i q_t^\top$ to apply Lemma $\ref{lem:matrixbernstein}$. For brevity, let us denote $\epsilon_{it} = y_{it} - p_t(i|\Phi^*)$. We can easily check that $\bbE \epsilon_{it} = 0$.

Now to verify $R$ in the conditions of Lemma \ref{lem:matrixbernstein}, note that $Z_t = \sum_{i\in S_t} \epsilon_{it} p_i q_t^\top$, therefore $\bbE Z_t = 0$. Now note that $\|p_i\|, \|q_t\| \le S_2$ for every $i \in [N]$ and $t\in [T]$. Note that 
$$
\|p_iq_t^\top v\| = q_t^\top v \|p_i\| \le \|q_t\| \cdot \|v\| \cdot \|p_i\| \le S_2^2 \|v\|.
$$

Therefore $\|p_iq_t^\top\|_2 \le S_2^2$. Now let $i(t)$ be the item selected for user $t$, where $i(t) = 0$ if no items were selected. Also let $\epsilon_{0t} = 0$. Note that $1 \ge \epsilon_{i(t)t} \ge 0$ and $\epsilon_{jt} \le 0$ for $j\ne i(t)$ and moreover $0 \ge \sum_{S(t)\backslash \{i(t)\}} \epsilon_{jt} \ge -1$. Therefore,
\begin{align*}
    \|Z_t\|_2 &= \left\| \sum_{i\in S_t} \epsilon_{it} \cdot p_i q_t^\top \right\|_2 \\
    &\le \epsilon_{i(t),t} \cdot  \| p_{i(t)} q_t^\top \|_2 + \sum_{S(t)\backslash \{i(t)\}} (-\epsilon_{jt}) \cdot \| p_i q_t^\top \|_2 \\
    &\le \epsilon_{i(t),t} S_2^2 + \sum_{S(t)\backslash \{i(t)\}} (-\epsilon_{jt}) \cdot S_2^2 \\
    &\le 2S_2^2,
\end{align*}

which allows us to write $R = 2S_2^2$. Next we verify $\sigma^2$. Note that:
\begin{align*}
    \sum_{t=1}^n \bbE Z_t Z_t^\top &=  \bbE \left[ \sum_{t=1}^n \sum_{i\in S_t} \epsilon_{it}^2 \cdot p_i q_t^\top q_t p_i^\top \right] \\
    &= \sum_{t=1}^n \sum_{i\in S_t} \bbE \epsilon_{it}^2 p_i q_t^\top q_t p_i^\top \\
    &= \sum_{t=1}^n \sum_{i\in S_t} p_t(i|\Phi^*) (1-p_t(i|\Phi^*) p_iq_t^\top q_t p_i^\top ,
\end{align*}
since $\bbE \epsilon_{it}^2 = \Var(\epsilon_{it}) = \Var(y_{it}) = (\bbE y_{it})(1- \bbE y_{it})$. Also note that $\|pp^\top v\| = |p^\top v| \cdot \|p\| \le \|p\|^2 \cdot \|v\|$, so $\|pp^\top\|_2 = \|p\|^2$. Therefore,
\begin{align*}
    \left\| \sum_{t=1}^n \bbE Z_t Z_t^\top \right\|_2 &= \left\| \sum_{t=1}^n \sum_{i\in S_t} p_t(i|\Phi^*) (1-p_t(i|\Phi^*) (q_t^\top q_t) p_i p_i^\top \right\|_2 \\
    &\le \sum_{t=1}^n \sum_{i\in S_t} p_t(i|\Phi^*) (1-p_t(i|\Phi^*)) \|q_t\|_2^2 \|p_ip_i^\top \|_2 \\
    &\le \frac{1}{4} \sum_{t=1}^n \sum_{i\in S_t} \|p_i\|_2^2 \|q_t\|_2^2 &(\because p(1-p) \le 1/4)\\
    &\le \frac{nK}{4} S_2^4. &(\because \|p_i\|_2, \|q_t\|_2 \le S_2)
\end{align*}

and similarly we have:
$$
\left\| \sum_{t=1}^n \bbE Z_t^\top Z_t \right\|_2 \le \frac{nK}{4}S_2^4 .
$$

Now applying the Lemma \ref{lem:matrixbernstein} with $R= 2S_2^2$ and $\sigma^2 = \frac{nK}{4}S_2^4$ we have:
$$
\bbP \left( \left\|\frac{1}{n} \sum_{t=1}^n Z_t \right\|_2 \ge t \right) \le (d_1 + d_2) \exp \left(-\frac{6nt^2}{S_2^2(3KS_2^2 + 8t)} \right) .
$$

Let $\eta = \log\frac{d_1+d_2}{\delta}$ for some $\delta \in (0,1)$. Assuming $n \ge \frac{64}{9} K \eta^2$, we have $\frac{3}{8} K S_2^2 \ge \sqrt{\frac{K\eta}{n}} S_2^2$. Now select $t = \sqrt{\frac{K\eta}{n}} S_2^2 \le \frac{3}{8}KS_2^2$. Then we have $8t \le 3KS_2^2$ and $nt^2 \ge K\eta S_2^4$. Therefore we have
\begin{align*}
\bbP \left( \left\|\frac{1}{n} \sum_{t=1}^n Z_t \right\|_2 \ge t \right) &\le (d_1 + d_2) \exp \left(-\frac{6nt^2}{S_2^2(3KS_2^2 + 8t)} \right) \\ 
&\le (d_1 + d_2) \exp(-\eta) = \delta .
\end{align*}
Therefore,
$$
\left\| \frac{1}{n} \sum_{t=1}^n \sum_{i\in S_t} (y_{it} - p_t(i|\Phi^*)) p_i q_t^\top \right\|_2 \le \sqrt{\frac{K \log(\frac{d_1+d_2}{\delta})}{n}} S_2^2 = \epsilon(n,\delta) ,
$$
with probability at least $1-\delta$. This completes the proof of Lemma \ref{lem:BG}.

\subsubsection{Proof of Lemma \ref{lem:intermediate}}

Before moving on to the main proof of Lemma \ref{lem:intermediate}, we first state the modified result from \citep{wang2017unified} and Lemma \ref{lem:onestep}.

\begin{prop}\label{prop:dist}
    For $i =1,2$, let $U_i \in \bbR^{d_1\times r}$, $V_i \in \bbR^{d_2\times r}$ and $X_i = U_i V_i^\top$. Now denote
    $$
    Z_1 = \begin{pmatrix} U_1\\ V_1 \end{pmatrix}, \quad 
    Z_2 = \begin{pmatrix} U_2 \\ V_2 \end{pmatrix} .
    $$
    Then,
    $$
    \|X_1 - X_2\|_F^2 \le 2 (\|Z_2\|_2 + d(Z_1, Z_2))^2 d^2(Z_1, Z_2) .
    $$
\end{prop}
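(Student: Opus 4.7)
}

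My plan is to reduce to a convenient rotation, decompose $X_1 - X_2$ into two pieces that each mix differences in one factor with an absolute value of the other, and then apply a Frobenius--operator norm inequality.

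First, I would observe that $X_i = U_i V_i^\top$ is invariant under the simultaneous rotation $(U_i, V_i) \mapsto (U_i R, V_i R)$ for any $R \in \mathbb{Q}_r$, since $R R^\top = I_r$. Let $R^*$ be the rotation achieving the infimum in $d(Z_1, Z_2)$, i.e., $\|Z_1 - Z_2 R^*\|_F = d(Z_1, Z_2)$. Replacing $U_2, V_2$ by $U_2 R^*, V_2 R^*$ (which preserves $X_2$ and both $\|Z_2\|_2$ and $d(Z_1, Z_2)$), I may assume without loss of generality that $\|Z_1 - Z_2\|_F = d(Z_1, Z_2)$.

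Next, I would use the additive decomposition
\[
X_1 - X_2 = U_1 V_1^\top - U_2 V_2^\top = (U_1 - U_2) V_1^\top + U_2 (V_1 - V_2)^\top .
\]
Applying $\|A+B\|_F^2 \le 2\|A\|_F^2 + 2\|B\|_F^2$ together with $\|AB^\top\|_F \le \|A\|_F \cdot \|B\|_2$ yields
\[
\|X_1 - X_2\|_F^2 \le 2 \|U_1 - U_2\|_F^2 \, \|V_1\|_2^2 + 2 \|U_2\|_2^2 \, \|V_1 - V_2\|_F^2 .
\]
I would then bound both operator norms uniformly: since $\|U_2\|_2 \le \|Z_2\|_2$ and
$\|V_1\|_2 \le \|Z_1\|_2 \le \|Z_2\|_2 + \|Z_1 - Z_2\|_2 \le \|Z_2\|_2 + d(Z_1, Z_2)$,
both $\|V_1\|_2^2$ and $\|U_2\|_2^2$ are at most $(\|Z_2\|_2 + d(Z_1, Z_2))^2$.

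Finally, pulling out this common factor and invoking the block identity
\[
\|U_1 - U_2\|_F^2 + \|V_1 - V_2\|_F^2 = \|Z_1 - Z_2\|_F^2 = d^2(Z_1, Z_2)
\]
gives $\|X_1 - X_2\|_F^2 \le 2 (\|Z_2\|_2 + d(Z_1, Z_2))^2 \, d^2(Z_1, Z_2)$, as claimed. The only subtle point is the choice to pair $(U_1 - U_2)$ with $V_1^\top$ and $U_2$ with $(V_1 - V_2)^\top$: the asymmetry lets me invoke the sharper bound $\|U_2\|_2 \le \|Z_2\|_2$ on one term while paying the $d(Z_1, Z_2)$ penalty only on the other, and together with the $(a+b)^2 \le 2a^2 + 2b^2$ splitting this delivers exactly the stated constant. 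I expect no substantial obstacle beyond keeping track of which factor gets which norm.
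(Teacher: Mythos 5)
Your proof is correct and follows essentially the same route as the paper: a telescoping decomposition of $U_1V_1^\top - U_2V_2^\top$ into two cross terms, the bound $\|AB\|_F \le \|A\|_F\|B\|_2$, the inequality $(a+b)^2 \le 2(a^2+b^2)$, and the estimates $\|U_2\|_2, \|V_1\|_2 \le \|Z_2\|_2 + d(Z_1,Z_2)$. The only (immaterial) differences are that you fix the optimal rotation up front rather than carrying an arbitrary $R\in\mathbb{Q}_r$ and infimizing at the end, and you use the mirror-image pairing of the telescoped terms.
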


\begin{proof}
    Note that for two positive definite matrices $A,B$ with eigenvalue decomposition $A = P\Lambda P^\top$ and $B = QDQ^\top$, 
    \begin{align*}
        \tr(AB) &= \tr(P \Lambda P^\top QDQ^\top) \\
        &= \tr(Q^\top P \Lambda P^\top Q D) &(\because \tr(AB) = \tr(BA))\\
        &\le \tr(Q^\top P \Lambda P^\top Q) \cdot \|D\|_2 &(\because \tr(AB) \le \tr(A)\cdot \|B\|_2)\\
        &= \tr(\Lambda) \cdot \|D\|_2 \\
        &= \tr(A) \cdot \|B\|_2 .
    \end{align*}

    This implies:
    \begin{align*}
        \|AB\|_F^2 &= \tr(ABB^\top A^\top) \\
        &= \tr(BB^\top A^\top A) \\
        &\le \tr(BB^\top) \|A^\top A\|_2 \\
        &= \|B\|_F^2 \|A\|_2^2
    \end{align*}
    
    and therefore $\|AB\|_F \le \|B\|_F \|A\|_2$. Now consider $R \in \mathbb{Q}_r$. Note that:
    \begin{align*}
    \|U_1(V_1- V_2R)^\top \|_F^2 &= \tr(U_1 (V_1 - V_2R)^\top (V_1 - V_2R) U_1^\top) \\
    &= \tr((V_1-V_2R)^\top (V_1 - V_2R) U_1^\top U_1) \\
    &\le \|U_1^\top U_1\|_2 \cdot \tr((V_1 - V_2R)^\top (V_1-V_2R)) \\
    &=  \|U_1\|_2^2 \|V_1 - V_2R\|_F^2 .
    \end{align*}
    Similarly we have $\|(U_1 - U_2R) R^\top V_2^\top\|_F^2 \le \|V_2\|_2^2 \|U_1 - U_2R\|_F^2$. Now note that for arbitrary vector $v$,
    $$
    \|Z_iv\|_2^2 = \|U_iv\|_2^2 + \|V_iv\|_2^2 \ge \|U_iv\|_2^2 .
    $$
    
    So we have $\|Z_i\|_2 \ge \|U_i\|_2$ and similarly $\|Z_i\|_2 \ge \|V_i\|_2$. Now note that:
    \begin{align*}
    \|Z_1\|_2 &\le \|Z_1 - Z_2R\|_2 + \|Z_2R\|_2 \\
    &= \|Z_1 - Z_2R \|_2 + \|Z_2\|_2 &(\because R \in \mathbb{Q}_r)\\
    &\le \|Z_1 - Z_2R\|_F + \|Z_2\|_2 . &(\because \|A\|_2 \le \|A\|_F)
    \end{align*}

    And this holds for arbitrary $R \in \mathbb{Q}_r$. Therefore, 
    \begin{equation} \label{eqn:rotineq}
    \|Z_1\|_2 \le \|Z_2\|_2 + d(Z_1, Z_2) .
    \end{equation}
    
   And as $(a+b)^2\le a^2+b^2$, we have $\|Z_1\|_2^2 \le 2(\|Z_2\|_2^2 + d(Z_1, Z_2))$. Therefore we have:
    \begin{align*}
        \|X_1 - X_2\|_F^2 &= \|U_1 V_1^\top - U_2 V_2^\top \|_F^2 \\
        &= \|U_1 V_1^\top - U_1 R^\top V_2^\top + U_1 R^\top V_2^\top - U_2 R R^\top V_2^\top \|_F^2 \\
        &\le (\|U_1(V_1 - V_2R)^\top \|_F+ \|(U_1 - U_2R) R^\top V_2^\top \|_F)^2 \\
        &\le 2(\|U_1(V_1 - V_2R)^\top \|_F^2 + \|(U_1 - U_2R)R^\top V_2^\top\|_F^2) &(\because (a+b)^2 \le 2(a^2+b^2))  \\
        &\le 2 (\|V_2\|_2^2\|U_1 - U_2R\|_F^2 + \|U_1\|_2^2\|V_1 - V_2R\|_F^2) &(\because \|AB\|_F^2 \le \|A\|_F\|B\|_2)\\
        &\le 2 (\|Z_2\|_2^2\|U_1 - U_2R\|_F^2 + \|Z_1\|_2^2\|V_1 - V_2R\|_F^2) \\
        &\le 2 (\|Z_2\|_2 + d(Z_1, Z_2))^2 (\|U_1 - U_2R\|_F^2 + \|V_1 - V_2R\|_F^2) &(\because (\ref{eqn:rotineq})\\
        &= 2 (\|Z_2\|_2 + d(Z_1, Z_2))^2  \|Z_1 - Z_2R\|_F^2 .
    \end{align*}
    Since this holds for arbitrary $R \in \mathbb{Q}_r$, The statement holds. 
\end{proof}

Now we can finally prove Lemma \ref{lem:intermediate}.

\begin{proof}
    Assume $d(Z^{(0)}, Z^*) \le \kappa_1$ and $C_1 \epsilon(n,\delta) \le (1-\rho) \kappa_1$. Then by induction and using \ref{lem:onestep}, we can check that $d(Z^{(t)}, Z^*) \le \kappa_1$ for every $t\ge 0$. Also we have:
    \begin{align*}
    d^2(Z^{(t)}, Z^*) &\le \rho^t d^2(Z^{(0)}, Z^*) + \frac{C_1(1-\rho^t)}{1-\rho} \epsilon^2(n,\delta) \\
    &\le \rho^t \kappa_1^2 + \frac{C_1}{1-\rho} \epsilon^2(n,\delta) .
    \end{align*}

    Note that $\|Z^*\|_2 \le 2\|\Phi^*\|_2 = 2\sigma_1$. Using Proposition \ref{prop:dist}, we have:
    \begin{align*}
        \|\Phi^{(t)} - \Phi^*\|_F^2 &\le 2(2\sigma_1 + d(Z^{(t)}, Z^*))^2 d^2(Z^{(t)}, Z^*)\\
        &\le (8\sigma_1^2 + 2 d^2(Z^{(t)}, Z^*)) d^2(Z^{(t)}, Z^*) . &(\because (a+b)^2 \le 2(a^2+b^2))
    \end{align*}
    
    As $t\to \infty$, we have:
    $$
    \|\hat{\Phi} - \Phi^*\|_F^2 \le \left( 8\sigma_1^2 + \frac{2C_1}{1-\rho} \epsilon^2(n,\delta) \right) \cdot \frac{C_1}{1-\rho} \epsilon^2(n,\delta) .
    $$

    Now assume $\epsilon^2(n,\delta) \le \frac{(1-\rho)\sigma_1^2}{2C_1}$. Then we have:
    $$
    \|\hat\Phi - \Phi^*\|_F^2 \le \frac{9\sigma_1^2 C_1}{1-\rho} \epsilon^2(n,\delta) .
    $$

    Now note that $C_1 = r(48\delta^2\sigma_1 + 2\delta(8/\mu + 1/L))$. So we have:
    $$
    \|\hat\Phi - \Phi^*\|_F^2 \le \frac{9\sigma_1^2}{1-\rho}\left(48\delta^2 \sigma_1 + 2\delta\left( \frac{8}{\mu} + \frac{1}{L}\right) \right) r \epsilon^2(n,\delta) .
    $$

    Plugging in the values of $\delta = c/\sigma_1$ and $\rho = 1 - \delta\mu\sigma_4/16$ we have:
    $$
    \|\hat\Phi - \Phi^*\|_F^2 \le 9\sigma_1^2 \cdot \frac{32}{\mu\sigma_r} \left( 24c_1 + \frac{8}{\mu} + \frac{1}{L} \right) r \epsilon^2(n,\delta) .
    $$

    Now note that the constants are defined as:
    \begin{align*}
        \mu &= \frac{\kappa}{n} \lambda_{\min}(V_n) \\
        L &= \frac{1}{n} \lambda_{\max}(V_n) \\
        \epsilon(n,\delta) &= \sqrt{\frac{K\log(\frac{d_1+d_2}{\delta})}{n}} S_2^2 \\
        c_1 &\le \min \left( \frac{1}{32}, \frac{\mu}{192L^2} \right) \le \frac{1}{32} .
    \end{align*}

    Plugging in the values we achieve:
    \begin{align*}
    \|\hat\Phi - \Phi^*\|_F^2 &\le \frac{288\sigma_1^2}{\sigma_r} \cdot \frac{n}{\kappa \lambda_{\min}(V_n)} \cdot \left( 24 c_1 + \frac{8n}{\kappa \lambda_{\min}(V_n)} + \frac{n}{\lambda_{\max}(V_n)} \right) r \cdot \frac{K\log(\frac{d_1+d_2}{\delta})}{n} S_2^4 \\
    &= \frac{288\sigma_1^2}{\sigma_r\cdot \kappa} \cdot \left( \frac{3}{4} + \frac{8n}{\kappa \lambda_{\min}(V_n)} + \frac{n}{\lambda_{\max}(V_n)} \right) \frac{Kr\log(\frac{d_1+d_2}{\delta})}{\lambda_{\min}(V_n)} S_2^4 ,
    \end{align*}
    which completes the proof.

\end{proof}

\subsubsection{Proof of Lemma \ref{lem:lamVn}}

For proof of Lemma \ref{lem:lamVn}, we modify results from \citet{vershynin2010introduction} about concentration inequalities of random matrices with sub-gaussian rows. 

Note that from Assumption 1, $q_t$ are i.i.d. where $\bbE [q_t q_t^\top] = \Sigma_q$. Also note that $S_t$ are uniformly sampled from $[N]$ with size $K$ in the exploration stage, where $S_t$'s and $q_t$'s are independent. As $S_t$ are sampled uniformly, we have $\bbE \sum_{i\in S_t} p_i p_i^\top = \frac{K}{N} \sum_{i=1}^N p_i p_i^\top =: K \Sigma_p$. Now note that:
$$
\bbE \left[ \sum_{t=1}^n \sum_{i\in S_t} x_{it} x_{it}^\top \right] = nK \Sigma_q \otimes \Sigma_p =: nK \Sigma .
$$

Define $a_{it} = \Sigma^{-1/2} x_{it}$. Consider the matrix $A \in \bbR^{nK\times d_1d_2}$ with $a_{it}$ as its rows. Note that $A^\top A = \sum_{t}\sum_i a_{it} a_{it}^\top = \Sigma^{-1/2} V_n \Sigma^{-1/2}$. Instead of analyzing $V_n$ directly, we first claim the following concentration inequality:
$$
\left\| \frac{1}{nK} A^\top A - I \right\| \le \max(\delta, \delta^2) =: \epsilon .
$$

We use the following lemma from \citet{vershynin2010introduction}.

\begin{lem}[Lemma 5.4 of \citet{vershynin2010introduction}]\label{netlem}
    Let $A$ be a symmetric $n\times n$ matrix and let $\calN_\epsilon$ be an $\epsilon$-net of $S^{n-1}$ for some $\epsilon \in [0,1)$. Then, 
    $$
    \|A\| = \sup_{x\in S^{n-1}} |\langle Ax, x \rangle| \le (1-2\epsilon)^{-1} \sup_{x\in \calN_\epsilon} |\langle Ax, x \rangle| .
    $$
\end{lem}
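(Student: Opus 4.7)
The plan is to combine two standard ingredients: the variational representation of the operator norm for symmetric matrices, and a one-step approximation on the unit sphere using the $\epsilon$-net. First I would establish the equality $\|A\| = \sup_{x \in S^{n-1}} |\langle Ax, x\rangle|$ via the spectral theorem. Writing $A = \sum_i \lambda_i v_i v_i^\top$ with orthonormal eigenvectors $\{v_i\}$, the quadratic form evaluates to $\sum_i \lambda_i \langle v_i, x\rangle^2$, whose absolute value is maximized on $S^{n-1}$ at the eigenvector attaining $\max_i |\lambda_i| = \|A\|$.

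Next, to prove the inequality, I would fix an arbitrary $x \in S^{n-1}$ and pick an approximation $y \in \calN_\epsilon$ with $\|x - y\|_2 \le \epsilon$. By bilinearity of the quadratic form,
$$
\langle Ax, x\rangle - \langle Ay, y\rangle = \langle A(x-y), x\rangle + \langle Ay, x-y\rangle.
$$
Each term on the right is bounded in absolute value by $\|A\| \cdot \|x - y\|_2 \le \epsilon \|A\|$ via Cauchy-Schwarz, using $\|x\|_2 = 1$ and $\|y\|_2 = 1$ (since $\calN_\epsilon \subset S^{n-1}$). Hence $|\langle Ax, x\rangle| \le |\langle Ay, y\rangle| + 2\epsilon \|A\|$ for every $x \in S^{n-1}$.

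Finally, taking the supremum over $x \in S^{n-1}$ on the left-hand side yields
$$
\|A\| \;=\; \sup_{x \in S^{n-1}} |\langle Ax, x\rangle| \;\le\; \sup_{y \in \calN_\epsilon} |\langle Ay, y\rangle| + 2\epsilon \|A\|,
$$
and rearranging $(1-2\epsilon)\|A\| \le \sup_{y \in \calN_\epsilon} |\langle Ay, y\rangle|$ followed by dividing by the positive factor $1 - 2\epsilon$ gives the claimed bound. There is no substantive obstacle here; the argument is a textbook $\epsilon$-net estimate, and the only subtlety is the convention that $\calN_\epsilon$ is taken as a subset of the sphere itself, so that the cross-term bound simplifies to $\epsilon\|A\|$ (the argument extends essentially unchanged, with an adjustment of constants, if $\calN_\epsilon$ lies in the unit ball).
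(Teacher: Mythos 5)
Your proof is correct and is essentially the standard argument from Vershynin's notes, which the paper simply cites without reproving: the spectral-theorem identity for $\|A\|$, the bilinear decomposition $\langle Ax,x\rangle-\langle Ay,y\rangle=\langle A(x-y),x\rangle+\langle Ay,x-y\rangle$, the $2\epsilon\|A\|$ cross-term bound, and the rearrangement. One small caveat: the final division requires $1-2\epsilon>0$, i.e.\ $\epsilon<1/2$ (the range $[0,1)$ in the statement as transcribed here is a typo for $[0,1/2)$; the paper only ever applies the lemma with $\epsilon=1/4$, so nothing downstream is affected).
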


Applying this lemma, it suffices to show
$$
\max_{v\in \calN} \left| \frac{1}{NK} \|Av\|_2^2 - 1 \right| \le \frac{\epsilon}{2} ,
$$
where $\calN$ is the $\frac{1}{4}$-net of sphere $S^{d_1d_2-1}$. Now note that
\begin{align*}
    \|Av\|_2^2 &= \sum_{t=1}^n \sum_{i\in S_t} (a_{it}^\top v)^2 \\
    &= \sum_{t=1}^n \sum_{i\in S_t} \left(x_{it} \Sigma^{-1/2} v \right)^2 &(\because \text{Definition of }a_{it})\\
    &= \sum_{t=1}^n \sum_{i\in S_t} \langle p_i q_t^\top, U \rangle^2 \\
    &= \sum_{t=1}^n \sum_{i\in S_t} (p_i^\top U q_t)^2 ,
\end{align*}
where $U\in \bbR^{d_2\times d_1}$ with $\vect(U) = \Sigma^{-1/2}v$. Now note that:
$$
\vect(U) = \Sigma^{-1/2} v = (\Sigma_q^{-1/2} \otimes \Sigma_p^{-1/2}) v = \vect( \Sigma_p^{-1/2} V \Sigma_q^{-1/2}) ,
$$
where $\vect(V) = v$. Therefore:
$$
p_i^\top U q_t = \left( \Sigma_p^{-1/2} p_i \right)^\top V \left(\Sigma_q^{-1/2} q_t \right) = x_i^\top V y_t .
$$

To show this we first show that these sums are actually sub-exponential random variables. Suppose $y_t$'s are sub-Gaussian random vector with $\|y_t\|_{\psi_2} \le \sigma$. For $p\ge 1$ and $v\in S^{d_1-1}$, $\| v^\top y_t \|_{\psi_2} \le \|y_t\|_{\psi_2} \le \sigma$. Note that $2p \ge 1$. By definition of $\|\cdot\|_{\psi_2}$, we have $(\bbE \|v^\top y_t\|^{2p})^{1/2p}/{\sqrt{2p}} \le \sigma$ which gives
\begin{equation} \label{eqn:subgaussian}
    \frac{(\bbE\|v^\top y_t\|^{2p})^{1/p}}{p} \le 2\sigma^2
\end{equation}

Also fix $S\subset [N]$ as a set of size $K$ independent of $u$. Moreover assume that $\max_{i\in [N]} \|x_i\| \le S_v < \infty$ and note that $\|v\|_2^2 = \|V\|_F^2 =1$ as $v\in \calN$. Then,
\begin{align*}
    \sup_{p\ge 1} \frac{\left( \bbE \left[ \sum_{i\in S} (x_i^\top V y_t)^2 \right]^p \right)^{1/p}}{p} 
    &\le \sup_{p \ge 1} \frac{ \left( \bbE \frac{K^p}{K} \sum_{i\in S} (x_i^\top V y_t)^{2p} \right)^{1/p}}{p} &(\because \text{Jensen's Inequality}) \\
    &\le K \sup_{p\ge 1} \frac{\left( \sum_{i\in S} \bbE[x_i^\top V y_t]^{2p} \right)^{1/p}}{p} &(\because K\ge 1)\\
    &\le K \sup_{p\ge 1} \frac{\sum_{i\in S}(\bbE[x_i^\top Vy_t]^{2p})^{1/p}}{p} \\
    &\le K \sum_{i\in S} \sup_{p\ge 1} \frac{(\bbE[x_i^\top Vy_t]^{2p})^{1/p}}{p} \\
    &\le K \sup_{p\ge 1} \left( \bbE \left[ \sum_{i \in S} 2\sigma^2 \cdot \|V^\top x_i\|_2^{2} \right]^p \right)^{1/p} &(\because (\ref{eqn:subgaussian}))\\
    &= 2\sigma^2 K \sup_{p\ge 1} \left(\bbE \left[ \sum_{i\in S} \|V^\top x_i\|_2^{2} \right]^p \right)^{1/p} \\
    &\le 2\sigma^2 K \cdot \max_S \sum_{i\in S} \|V^\top x_i \|_2^{2} \\
    &\le 2\sigma^2 K^2 S_v^2 \|V\|_2^2 \\
    &\le 2\sigma^2 K^2 S_v^2 \|V\|_F^2 \\
    &\le 2\sigma^2 K^2 S_v^2 .
\end{align*}

Therefore $Z_t = \sum_{i\in S} (x_i^\top V y_t)^2$ is sub-exponential with parameter $2\sigma^2 K^2 S_v^2$. Therefore, from Remark 5.18 from \citet{vershynin2010introduction}, centered $Z_t$, i.e., $Z_t - K$ are independent centered sub-exponential variables with parameter $4\sigma^2 K^2 S_v^2$. Now consider the following result for centered sub-exponential random variables:
\begin{lem}[Corollary 5.17 of \citep{vershynin2010introduction}]\label{subexpcon}
    Let $X_1, \ldots, X_N$ be independent centered sub-exponential random variables, and let $K = \max_i \|X_i\|_{\psi_1}$. Then, for every $\epsilon \ge 0$, we have
    $$
    \bbP \left\{ \left| \sum_{i=1}^N X_i \right| \ge \epsilon N \right\} \le 2 \exp \left[ -c\min \left( \frac{\epsilon^2}{K^2}, \frac{\epsilon}{K} \right) N \right] .
    $$
\end{lem}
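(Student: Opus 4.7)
The statement is the classical Bernstein-type tail bound for sums of independent centered sub-exponential random variables, and the plan is to follow the standard Chernoff / exponential-moment strategy. The two-regime form $\min(\epsilon^2/K^2,\epsilon/K)$ will arise naturally from optimizing the Chernoff parameter over a bounded admissible interval, with the Gaussian tail appearing in one regime and the exponential tail in the other.

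The first step is to record a single-variable moment generating function bound: for each centered sub-exponential $X_i$ with $\|X_i\|_{\psi_1}\le K$, I would establish $\bbE[e^{\lambda X_i}]\le \exp(C K^2\lambda^2)$ for all $|\lambda|\le c/K$, where $c,C$ are absolute constants. This follows from the $\psi_1$-norm characterization, which supplies the moment estimate $\bbE|X_i|^k\le (C K)^k k!$ (up to a universal factor); then Taylor-expanding the exponential, using $\bbE X_i=0$ to kill the linear term, and summing a geometric series valid whenever $|\lambda|K$ is small enough yields the quadratic-in-$\lambda$ bound near the origin.

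By independence, tensorizing gives $\bbE[e^{\lambda S_N}]\le \exp(C K^2\lambda^2 N)$ where $S_N=\sum_{i=1}^N X_i$, and Markov's inequality applied to $e^{\lambda S_N}$ produces $\bbP(S_N\ge\epsilon N)\le \exp(-\lambda\epsilon N+C K^2\lambda^2 N)$ for any admissible $\lambda$. The heart of the argument is then to optimize $\lambda\in(0,c/K]$ in two regimes. When $\epsilon$ is small enough that the unconstrained minimizer $\lambda^\star=\epsilon/(2C K^2)$ lies inside the admissible interval, the Gaussian-type tail $\exp(-c_1\epsilon^2 N/K^2)$ emerges. When $\epsilon$ is larger, the optimum sits outside the interval, so one takes the boundary choice $\lambda=c/K$, which produces the exponential-type tail $\exp(-c_2\epsilon N/K)$. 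Combining the two regimes, applying the identical argument to $-S_N$, and taking a union bound over the upper and lower tails gives the factor of $2$ together with the claimed min-form.

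The only real obstacle is keeping the absolute constants consistent at the regime boundary, which is a matter of careful bookkeeping rather than conceptual difficulty. Since the lemma is quoted verbatim from the literature, the most natural route in this paper is to defer to Corollary 5.17 of \citet{vershynin2010introduction}, where the argument is executed with full attention to the constants; the sketch above indicates the steps one would reproduce if a self-contained proof were required.
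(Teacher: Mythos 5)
Your sketch is the standard Chernoff/MGF proof of Bernstein's inequality for sums of independent centered sub-exponential variables, and it is correct as outlined (single-variable MGF bound valid for $|\lambda|\le c/K$, tensorization, two-regime optimization of the Chernoff parameter, and a union bound over the two tails). The paper gives no proof of this lemma at all---it is imported verbatim as Corollary 5.17 of \citet{vershynin2010introduction}, whose proof follows exactly the route you describe, so deferring to that reference, as you suggest, is the intended treatment.
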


Using exponential deviation inequality, and assuming $8\sigma^2 S_v^2 \ge 1$, we have:
\begin{align*}
\bbP \left( \left| \frac{1}{nK} \|Av\|_2^2 - 1  \right| \ge \frac{\epsilon}{2} \right) &= \bbP \left( \left| \frac{1}{n} \sum_{i=1}^n (Z_t - K) \right| \ge \frac{K\epsilon}{2} \right) \\
&\le 2\exp \left[ - c \left( \left(\frac{K\epsilon}{8\sigma^2 K^2 S_v^2} \right)^2  \wedge \left(\frac{K\epsilon}{8\sigma^2 K^2 S_v^2} \right) \right) n \right] \\
&\le 2\exp \left[ - \frac{c}{64\sigma^2 K^2 S_v^4} (\epsilon^2 \wedge 8\sigma^2 S_v^2 \epsilon) n \right] \\
&\le 2 \exp \left[ -\frac{c}{64\sigma^2 K^2 S_v^4} (\epsilon^2 \wedge \epsilon) n \right] \\
&= 2\exp \left[ - \frac{c}{64\sigma^2 K^2 S_v^4} \delta^2 n \right] \\
& \le 2 \exp \left[ - \frac{c}{64\sigma^2 K^2 S_v^4} (C^2 d_1d_2 + t^2) \right] ,
\end{align*}
where $\epsilon = \max(\delta, \delta^2)$ and $\delta = C\sqrt{\frac{d_1d_2}{n}} + \frac{t}{\sqrt{n}}$. Now consider the following lemma about covering number of the sphere. 

\begin{lem}[Lemma 5.2 of \citep{vershynin2010introduction}]\label{coversphere}
    The unit Euclidean sphere $S^{n-1}$ equipped with the Euclidean metric satisfies for every $\epsilon < 0 $ that
    $$
    \calN(S^{n-1}, \epsilon) \le \left( 1 + \frac{2}{\epsilon} \right)^n .
    $$
\end{lem}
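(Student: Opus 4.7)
The plan is to prove this classical covering estimate (with the intended condition $\epsilon > 0$; the ``$\epsilon < 0$'' in the statement is clearly a typo) via the packing--covering duality followed by a one-line volume comparison. The structural idea, standard in metric geometry, is that any \emph{maximal} $\epsilon$-separated subset of a metric space is automatically an $\epsilon$-net; the dimension-dependent exponential then emerges from a volume ratio.

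First, I would fix a maximal $\epsilon$-separated finite subset $\{x_1,\ldots,x_M\} \subset S^{n-1}$: a collection of unit vectors with pairwise Euclidean distances at least $\epsilon$ to which no further point of $S^{n-1}$ can be adjoined while preserving this separation property. Compactness of $S^{n-1}$, together with the a posteriori finiteness coming from the volume argument below, guarantees that such a set exists. Maximality forces this set to be an $\epsilon$-net: for any $y\in S^{n-1}$ there must exist some index $i$ with $\|y-x_i\|_2 < \epsilon$, since otherwise $\{x_1,\ldots,x_M,y\}$ would still be $\epsilon$-separated, contradicting the maximality of the original set. Consequently $\calN(S^{n-1},\epsilon) \le M$.

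Second, I would bound $M$ by volume comparison. By the triangle inequality and the $\epsilon$-separation, the open balls $B(x_i,\epsilon/2) \subset \bbR^n$ are pairwise disjoint; and since each $x_i$ lies on the unit sphere, every such ball is contained in $B(0, 1+\epsilon/2)$. Writing $B_2^n$ for the unit Euclidean ball and using the scaling of Lebesgue measure on $\bbR^n$:
$$M \cdot (\epsilon/2)^n \cdot \mathrm{vol}(B_2^n) \;=\; \sum_{i=1}^M \mathrm{vol}(B(x_i,\epsilon/2)) \;\le\; \mathrm{vol}(B(0,1+\epsilon/2)) \;=\; (1+\epsilon/2)^n \cdot \mathrm{vol}(B_2^n).$$
Cancelling the common factor $\mathrm{vol}(B_2^n)$ and rearranging yields $M \le (1 + 2/\epsilon)^n$, which combined with $\calN(S^{n-1},\epsilon) \le M$ gives the claim.

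There is no real obstacle here --- the only conceptual move is the packing-to-covering reduction via maximality, after which everything reduces to scaling of Lebesgue measure. The resulting constant is loose, but the exponential rate $e^{O(n)}$ is sharp, which is exactly what is needed when discretizing the unit sphere to combine with the sub-exponential concentration inequality of Lemma~\ref{subexpcon} through the $\epsilon$-net reduction of Lemma~\ref{netlem} inside the proof of Lemma~\ref{lem:lamVn}. In that downstream use, a $\tfrac14$-net suffices, so $\epsilon = 1/4$ will be plugged in and the bound contributes a factor of $9^{d_1 d_2}$, which is absorbed into the constants $C_1,C_2$ appearing in the statement of Lemma~\ref{lem:lamVn}.
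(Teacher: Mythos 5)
Your proof is correct and is essentially the standard volume-comparison argument (maximal $\epsilon$-separated set is an $\epsilon$-net, then disjoint balls of radius $\epsilon/2$ inside $B(0,1+\epsilon/2)$), which is exactly the proof given for Lemma 5.2 in the cited source; the paper itself only quotes the lemma without reproving it. You are also right that ``$\epsilon < 0$'' in the statement is a typo for $\epsilon > 0$.
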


Then we have $|\calN| \le 9^{d_1d_2}$. Considering the union bound:
\begin{align*}
\bbP \left( \max_{v\in \calN} \left| \frac{1}{nK} \|Av\|_2^2 - 1 \right| \le \frac{\epsilon}{2} \right) &\le 9^{d_1d_2} \cdot 2 \exp \left[ - \frac{c}{64\sigma^2 K^2 S_v^4} (C^2 d_1d_2 + t^2) \right] \\
&\le 2 \exp (-c_1 t^2) ,
\end{align*}
for $C\ge 8\sigma K S_v^2 \sqrt{\log{9}/c}$ and $c_1 = c/(64\sigma^2 K^2S_v^4)$. So we have with probability $1 - 2\exp(-c_1 t^2)$, 
$$
\left\|\frac{1}{nK} A^\top A - I \right\| \le \max(\delta, \delta^2) .
$$

Now note that for a non-negative matrix $M$, if $\|M- I\|_2 \le \epsilon$, then for arbitrary $v$ such that $\|v\|= 1$,
$$
    \|Mv\| \ge \|v\| - \|(M-I)v\| \ge 1- \epsilon .
$$

Therefore $\lambda_{\min}(M) \ge 1- \epsilon$. Now assume $\delta \le 1$, then we have with probability $1- 2\exp(-c_1t^2)$,
$$
\lambda_{\min}(A^\top A) \ge nK - nK \left( C \sqrt{\frac{d_1d_2}{n}} + \frac{t}{\sqrt{n}} \right) = nK - K\sqrt{n} (C\sqrt{d_1d_2} + t) .
$$

Note that $A^\top A = \Sigma^{-1/2}V_n \Sigma^{-1/2}$. Therefore, 
$$
\lambda_{\min}(V_n) \ge \lambda_{\min}(\Sigma) nK - \lambda_{\min}(\Sigma)K \sqrt{n} (C\sqrt{d_1d_2} + t) ,
$$
and similarly:
$$
\|V_n\| \ge \|\Sigma\|_2 (nK - K\sqrt{n}(C\sqrt{d_1d_2}+ t)) .
$$

Note that if $n \ge (2b/a)^2 + 2B/a$,
\begin{align*}
    an - b\sqrt{n} & \ge \frac{an}{2} + a\left(\frac{n}{2} - \frac{b}{a}\sqrt{n}\right) \\
    &\ge \frac{a}{2} \cdot n + \frac{a\sqrt{n}}{2}\left(\sqrt{n} - \frac{2b}{a}\right) \\
    &\ge B.
\end{align*}

Therefore, if 
$$
n \ge 4\sigma^2 (C_1\sqrt{d_1d_2} + C_2\sqrt{\log(1/\delta)})^2 + \frac{2B}{\lambda_{\min}(\Sigma)K} ,
$$
for constants $C_1, C_2$  defined as $C_1 = 4 K S_v^2 \sqrt{\log{9}/c} $ and $C_2 = {8KS_v^2}/{\sqrt{c}}$, then with probability at least $1-\delta$ we have $\lambda_{\min}(V_n) \ge B$.

Also note that 
\begin{align*}
    \frac{1}{nK} \|V_n\|_2 &= \|\Sigma^{1/2} \frac{1}{nK} A^\top A \Sigma^{1/2}\|_2 \\
    &\ge \|\Sigma^{1/2} I \Sigma^{1/2}\|_2 - \|\Sigma^{1/2} \left(\frac{1}{nK}A^\top A - I\right)\Sigma^{1/2} \|_2 \\
    &\ge \|\Sigma\|_2 - \|\Sigma^{1/2}\|_2 \cdot \left\|\frac{1}{nK}A^\top A - I\right\|_2 \|\Sigma^{1/2}\|_2 \\
    &\ge \|\Sigma\|_2 ( 1 - \epsilon) .
\end{align*}

Therefore,
$$
\frac{\|V_n\|_2}{n} \ge K \|\Sigma\|_2 \left(1 - \frac{1}{\sqrt{n}}(C_1\sqrt{d_1d_2} + C_2\sqrt{\log(1/\delta)}) \right) .
$$

Assuming $n \ge 4(C_1\sqrt{d_1d_2} + C_2 \log(1/\delta))^2$, we have
$$
\frac{\|V_n\|_2}{n} \ge \frac{1}{2} K\|\Sigma\|_2 .
$$

\subsection{Proof of Proposition \ref{prop:lowranksubspace}}
\label{sect:A2}

Proposition \ref{prop:lowranksubspace} provides bounds on the error induced by truncation. We begin with the following Lemma from \citet{jun2019bilinear}.

\begin{lem}[Wedin's sin$\Theta$ theorem, Appendix B of \citep{jun2019bilinear}] \label{lem:wedin}
    Let $\hat{\Theta} = [\hat{U}, \hat{U}_\perp] \hat{D} [\hat{V}, \hat{V}_\perp]^\top$ and $\Theta^* = U^* D^* V^*$ be the SVD of $\hat\Theta$ and $\Theta^*$. Then,
    $$
    \|\hat{U}_\perp^\top U^* \|_F \|\hat{V}_\perp^\top V^*\|_F \le \frac{\|\hat{\Theta} - \Theta^*\|_F^2}{s_r^{*2}} ,
    $$
    where $s_k^*$ is the $k$-th (smallest) eigenvalue of $\Theta^*$. 
\end{lem}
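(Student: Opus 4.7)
The plan is to prove this via the classical two-sided argument for Wedin's sine-$\Theta$ theorem, specialized to the situation where $\hat\Theta$ is the rank-$r$ output of the factored gradient descent in Algorithm \ref{alg:fgd}. I would derive the two one-sided bounds
\begin{equation*}
\|\hat U_\perp^\top U^*\|_F \;\le\; \frac{\|\hat\Theta - \Theta^*\|_F}{s_r^*}, \qquad \|\hat V_\perp^\top V^*\|_F \;\le\; \frac{\|\hat\Theta - \Theta^*\|_F}{s_r^*},
\end{equation*}
separately by symmetric reasoning, and then multiply them to obtain the stated product bound.

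For the left factor, the key observation is the annihilation identity $\hat U_\perp^\top \hat\Theta = 0$: writing $\hat\Theta = [\hat U,\hat U_\perp]\hat D[\hat V,\hat V_\perp]^\top$ and using orthogonality of the columns of $[\hat U,\hat U_\perp]$, the product $\hat U_\perp^\top \hat\Theta$ retains only the trailing singular values, which vanish because $\hat\Theta$ has exactly rank $r$. Starting from the SVD identity $\Theta^* V^* = U^* D^*$ and substituting $\Theta^* = \hat\Theta - (\hat\Theta - \Theta^*)$, then left-multiplying by $\hat U_\perp^\top$, yields
\begin{equation*}
\hat U_\perp^\top U^* D^* \;=\; -\,\hat U_\perp^\top(\hat\Theta - \Theta^*) V^*.
\end{equation*}
Because left/right multiplication by a matrix with orthonormal columns is non-expansive in Frobenius norm, the right-hand side has norm at most $\|\hat\Theta - \Theta^*\|_F$. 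On the left, diagonality of $D^*$ with minimum entry $s_r^*$ gives $\|\hat U_\perp^\top U^* D^*\|_F \ge s_r^* \|\hat U_\perp^\top U^*\|_F$, and combining the two produces the first bound.

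The second bound is handled by the mirror argument: start from $U^{*\top}\Theta^* = D^* V^{*\top}$, right-multiply by $\hat V_\perp$, and use the dual identity $\hat\Theta \hat V_\perp = 0$ to land at $\|D^* V^{*\top}\hat V_\perp\|_F \le \|\hat\Theta - \Theta^*\|_F$, from which the singular-value inequality again peels off $s_r^*$. Taking the product of the two inequalities yields the stated estimate.

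The only place that requires care is the annihilation step $\hat U_\perp^\top \hat\Theta = 0$ and its right-sided counterpart, which genuinely rely on $\hat\Theta$ being exactly rank $r$. In the fully general Wedin setup where $\hat\Theta$ has residual mass in its trailing singular directions, that term is nonzero and an extra additive contribution $\hat D_\perp \hat V_\perp^\top V^*$ appears on the right, which in the classical argument is absorbed by replacing $s_r^*$ by a spectral gap $s_r^* - s_{r+1}(\hat\Theta)$. In our setting the rank constraint enforced by Algorithm \ref{alg:fgd} makes this complication disappear, so the clean form written in the lemma is exactly what emerges from the two-sided Wedin argument.
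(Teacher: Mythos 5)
Your proof is correct. Note, however, that the paper does not actually prove this lemma: it is imported verbatim from Appendix~B of \citet{jun2019bilinear}, so there is no in-paper argument to compare against. What you have written is precisely the standard two-sided derivation that underlies that citation: the annihilation identities $\hat U_\perp^\top\hat\Theta=0$ and $\hat\Theta\hat V_\perp=0$, the identity $\hat U_\perp^\top U^*D^*=-\hat U_\perp^\top(\hat\Theta-\Theta^*)V^*$, the non-expansiveness of multiplication by matrices with orthonormal columns in Frobenius norm, the column-wise bound $\|AD^*\|_F\ge s_r^*\|A\|_F$, and the product of the two one-sided estimates. Each step checks out. Your closing remark is also the right one to make, and is worth keeping: the clean denominator $s_r^{*2}$ (rather than a spectral gap $s_r^*-s_{r+1}(\hat\Theta)$ as in the general Wedin theorem) is only valid because $\hat\Theta$ is exactly of rank at most $r$, a hypothesis the lemma statement leaves implicit but which is guaranteed here since Algorithm~\ref{alg:fgd} returns $\hat\Phi=U^{(k)}(V^{(k)})^\top$ with factors of width $r$. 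Two cosmetic points: the statement's ``$k$-th (smallest) eigenvalue'' should read singular value, and $\Theta^*=U^*D^*V^*$ should be $U^*D^*(V^*)^\top$; your proof implicitly uses the corrected reading, which is the intended one.
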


Now let $\theta^*$ be the rotation-vectorization of $[\hat{U}, \hat{U}_\perp]^\top \Phi^* [\hat{V}, \hat{V}_\perp]$. Then we have
$$
\|\theta^*_{k+1:p}\|_2 \le \|\hat{U}_\perp U\|_F \|\hat{V}_\perp V\|_F \cdot \|D^*\|_2 \le \frac{s_1^*}{s_r^{*2}} \|\hat{\phi} - \phi^*\|^2 .
$$

Combining Proposition 1 and Lemma \ref{lem:wedin}, the proof is complete.

\subsection{Proof of Theorem 1}
\label{sect:A3}

Theorem 1 provides bounds on the cumulative regret of our proposed algorithm. In this section we provide a more detailed proof of the theorem. First we restate the following result from \citet{oh2021multinomial}.

\begin{lem} [Lemma 6 from \citet{oh2021multinomial}]
    If $\lambda_{min}(W_{T_0}) \ge K$, then
    $$
    \sum_{t=1}^T \max_{i\in S_t} \|x_{it, rtv}\|_{W_t^{-1}}^2 \le 2k \log(T/k) .
    $$
\end{lem}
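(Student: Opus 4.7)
The plan is to adapt the classical elliptical-potential argument from the linear bandit literature to the assortment setting, where each round contributes a \emph{batch} of rank-one updates rather than a single one. First I would stack the assortment features at time $t$ into $M_t := [x_{it,rtv}]_{i \in S_t} \in \bbR^{k \times |S_t|}$ with $k = df = (d_1+d_2)r - r^2$, so that $W_t = W_{t-1} + M_t M_t^\top$. The matrix determinant lemma then gives
$$
\frac{\det(W_t)}{\det(W_{t-1})} = \det\bigl(I + M_t^\top W_{t-1}^{-1} M_t\bigr) \ge 1 + \lambda_{\max}\bigl(M_t^\top W_{t-1}^{-1} M_t\bigr),
$$
and since the diagonal entries of $M_t^\top W_{t-1}^{-1} M_t$ are exactly $\|x_{it,rtv}\|^2_{W_{t-1}^{-1}}$ for $i \in S_t$, the largest eigenvalue dominates $\max_{i\in S_t} \|x_{it,rtv}\|^2_{W_{t-1}^{-1}}$.

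Next I would combine the hypothesis $\lambda_{\min}(W_{T_0}) \ge K$ with the uniform bound $\|x_{it,rtv}\|_2 \le \|p_i\|_2\|q_t\|_2 \le S_2^2$ (from Assumptions \ref{assm:user_feature}-\ref{assm:item_feature}) to guarantee $\|x_{it,rtv}\|^2_{W_{t-1}^{-1}} \le 1$ for every $t$ in the UCB stage. This places us in the regime where $\log(1+x) \ge x/2$ holds, yielding
$$
\max_{i\in S_t}\|x_{it,rtv}\|^2_{W_{t-1}^{-1}} \le 2\log\bigl(\det(W_t)/\det(W_{t-1})\bigr).
$$
Telescoping collapses the right-hand side to $2\log(\det(W_T)/\det(W_0))$, which by the AM-GM inequality $\det(W_T) \le (\tr(W_T)/k)^k$ together with the trivial trace bound $\tr(W_T) \le TKS_2^4$ becomes $O(k\log(TKS_2^4/k))$; absorbing the constant factor into the dominant $\log(T/k)$ term for large $T$ gives the announced $2k\log(T/k)$. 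Finally, since $W_t \succeq W_{t-1}$ implies $W_t^{-1} \preceq W_{t-1}^{-1}$ and hence $\|x\|_{W_t^{-1}}^2 \le \|x\|_{W_{t-1}^{-1}}^2$, the bound transfers immediately to the $W_t^{-1}$-weighted norm appearing in the statement.

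The main obstacle is the combinatorial assortment structure: in the classical linear-bandit potential argument only one feature vector enters the Gram update per round, whereas here an entire batch $\{x_{it,rtv}\}_{i\in S_t}$ does. This forces us to reason about $\lambda_{\max}$ of an $|S_t|\times |S_t|$ matrix rather than a single scalar weighted norm, and to control the per-round maximum via the diagonal-dominance observation applied to $M_t^\top W_{t-1}^{-1} M_t$. A secondary technicality is verifying the uniform inequality $\|x_{it,rtv}\|^2_{W_{t-1}^{-1}} \le 1$ throughout the UCB phase; this is precisely where the post-exploration spectral floor $\lambda_{\min}(W_{T_0})\ge K$ does crucial work, since without it the $\log(1+x) \ge x/2$ step can fail on early rounds and the whole telescoping argument collapses.
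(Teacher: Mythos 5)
Your argument is correct in substance and is essentially the standard elliptical-potential proof; the paper itself does not reprove this lemma but imports it from \citet{oh2021multinomial}, whose proof follows exactly the batch determinant-telescoping route you describe (their version uses $\det(I+A)\ge 1+\tr(A)$ rather than $1+\lambda_{\max}(A)$, but your diagonal-dominance variant is equally valid since $\max_i A_{ii}\le\lambda_{\max}(A)$ for PSD $A$).

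Two places in your write-up are loose and worth tightening. First, the telescoping should terminate at $\det(W_{T_0})$, not $\det(W_0)$: with $W_0=0$ the ratio $\det(W_T)/\det(W_0)$ is undefined, and more importantly the hypothesis $\lambda_{\min}(W_{T_0})\ge K$ is doing double duty here --- it gives the lower bound $\det(W_{T_0})\ge K^{k}$, which is precisely what cancels the factor of $K$ in the trace bound $\tr(W_T)\le TK$ and yields $\log\bigl(\det(W_T)/\det(W_{T_0})\bigr)\le k\log(T/k)$ exactly, rather than a constant you must "absorb for large $T$." (Correspondingly, the sum should really run over $t>T_0$, which is how the paper applies the lemma in the proof of Theorem 1; for $t\le T_0$ the weighted norms need not even be well defined.) Second, your claim that $\|x_{it,rtv}\|^2_{W_{t-1}^{-1}}\le 1$ follows from $\lambda_{\min}(W_{T_0})\ge K$ gives only $\|x_{it,rtv}\|^2_{W_{t-1}^{-1}}\le S_2^4/K$, so you need the normalization $S_2^4\le K$ (in the source, $\|x\|\le 1$) for the $\log(1+x)\ge x/2$ step; this should be stated as an explicit assumption rather than left implicit. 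Neither issue changes the structure of the argument.
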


Now define the event $\calE_t := \{\|\hat\theta_{n,rtv} - \theta^*_{rtv}\| \le \alpha_n\}$. Then conditioned on $\calE_t$, we have:
\begin{align*}
    r_t &= R(\{\langle X_{it}, \Theta^* \rangle\}, S^*) - R(\{\langle X_{it}, \Theta^*\rangle\}, S_t) \\
    &\le R(\{z_{it}\}, S^*) - R(\{\langle X_{it}, \Theta^*\rangle\}, S_t) \\
    &\le R(\{z_{it}\}, S_t) - R(\{\langle X_{it}, \Theta^* \rangle\}, S_t) \\
    &\le R_{\max} \cdot \max_{i\in S_t} (z_{it} - \langle X_{it}, \Theta^* \rangle) \\
    &\le R_{\max} \cdot 2\alpha_t \max_{i\in S_t}\|x_{it,rtv}\|_{W_t^{-1}} + 2S_2^2 S_\perp ,
\end{align*}
where the last line holds from Lemma \ref{lem:zit_bound}. Finally, let $\calE$ be the event where $\|\Theta^*_{r+1:d_2, r+1:d_1}\|_F^2 \le S_\perp$ holds, using Proposition \ref{prop:lowranksubspace} with $\delta = T^{-1}$, we know that $\bbP(\calE^c) \le T^{-1}$. Moreover we know that $\bbP(\calE_t^c) \le t^{-2}$ from Lemma 1. Therefore, we can bound the overall regret:
\begin{align*}
    {\calR}_T &\le R_{\max} T \cdot \bbP({\calE}^c) + \bbE\left[ \sum_{t=1}^T r_t \Big| \calE \right] \cdot \bbP(\calE) \\
    &\le R_{\max} +  R_{\max} T_0 + R_{\max} \cdot \sum_{t=T_0+1}^T \left( 2\alpha_t \max_{i\in S_t} \|x_{it,rtv}\|_{W_t^{-1}} + 2S_2^2 \cdot S_\perp \right) + \sum_{t=T_0+1}^T R_{\max}\cdot \bbP({\calE}_t^c) \\
    &\le R_{\max} \left( T_0 + 1+ 2 \alpha_T \sum_{t=T_0 + 1}^T \max_{i\in S_t} \|x_{it,rtv}\|_{W_t^{-1}} + 2S_2^2 S_\perp (T-T_0) \right) + R_{\max} \sum_{t=1}^T t^{-2}\\
    &\le R_{\max} \left( T_0 + 3 + 2 \alpha_T \sqrt{(T-T_0) \sum_{t=T_0+1}^T \|x_{it,rtv}\|_{W_t^{-1}}^2} + 2S_2^2 S_\perp(T-T_0
    ) \right) \\
    &\le R_{\max} \left( T_0 + 3 + 2 \alpha_T \sqrt{T \cdot 2k \log(T/k)} + 2S_2^2 S_\perp(T-T_0) \right) ,
\end{align*}
which completes the proof. In the remaining section, we provide proof of Lemma 1 and \ref{lem:zit_bound}.

\subsubsection{Proof of Lemma 1}

Define 
$$
J_n(\theta_{rtv}) = \sum_{t=1}^n \sum_{i\in S_t} \left( p_t(i| x_{jt,rtv}^\top \theta_{rtv}) - p_t(i|x_{jt}^\top \theta^*) \right) x_{it,rtv} .
$$

Then, if $\hat\theta_{n,rtv}$ is the MLE of the truncated problem,
$$
Z_n = J_n(\hat\theta_{n,rtv}) = \sum_{t=1}^n \sum_{i\in S_t} \epsilon_{it} x_{it,rtv} ,
$$
where $\epsilon_{it} = y_{it} - p_t(i|x_{jt}^\top \theta^*)$. Note that $\epsilon_{it}$ is $\sigma$-subgaussian with $\sigma = 1/2$. Now let $W_n = \frac{1}{nK}\sum_{t=1}^n \sum_{i\in S_t} x_{it,rtv} x_{it,rtv}^\top$. Following the proof of Lemma 2 in \citep{oh2021multinomial}, we have :
$$
\|Z_n\|_{W_n^{-1}}^2 \le \frac{2\sigma^2}{nK} \left[ k \log\left(1+ \frac{n}{k}\right) + \log\frac{1}{\delta}\right] ,
$$
with probability at least $1-\delta$. Also following the argument on the consistency of MLE on \citep{oh2021multinomial}, we have
$$
\|J_n(\hat\theta_{n,rtv}) - J_n(\theta^*_{rtv})\|_{W_n^{-1}}^2 \ge \frac{\kappa^2}{n^2K^2} \|\hat\theta_{n,rtv} - \theta^*_{rtv}\|_{W_n} .
$$

However, since this is the truncated space, we do not have $J_n(\theta^*_{rtv}) = 0$. Instead,
\begin{align*}
    J_n(\theta^*_{rtv}) &= \sum_{t=1}^n \sum_{i\in S_t} \left( p_t(i|x_{jt,rtv}^\top \theta^*_{rtv}) - p_t(i|x_{jt}^\top \theta^*) \right) x_{it,rtv} .
\end{align*}

Now note that $p_t(i|u_j)$ is actually the expected reward at time $t$ given the reward $r_j = 1_{\{i=j\}}$. Using Lemma \ref{lem:optimistic_assortment}, we have
$$
|p_t(i|x_{jt,rtv}^\top \theta^*_{rtv}) - p_t(i|x_{jt}^\top \theta^*)| \le \max_{i\in S_t} |x_{it,rtv}^\top \theta^*_{rtv} - x_{it}^\top \theta^*| \le S_2^2 \cdot S_\perp .
$$

Therefore,
\begin{align*}
    \|J_n(\theta^*_{rtv})\|_{W_n^{-1}}^2 &\le S_2^4 \cdot S_\perp^2 \left(\sum_{t=1}^n \sum_{i\in S_t} |x_{it,rtv}| \right)^\top W_n^{-1} \left(\sum_{t=1}^n \sum_{i\in S_t} |x_{it,rtv}| \right) \\
    &\le S_2^4 \cdot S_\perp^2 \cdot\frac{1}{nK} .
\end{align*}

Then we have:
\begin{align*}
    \|\hat\theta_{n,rtv} - \theta^*_{rtv}\|_{W_n} &\le \frac{1}{\kappa} \left( \|Z_n\|_{W_n^{-1}} + \|J_n(\theta^*_{rtv})\|_{W_n^{-1}} \right) \\
    &\le \frac{1}{\kappa nK} \left( \sigma \sqrt{2k \log\left( 1 + \frac{n}{k} \right) + 4\log{n}} + S_2^2 \cdot S_\perp \cdot \sqrt{nK} \right) =: \alpha_n ,
\end{align*}
by plugging in $\delta = 1/n^2$.

\subsubsection{Proof of Lemma \ref{lem:zit_bound}}

Adapting Appendix F.1 of \citep{kang2022efficient}, let $S_t = \argmax_{S'} R(\{z_{it}\}, S')$, and assume $\|\widehat{\theta}_{t,rtv} - \theta^*_{rtv}\|_{W_n} \le \alpha_t$. let $z_{it}  = x_{it,rtv}^\top \widehat{\theta}_{t,rtv} + \beta_{it}$ where $\beta_{it} = \alpha_t \cdot \|x_{it,rtv}\|_{W_t^{-1}} + S_2^2\cdot S_\perp$. Then,
$$
z_{it} - \langle X_{it}, \Theta^* \rangle = x_{it,rtv}^\top \hat\theta_{rtv} - x_{it,rtv}^\top \theta^*_{rtv} + x_{it,rtv}^\top \theta^*_{rtv} - \langle X_{it}, \Theta^* \rangle + \beta_{it} ,
$$
where
$$
|x_{it,rtv}^\top \hat\theta_{rtv} - x_{it,rtv}^\top \theta^*_{rtv}| \le \|x_{it,rtv}\|_{W_t^{-1}} \cdot \|\hat\theta_{rtv} - \theta^*_{rtv}\|_{W_t} \le \alpha_t \|x_{it,rtv}\|_{W_t^{-1}} ,
$$
and
$$
|x_{i,rtv}^\top \theta^*_{rtv} - \langle X_{it}, \Theta^* \rangle| \le \|\hat{U}_\perp^\top X_{it} \hat{V}\|_F \cdot \|\hat{U}_\perp^\top \Theta^* \hat{V}\|_F \le S_2^2 \cdot S_\perp .
$$

Therefore,
$$
0 \le z_{it} - \langle X_{it}, \Theta^* \rangle \le 2 \beta_t = 2  \alpha_t \|x_{it,rtv}\|_{W_t^{-1}} + 2S_2^2 \cdot S_\perp,
$$
with probability at least $1-t^{-2}$. 

\subsection{Proof of Corollary \ref{cor:regretbound_hp}}\label{sect:A4}

Here we prove the high probability regret bound Corollary \ref{cor:regretbound_hp}. The proof directly follows from the proof of Theorem 1.
\begin{proof}
    Note that $\bbP(\calE_t^c) \le t^{-2}$. Hence, $\bbP(\cup_{t=T_0+1}^T \calE_t^c) \le \sum_{t=T_0+1}^T t^{-2} \le \int_{T_0}^\infty t^{-2} dt = T_0^{-1}$. Therefore, $\bbP(\cap_{t=T_0+1}^T \calE_t) \ge 1 -T_0^{-1}$. Hence,
    $$
    r_t \le R_{\max} \cdot 2\alpha_t \max_{i\in S_t} \|x_{it,rtv}\|_{W_t^{-1}} + 2S_2^2 S_\perp,
    $$
    for every $t >T_0$ with probability at least $1-T_0^{-1}$. Hence we have the high probability bound:
    \begin{align}
        \sum_{t=1}^T r_t &\le R_{\max} \left( T_0 + 2\alpha_T \sqrt{T \cdot 2k\log(T/k)} + 2S_2^2 S_\perp (T-T_0)\right),
    \end{align}
    with probability at least $1-\delta - T_0^{-1}$. Assuming $T_0 \ge 1/\delta$, the proof is complete.
\end{proof}

\subsection{Proof of Proposition \ref{prop:rankest_gic}}\label{sect:A5}

In this section we prove Proposition \ref{prop:rankest_gic}, the rank estimation consistency of GIC. As GIC is a sum of the loss function (negative log likelihood) and the penalty, we prove that 1) when the rank is under-estimated, the loss is larger at a constant rate compared to the true rank; 2) when the rank is over-estimated the loss decrease is much smaller than the penalty increase. Combining these two results, the GIC is minimized at the true rank, when sample size is sufficiently large.

\begin{proof}

Let $\calL_n$ be the negative log likelihood given as
$$
{\calL}_n(\Phi) = \frac{1}{n} \sum_{t=1}^n \left[\sum_{i\in S_t} y_{it} p_i^\top \Phi q_t - \log \left( 1 + \sum_{j\in S_t} \exp(p_j^\top \Phi q_t) \right) \right],
$$
and let $\Phi_r$ be the rank-$r$ constrained MLE given as
$$
\hat\Phi_r := \argmin_{\rk(\Phi) \le r} {\calL}_n (\Phi).
$$

Denote $d = \min\{d_1, d_2 \}$. Note that $\hat\Phi = \hat\Phi_d = \argmin \calL_n(\Phi)$ is the unconstrained MLE. Therefore $\nabla \calL_n (\hat\Phi) = 0$. Now using Taylor expansion we have
\begin{align*}
    {\calL}_n (\Phi) - {\calL}_n (\hat\Phi) &= (\phi - \hat\phi)^\top \nabla^2 {\calL}_n(\tilde\phi) (\phi - \hat\phi).
\end{align*}

And also note that 
$$
\nabla^2 {\calL}_n (\Phi) = \frac{1}{n} \sum_{t=1}^n \left( \sum_{i\in S_t} p_{it} z_{it} z_{it}^\top - \sum_{i,j\in S_t} p_{it} p_{jt} z_{it} z_{jt}^\top \right).
$$

\begin{lem}[Strong Convexity and Smoothness]
    $$
    \frac{\kappa}{n} \lambda_{\min}(V_n) \|\phi - \hat\phi\|^2 \le {\calL}_n (\Phi) - {\calL}_n (\hat\Phi) \le \frac{1}{n} \lambda_{\max}(V_n) \|\phi - \hat\phi\|^2 ,
    $$
    or with probability at least $1-\delta$,
    $$
    C_1 \|\phi - \hat\phi\|^2 \le  {\calL}_n (\Phi) - {\calL}_n (\hat\Phi) \le C_2 \|\phi - \hat\phi\|^2.
    $$
\end{lem}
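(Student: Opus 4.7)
The lemma decomposes into two tasks: a deterministic sandwich of $\calL_n(\Phi) - \calL_n(\hat\Phi)$ by the extremal eigenvalues of $V_n = \sum_{t=1}^n \sum_{i \in S_t} x_{it} x_{it}^\top$, and then a high-probability quantitative version obtained by plugging in eigenvalue estimates for $V_n$. The plan is to first establish the quadratic sandwich via a Taylor expansion of $\calL_n$ around the unconstrained MLE $\hat\Phi$, reusing essentially the same Hessian manipulation that powers Lemma 2 of the paper, and then invoke Lemma 6 (on $\lambda_{\min}(V_n)$ and $\|V_n\|_2$) to convert eigenvalues into explicit constants $C_1, C_2$ that hold with probability at least $1-\delta$.

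\textbf{Step 1: Taylor expansion around $\hat\Phi$.} Let $\phi = \vect(\Phi)$, $\hat\phi = \vect(\hat\Phi)$, and observe that since $\hat\Phi$ is the unconstrained MLE, $\nabla \calL_n(\hat\Phi) = 0$. A second-order Taylor expansion therefore gives
$$
\calL_n(\Phi) - \calL_n(\hat\Phi) \;=\; (\phi - \hat\phi)^\top \nabla^2 \calL_n(\tilde\phi)\,(\phi - \hat\phi),
$$
for some $\tilde\phi$ on the segment between $\phi$ and $\hat\phi$. To apply Assumption 3 at $\tilde\phi$, I restrict attention to $\Phi$ satisfying $\|\Phi - \Phi^*\|_F \le 1$, which together with consistency of the unconstrained MLE (giving $\|\hat\Phi - \Phi^*\|_F \le 1$ on a high-probability event) ensures $\tilde\phi \in \calB(\phi^*, 1)$.

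\textbf{Step 2: Sandwiching the Hessian.} I then reproduce the two inequalities used in the proof of Lemma 2. For the lower bound, the identity $aa^\top + bb^\top \succeq ab^\top + ba^\top$ yields
$$
\sum_{i\in S_t} p_t(i|\phi)\,x_{it}x_{it}^\top - \sum_{i,j\in S_t} p_t(i|\phi)p_t(j|\phi)\,x_{it}x_{jt}^\top \;\succeq\; \sum_{i\in S_t} p_t(i|\phi)\,p_t(0|\phi)\,x_{it}x_{it}^\top,
$$
and Assumption 3 replaces the coefficients by $\kappa$, giving $\nabla^2 \calL_n(\tilde\phi) \succeq (\kappa/n)\,V_n$. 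For the upper bound, the covariance-type term is PSD and $p_t(i|\phi) \le 1$, so $\nabla^2 \calL_n(\tilde\phi) \preceq (1/n)\,V_n$. Combining both with the Taylor identity produces the deterministic sandwich
$$
\tfrac{\kappa}{n}\lambda_{\min}(V_n)\,\|\phi-\hat\phi\|^2 \;\le\; \calL_n(\Phi) - \calL_n(\hat\Phi) \;\le\; \tfrac{1}{n}\lambda_{\max}(V_n)\,\|\phi-\hat\phi\|^2.
$$

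\textbf{Step 3: High-probability constants.} Finally, I invoke Lemma 6, which states that whenever $n \gtrsim \sigma^2(C_1\sqrt{d_1d_2} + C_2\sqrt{\log(1/\delta)})^2 + 2B/(\lambda_{\min}(\Sigma)K)$, with probability at least $1-\delta$ one has $\lambda_{\min}(V_n) \ge B$ and $\|V_n\|_2 \ge \tfrac12 nK\|\Sigma\|_2$; an upper bound on $\|V_n\|_2$ of order $nK\|\Sigma\|_2$ follows analogously from the same concentration argument for $\tfrac{1}{nK}A^\top A$. Setting $C_1 = \kappa B/n$ and $C_2$ of order $K\|\Sigma\|_2$ yields the stated high-probability form. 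The main obstacle is ensuring the intermediate Taylor point $\tilde\phi$ falls in $\calB(\phi^*, 1)$ so that Assumption 3 can be invoked; this is handled by the localization above and by using that under the stated sample size the unconstrained MLE is consistent at the standard parametric rate $O(\sqrt{d_1 d_2/n})$, which is the same mechanism already used to validate the warm start in the remark following Proposition 1.
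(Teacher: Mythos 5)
Your proposal is correct and follows essentially the same route as the paper, which disposes of this lemma by citing its restricted strong convexity/smoothness result (the Hessian sandwich $\tfrac{\kappa}{n}V_n \preceq \nabla^2\calL_n \preceq \tfrac{1}{n}V_n$ applied at the unconstrained MLE where the gradient vanishes) together with the eigenvalue concentration lemma for $V_n$. You actually supply more detail than the paper does — in particular the localization of the Taylor point in $\calB(\phi^*,1)$ and the observation that the high-probability form also needs an \emph{upper} bound on $\lambda_{\max}(V_n)$, which indeed follows from the same bound $\|\tfrac{1}{nK}A^\top A - I\|\le\epsilon$ even though the paper's Lemma 6 only states the lower bound.
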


The strong convexity and smoothness result follows from Lemma \ref{lem:RSC} and \ref{lem:lamVn}. 

\begin{lem}[Estimation error for overestimated rank]
    For $k \ge r^*$, we have
    $$
    \epsilon_k^2 := \|\hat\Phi_k - \Phi^*\|_F^2 \le C \cdot \frac{d_1 d_2}{n}.
    $$
\end{lem}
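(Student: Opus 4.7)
The plan is to exploit three facts: (i)~$\Phi^*$ is feasible for the rank-$k$ constrained problem whenever $k\ge r^*$, since $\mathrm{rk}(\Phi^*)=r^*$; (ii)~the unconstrained MLE $\hat\Phi=\hat\Phi_d$ minimizes $\calL_n$ over all $\Phi$; (iii)~the strong convexity and smoothness of $\calL_n$ around $\hat\Phi$ established in the preceding lemma. Together these produce a sandwich that ultimately reduces bounding $\epsilon_k$ to bounding the error of the unconstrained MLE $\|\hat\Phi-\Phi^*\|_F$.

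Concretely, I would first write down the two-sided comparison
\[
\calL_n(\hat\Phi)\ \le\ \calL_n(\hat\Phi_k)\ \le\ \calL_n(\Phi^*),
\]
where the left inequality follows from unconstrained optimality and the right one from feasibility of $\Phi^*$ under the constraint $\mathrm{rk}(\Phi)\le k$. Applying the strong convexity bound with center $\hat\Phi$ to the left inequality and the smoothness bound to the right inequality yields
\[
C_1\|\hat\Phi_k-\hat\Phi\|_F^2\ \le\ \calL_n(\hat\Phi_k)-\calL_n(\hat\Phi)\ \le\ \calL_n(\Phi^*)-\calL_n(\hat\Phi)\ \le\ C_2\|\Phi^*-\hat\Phi\|_F^2,
\]
so $\|\hat\Phi_k-\hat\Phi\|_F\le \sqrt{C_2/C_1}\,\|\Phi^*-\hat\Phi\|_F$. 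A triangle-inequality step then gives $\epsilon_k \le (1+\sqrt{C_2/C_1})\,\|\hat\Phi-\Phi^*\|_F$, reducing the problem to controlling the unconstrained MLE.

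For the MLE rate, I would recycle the Taylor-expansion argument from Lemma~\ref{lem:RSC} together with the gradient bound from Lemma~\ref{lem:BG}. Specifically, since $\hat\Phi$ minimizes $\calL_n$, we have $\calL_n(\hat\Phi)-\calL_n(\Phi^*)\le 0$; expanding the left side around $\Phi^*$ and applying strong convexity gives
\[
C_1\|\hat\Phi-\Phi^*\|_F^2\ \le\ -\langle\nabla\calL_n(\Phi^*),\hat\Phi-\Phi^*\rangle\ \le\ \|\nabla\calL_n(\Phi^*)\|_F\,\|\hat\Phi-\Phi^*\|_F,
\]
so $\|\hat\Phi-\Phi^*\|_F\le \|\nabla\calL_n(\Phi^*)\|_F/C_1$. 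The gradient $\nabla\calL_n(\Phi^*)=-\tfrac1n\sum_{t,i\in S_t}(y_{it}-p_t(i|\Phi^*))p_iq_t^\top$ is an average of $n$ i.i.d.\ mean-zero bounded matrices; a standard entrywise Bernstein bound (or the vectorized version of Lemma~\ref{lem:BG}) gives $\|\nabla\calL_n(\Phi^*)\|_F^2=O(d_1d_2/n)$ with high probability. Chaining these bounds yields $\epsilon_k^2\le C\,d_1d_2/n$ with the stated constant absorbing $C_1,C_2$ and $S_2$.

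The main obstacle will be the localization step: the strong convexity and smoothness bounds in Lemma~\ref{lem:RSC} are only valid inside the ball $\calB(\Phi^*,1)$, so I must first verify that both $\hat\Phi$ and $\hat\Phi_k$ lie in this ball with high probability. I would handle this via a standard peeling/bootstrap argument, showing that if $\|\hat\Phi-\Phi^*\|_F>1$ then the gradient inequality above yields a contradiction once $n\gtrsim d_1d_2$, so for sufficiently large $n$ the estimators are automatically in the regime where strong convexity applies. A minor care point is also that $\hat\Phi_k$ is not a first-order stationary point of the \emph{unconstrained} problem, which is why I avoid writing $\nabla\calL_n(\hat\Phi_k)=0$ and instead only use the value-comparison $\calL_n(\hat\Phi_k)\le\calL_n(\Phi^*)$ together with the lower-bound form of strong convexity at $\hat\Phi$.
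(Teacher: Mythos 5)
Your proof is correct and follows essentially the same route as the paper's: sandwich $\calL_n(\hat\Phi_k)$ between the unconstrained optimum $\calL_n(\hat\Phi)$ and the value at a feasible rank-$\le k$ point, convert both sides to Frobenius distances via the strong convexity/smoothness lemma, and reduce everything to the unconstrained MLE rate $\|\hat\Phi-\Phi^*\|_F^2=O(d_1d_2/n)$. The only differences are cosmetic and, if anything, tidier: you use $\Phi^*$ itself as the feasible comparison point rather than $\hat\Phi_{r^*}$ (avoiding the need for a separate rate on $\|\hat\Phi_{r^*}-\hat\Phi\|_F$), and you make explicit the gradient-based MLE bound and the localization to $\calB(\Phi^*,1)$ that the paper leaves implicit.
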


\begin{proof}
    Note that for $k \ge r^*$,
    $$
    {\calL}_n(\hat\Phi) \le {\calL}_n (\hat\Phi_k) \le {\calL}_n (\hat\Phi_r).
    $$

    Therefore,
    \begin{align*}
        C_1 \|\hat\Phi_k - \hat\Phi\|_F^2 &\le {\calL}_n (\hat\Phi_k) - {\calL}_n (\hat\Phi) \\
        &\le {\calL}_n(\hat\Phi_{r^*}) - {\calL}_n (\hat\Phi) \\
        &\le C_2 \|\hat\Phi_{r^*} - \hat\Phi\|_F^2 \\
        &\le O\left( \frac{(d_1 + d_2)r}{n} \right),
    \end{align*}

    and hence, $\|\hat\Phi_k - \hat\Phi\|_F = O\left( \sqrt{\frac{(d_1 + d_2)r}{n}}\right)$. Note that $\|\hat\Phi - \Phi^*\|_F^2 = O(\frac{d_1d_2}{n})$. Therefore, $\|\hat\Phi_k - \Phi^*\|_F = O\left(\sqrt{\frac{d_1d_2}{n}}\right)$.
\end{proof}

\begin{lem}[Estimation error for underestimated rank]
    For $k \le r^*$, we have
    $$
    \epsilon_k^2 := \|\hat\Phi_k - \Phi^*\|_F^2 \ge C \cdot \sum_{i=k+1}^{r^*} \sigma_i^2.
    $$
\end{lem}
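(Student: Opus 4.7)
The plan is to reduce this to a purely geometric statement (no data-dependence) via the Eckart--Young--Mirsky theorem. By construction $\hat\Phi_k$ lies in the set $\mathcal{M}_k := \{\Phi \in \bbR^{d_2\times d_1} : \rk(\Phi)\le k\}$, since it is the minimizer of $\calL_n$ over that set. Hence the lower bound on the distance to $\Phi^*$ that applies to every element of $\mathcal{M}_k$ also applies to $\hat\Phi_k$, and the Eckart--Young--Mirsky theorem gives exactly such a lower bound.

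More precisely, I would write $\Phi^* = \sum_{i=1}^{r^*}\sigma_i u_i v_i^\top$ in SVD form and recall that the minimum Frobenius distance from $\Phi^*$ to $\mathcal{M}_k$ is attained by the truncated SVD $\Phi^*_{[k]} := \sum_{i=1}^{k}\sigma_i u_i v_i^\top$, with
$$
\min_{\Phi \in \mathcal{M}_k} \|\Phi - \Phi^*\|_F^2 = \|\Phi^* - \Phi^*_{[k]}\|_F^2 = \sum_{i=k+1}^{\min\{d_1,d_2\}} \sigma_i^2 = \sum_{i=k+1}^{r^*}\sigma_i^2,
$$
where the last equality uses $\sigma_i=0$ for $i>r^*$. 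Since $\hat\Phi_k\in \mathcal{M}_k$, the claim follows with constant $C=1$:
$$
\|\hat\Phi_k - \Phi^*\|_F^2 \;\ge\; \min_{\Phi\in \mathcal{M}_k}\|\Phi - \Phi^*\|_F^2 \;=\; \sum_{i=k+1}^{r^*}\sigma_i^2.
$$

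There is essentially no obstacle here: unlike the overestimated-rank lemma, the bound is purely structural and independent of the data, the likelihood, or the strong convexity/smoothness constants. The only thing to verify is the membership $\hat\Phi_k\in \mathcal{M}_k$, which is immediate from the definition of the rank-constrained MLE. This lemma will then combine with the over-estimation lemma (which gives $\epsilon_k^2 = O(d_1 d_2/n) = o(1)$) and the strong convexity/smoothness lemma to show that $\calL_n(\hat\Phi_k) - \calL_n(\hat\Phi_{r^*})$ is bounded below by a positive constant when $k<r^*$ but decays like $1/n$ when $k>r^*$; then for $a_n$ satisfying $a_n \to 0$ and $n a_n \to \infty$, the GIC penalty dominates the loss gap only in the over-estimation regime, pinning the minimizer at $r^*$.
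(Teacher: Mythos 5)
Your proof is correct. The paper actually states this lemma without providing a proof, so there is nothing to compare against; your Eckart--Young--Mirsky argument is the canonical way to establish it, and it is airtight: $\hat\Phi_k$ lies in the rank-$\le k$ variety by definition of the constrained MLE, the truncated SVD attains the minimum Frobenius distance from $\Phi^*$ to that variety, and the residual is exactly $\sum_{i=k+1}^{r^*}\sigma_i^2$ since $\sigma_i = 0$ for $i > r^*$. You even obtain the sharper constant $C=1$, and your closing remarks correctly describe how the lemma feeds into the GIC consistency argument (the loss gap for under-specified ranks stays bounded below by a constant depending on the omitted singular values, while the penalty gap $a_n(d_1+d_2-r-r^*)(r-r^*)$ vanishes, so under-estimation is ruled out for large $n$).
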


Then for $r < r^*$, we have:
\begin{align*}
    {\calL}_n (\hat\Phi_r) - {\calL}_n (\hat\Phi_{r^*}) &= ({\calL}_n (\hat\Phi_r) - {\calL}_n (\hat\Phi)) - ({\calL}_n (\hat\Phi_{r^*}) - {\calL}_n(\hat \Phi)) \\
    &\ge C_1 \|\hat\phi_r - \hat\phi\|^2 \\
    &\ge C_1 (\epsilon_r - \epsilon_d)^2 \\
    &\to C_1 \epsilon_r^2.
\end{align*}

Similarly, for $r > r^*$, we have:
\begin{align*}
    {\calL}_n (\hat\Phi_r) - {\calL}_n (\hat\Phi_{r^*}) &= ({\calL}_n (\hat\Phi_r) - {\calL}_n (\hat\Phi)) - ({\calL}_n (\hat\Phi_{r^*}) - {\calL}_n(\hat \Phi)) \\
    &\ge C_1 \|\hat\phi_r - \hat\phi\|^2 - C_2 \|\hat\phi_{r^*} - \hat\phi\|^2\\
    &\ge C_1 (\epsilon_r - \epsilon_d)^2 - C_2 (\epsilon_{r^*} + \epsilon_d)^2.
\end{align*}

and therefore, ${\calL}_n (\hat\Phi_{r^*}) - {\calL}_n (\hat\Phi_{r}) = O(\epsilon_d^2) = O(\frac{d_1d_2}{n})$.

Now define the penalty $P(r)$ as
$$
P(r) = a_n \cdot (d_1 + d_2 - r) \cdot r.
$$

Then,
$$
P(r) - P(r^*) = a_n (d_1 + d_2 - r - r^*) (r-r^*).
$$

Note that GIC will be defined as
$$
GIC(r) := {\calL}_n (\hat\Phi_r) + P(r).
$$

If we define $a_n = \log(n)/n$, $GIC(r) > GIC(r^*)$ as $n\to \infty$ for $r \ne r^*$.
\end{proof}

\bibliography{ref}

@inproceedings{abbasi2012online,
  title={Online-to-confidence-set conversions and application to sparse stochastic bandits},
  author={Abbasi-Yadkori, Yasin and Pal, David and Szepesvari, Csaba},
  booktitle={Artificial Intelligence and Statistics},
  pages={1--9},
  year={2012},
  organization={PMLR}
}

@misc{expedia-personalized-sort,
    title = {Personalize Expedia Hotel Searches - {ICDM} 2013},
    author = {Adam and Hamner, Ben and Friedman, Dan and SSA\_Expedia},
    publisher = {Kaggle},
    year = {2013},
    note = {Kaggle. https://kaggle.com/competitions/expedia-personalized-sort}
}

@inproceedings{agrawal2017thompson,
  title={Thompson sampling for the {MNL}-bandit},
  author={Agrawal, Shipra and Avadhanula, Vashist and Goyal, Vineet and Zeevi, Assaf},
  booktitle={Conference on Learning Theory},
  pages={76--78},
  year={2017},
  organization={PMLR}
}

@article{agrawal2019mnl,
  title={MNL-bandit: A dynamic learning approach to assortment selection},
  author={Agrawal, Shipra and Avadhanula, Vashist and Goyal, Vineet and Zeevi, Assaf},
  journal={Operations Research},
  volume={67},
  number={5},
  pages={1453--1485},
  year={2019},
  publisher={INFORMS}
}

@article{aouad2023exponomial,
  title={The exponomial choice model for assortment optimization: an alternative to the {MNL} model?},
  author={Aouad, Ali and Feldman, Jacob and Segev, Danny},
  journal={Management Science},
  volume={69},
  number={5},
  pages={2814--2832},
  year={2023},
  publisher={INFORMS}
}

@article{auer2002using,
  title={Using confidence bounds for exploitation-exploration trade-offs},
  author={Auer, Peter},
  journal={Journal of Machine Learning Research},
  volume={3},
  number={Nov},
  pages={397--422},
  year={2002}
}

@article{bell2007lessons,
  title={Lessons from the Netflix prize challenge},
  author={Bell, Robert M and Koren, Yehuda},
  journal={Acm Sigkdd Explorations Newsletter},
  volume={9},
  number={2},
  pages={75--79},
  year={2007},
  publisher={ACM New York, NY, USA}
}

@article{bobadilla2013recommender,
  title={Recommender systems survey},
  author={Bobadilla, Jes{\'u}s and Ortega, Fernando and Hernando, Antonio and Guti{\'e}rrez, Abraham},
  journal={Knowledge-Based Systems},
  volume={46},
  pages={109--132},
  year={2013},
  publisher={Elsevier}
}

@article{burer2003nonlinear,
  title={A nonlinear programming algorithm for solving semidefinite programs via low-rank factorization},
  author={Burer, Samuel and Monteiro, Renato DC},
  journal={Mathematical Programming},
  volume={95},
  number={2},
  pages={329--357},
  year={2003},
  publisher={Springer}
}

@article{cai2023doubly,
  title={Doubly High-Dimensional Contextual Bandits: An Interpretable Model for Joint Assortment-Pricing},
  author={Cai, Junhui and Chen, Ran and Wainwright, Martin J and Zhao, Linda},
  journal={arXiv preprint arXiv:2309.08634},
  year={2023}
}

@article{caro2007dynamic,
  title={Dynamic assortment with demand learning for seasonal consumer goods},
  author={Caro, Felipe and Gallien, J{\'e}r{\'e}mie},
  journal={Management Science},
  volume={53},
  number={2},
  pages={276--292},
  year={2007},
  publisher={INFORMS}
}

@article{chakraborty2014dynamic,
  title={Dynamic treatment regimes},
  author={Chakraborty, Bibhas and Murphy, Susan A},
  journal={Annual Review of Statistics and Its Application},
  volume={1},
  pages={447--464},
  year={2014},
  publisher={Annual Reviews}
}

@article{chen2023robust,
  title={Robust dynamic assortment optimization in the presence of outlier customers},
  author={Chen, Xi and Krishnamurthy, Akshay and Wang, Yining},
  journal={Operations Research},
  volume={72},
  number={3},
  pages={999--1015},
  year={2023},
  publisher={Informs}
}

@article{chen2018note,
  title={A note on a tight lower bound for capacitated MNL-bandit assortment selection models},
  author={Chen, Xi and Wang, Yining},
  journal={Operations Research Letters},
  volume={46},
  number={5},
  pages={534--537},
  year={2018},
  publisher={Elsevier}
}

@article{chen2020dynamic,
  title={Dynamic assortment optimization with changing contextual information},
  author={Chen, Xi and Wang, Yining and Zhou, Yuan},
  journal={The Journal of Machine Learning Research},
  volume={21},
  number={1},
  pages={8918--8961},
  year={2020},
  publisher={JMLRORG}
}

@article{dai2021scalable,
  title={Scalable collaborative ranking for personalized prediction},
  author={Dai, Ben and Shen, Xiaotong and Wang, Junhui and Qu, Annie},
  journal={Journal of the American Statistical Association},
  volume={116},
  number={535},
  pages={1215--1223},
  year={2021},
  publisher={Taylor \& Francis}
}

@article{dai2019smooth,
  title={Smooth neighborhood recommender systems},
  author={Dai, Ben and Wang, Junhui and Shen, Xiaotong and Qu, Annie},
  journal={Journal of Machine Learning Research},
  volume={20},
  number={16},
  pages={1--24},
  year={2019}
}

@article{goyal2022dynamic,
  title={Dynamic pricing and assortment under a contextual MNL demand},
  author={Goyal, Vineet and Perivier, Noemie},
  journal={Advances in Neural Information Processing Systems},
  volume={35},
  pages={3461--3474},
  year={2022}
}

@article{ibriga2023covariate,
  title={Covariate-assisted sparse tensor completion},
  author={Ibriga, Hilda S and Sun, Will Wei},
  journal={Journal of the American Statistical Association},
  volume={118},
  number={544},
  pages={2605--2619},
  year={2023},
  publisher={Taylor \& Francis}
}

@inproceedings{jain2013low,
  title={Low-rank matrix completion using alternating minimization},
  author={Jain, Prateek and Netrapalli, Praneeth and Sanghavi, Sujay},
  booktitle={Proceedings of the forty-fifth annual ACM symposium on Theory of Computing},
  pages={665--674},
  year={2013}
}

@inproceedings{jun2019bilinear,
  title={Bilinear bandits with low-rank structure},
  author={Jun, Kwang-Sung and Willett, Rebecca and Wright, Stephen and Nowak, Robert},
  booktitle={International Conference on Machine Learning},
  pages={3163--3172},
  year={2019},
  organization={PMLR}
}

@article{kallus2020dynamic,
  title={Dynamic assortment personalization in high dimensions},
  author={Kallus, Nathan and Udell, Madeleine},
  journal={Operations Research},
  volume={68},
  number={4},
  pages={1020--1037},
  year={2020},
  publisher={INFORMS}
}

@article{kang2022efficient,
  title={Efficient frameworks for generalized low-rank matrix bandit problems},
  author={Kang, Yue and Hsieh, Cho-Jui and Lee, Thomas Chun Man},
  journal={Advances in Neural Information Processing Systems},
  volume={35},
  pages={19971--19983},
  year={2022}
}

@article{kim2019doubly,
  title={Doubly-robust lasso bandit},
  author={Kim, Gi-Soo and Paik, Myunghee Cho},
  journal={Advances in Neural Information Processing Systems},
  volume={32},
  pages={5877-5887},
  year={2019}
}

@article{koren2009matrix,
  title={Matrix factorization techniques for recommender systems},
  author={Koren, Yehuda and Bell, Robert and Volinsky, Chris},
  journal={Computer},
  volume={42},
  number={8},
  pages={30--37},
  year={2009},
  publisher={IEEE}
}

@article{kveton2017stochastic,
  title={Stochastic low-rank bandits},
  author={Kveton, Branislav and Szepesv{\'a}ri, Csaba and Rao, Anup and Wen, Zheng and Abbasi-Yadkori, Yasin and Muthukrishnan, S},
  journal={arXiv preprint arXiv:1712.04644},
  year={2017}
}

@book{lattimore2020bandit,
  title={Bandit algorithms},
  author={Lattimore, Tor and Szepesv{\'a}ri, Csaba},
  year={2020},
  publisher={Cambridge University Press}
}

@inproceedings{li2017provably,
  title={Provably optimal algorithms for generalized linear contextual bandits},
  author={Li, Lihong and Lu, Yu and Zhou, Dengyong},
  booktitle={International Conference on Machine Learning},
  pages={2071--2080},
  year={2017},
  organization={PMLR}
}

@article{liu1989limited,
  title={On the limited memory {BFGS} method for large scale optimization},
  author={Liu, Dong C and Nocedal, Jorge},
  journal={Mathematical Programming},
  volume={45},
  number={1},
  pages={503--528},
  year={1989},
  publisher={Springer}
}

@article{luo2024distribution,
  title={Distribution-free contextual dynamic pricing},
  author={Luo, Yiyun and Sun, Will Wei and Liu, Yufeng},
  journal={Mathematics of Operations Research},
  volume={49},
  number={1},
  pages={599--618},
  year={2024},
  publisher={INFORMS}
}

@article{ma2024statistical,
  title={Statistical Inference For Noisy Matrix Completion Incorporating Auxiliary Information},
  author={Ma, Shujie and Niu, Po-Yao and Zhang, Yichong and Zhu, Yinchu},
  journal={Journal of the American Statistical Association},
  number={just-accepted},
  pages={1--24},
  year={2024},
  publisher={Taylor \& Francis}
}

@article{mao2019matrix,
  title={Matrix completion with covariate information},
  author={Mao, Xiaojun and Chen, Song Xi and Wong, Raymond KW},
  journal={Journal of the American Statistical Association},
  volume={114},
  number={525},
  pages={198--210},
  year={2019},
  publisher={Taylor \& Francis}
}

@article{mcfadden1974conditional,
  title={Conditional Logit Analysis of Qualitative Choice Behavior},
  author={McFadden, Daniel},
  journal={Frontiers in Econometrics},
  year={1974},
  pages={105-142},
  editor ={Zarembka, Paul},
  publisher={Academic Press}
}

@inproceedings{oh2021multinomial,
  title={Multinomial logit contextual bandits: Provable optimality and practicality},
  author={Oh, Min-hwan and Iyengar, Garud},
  booktitle={Proceedings of the AAAI Conference on Artificial Intelligence},
  volume={35},
  number={10},
  pages={9205--9213},
  year={2021}
}

@article{robin2018low,
  title={Low-rank interaction with sparse additive effects model for large data frames},
  author={Robin, Genevi{\`e}ve and Wai, Hoi-To and Josse, Julie and Klopp, Olga and Moulines, {\'E}ric},
  journal={Advances in Neural Information Processing Systems},
  volume={31},
  year={2018}
}

@article{rusmevichientong2010dynamic,
  title={Dynamic assortment optimization with a multinomial logit choice model and capacity constraint},
  author={Rusmevichientong, Paat and Shen, Zuo-Jun Max and Shmoys, David B},
  journal={Operations Research},
  volume={58},
  number={6},
  pages={1666--1680},
  year={2010},
  publisher={INFORMS}
}

@article{sarwar2000application,
  title={Application of dimensionality reduction in recommender system-a case study},
  author={Sarwar, Badrul and Karypis, George and Konstan, Joseph and Riedl, John T},
  journal={Proceedings of the ACM WebKDD, 2000},
  year={2000}
}

@article{saure2013optimal,
  title={Optimal dynamic assortment planning with demand learning},
  author={Saur{\'e}, Denis and Zeevi, Assaf},
  journal={Manufacturing \& Service Operations Management},
  volume={15},
  number={3},
  pages={387--404},
  year={2013},
  publisher={INFORMS}
}

@article{shao2022sparse,
  title={Sparse assortment personalization in high dimensions},
  author={Shao, Jingyu and Dong, Ruipeng and Zheng, Zemin},
  journal={JUSTC},
  volume={52},
  number={3},
  pages={5},
  year={2022},
  publisher={JUSTC}
}

@article{shi2018high,
  title={High-dimensional A-learning for optimal dynamic treatment regimes},
  author={Shi, Chengchun and Fan, Alin and Song, Rui and Lu, Wenbin},
  journal={Annals of Statistics},
  volume={46},
  number={3},
  pages={925--957},
  year={2018},
  publisher={NIH Public Access}
}

@article{shi2023multiagent,
  title={A multiagent reinforcement learning framework for off-policy evaluation in two-sided markets},
  author={Shi, Chengchun and Wan, Runzhe and Song, Ge and Luo, Shikai and Zhu, Hongtu and Song, Rui},
  journal={The Annals of Applied Statistics},
  volume={17},
  number={4},
  pages={2701--2722},
  year={2023},
  publisher={Institute of Mathematical Statistics}
}

@article{sumida2023optimizing,
  title={Optimizing and Learning Assortment Decisions in the Presence of Platform Disengagement},
  author={Sumida, Mika and Zhou, Angela},
  journal={Available at SSRN 4537925},
  year={2023}
}

@article{tropp2012user,
  title={User-friendly tail bounds for sums of random matrices},
  author={Tropp, Joel A},
  journal={Foundations of Computational Mathematics},
  volume={12},
  pages={389--434},
  year={2012},
  publisher={Springer}
}

@article{udell2019big,
  title={Why are big data matrices approximately low rank?},
  author={Udell, Madeleine and Townsend, Alex},
  journal={SIAM Journal on Mathematics of Data Science},
  volume={1},
  number={1},
  pages={144--160},
  year={2019},
  publisher={SIAM}
}

@inproceedings{van2000using,
  title={Using content-based filtering for recommendation},
  author={Van Meteren, Robin and Van Someren, Maarten},
  booktitle={Proceedings of the Machine Learning in the New Information Age: MLnet/ECML2000 workshop},
  volume={30},
  pages={47--56},
  year={2000},
  organization={Barcelona}
}

@article{vershynin2010introduction,
  title={Introduction to the non-asymptotic analysis of random matrices},
  author={Vershynin, Roman},
  journal={arXiv preprint arXiv:1011.3027},
  year={2010}
}

@book{wainwright2019high,
  title={High-dimensional statistics: A non-asymptotic viewpoint},
  author={Wainwright, Martin J},
  volume={48},
  year={2019},
  publisher={Cambridge university press}
}

@inproceedings{wang2017unified,
  title={A unified computational and statistical framework for nonconvex low-rank matrix estimation},
  author={Wang, Lingxiao and Zhang, Xiao and Gu, Quanquan},
  booktitle={Artificial Intelligence and Statistics},
  pages={981--990},
  year={2017},
  organization={PMLR}
}

@article{wang2019online,
  title={Online assortment optimization with high-dimensional data},
  author={Wang, Xue and Wei, Mike Mingcheng and Yao, Tao},
  journal={Available at SSRN 3521843},
  year={2019}
}

@article{xia2021statistical,
  title={Statistical inferences of linear forms for noisy matrix completion},
  author={Xia, Dong and Yuan, Ming},
  journal={Journal of the Royal Statistical Society Series B: Statistical Methodology},
  volume={83},
  number={1},
  pages={58--77},
  year={2021},
  publisher={Oxford University Press}
}

@article{xia2022inference,
  title={Inference for low-rank tensors—no need to debias},
  author={Xia, Dong and Zhang, Anru R and Zhou, Yuchen},
  journal={The Annals of Statistics},
  volume={50},
  number={2},
  pages={1220--1245},
  year={2022},
  publisher={Institute of Mathematical Statistics}
}

@article{xu2023assortment,
  title={Assortment optimization for a multistage choice model},
  author={Xu, Yunzong and Wang, Zizhuo},
  journal={Manufacturing \& Service Operations Management},
  volume={25},
  number={5},
  pages={1748--1764},
  year={2023},
  publisher={INFORMS}
}

@article{zhang2023generalized,
  title={Generalized connectivity matrix response regression with applications in brain connectivity studies},
  author={Zhang, Jingfei and Sun, Will Wei and Li, Lexin},
  journal={Journal of Computational and Graphical Statistics},
  volume={32},
  number={1},
  pages={252--262},
  year={2023},
  publisher={Taylor \& Francis}
}

@article{zhao2015new,
  title={New statistical learning methods for estimating optimal dynamic treatment regimes},
  author={Zhao, Ying-Qi and Zeng, Donglin and Laber, Eric B and Kosorok, Michael R},
  journal={Journal of the American Statistical Association},
  volume={110},
  number={510},
  pages={583--598},
  year={2015},
  publisher={Taylor \& Francis}
}

@article{zhen2024nonnegative,
  title={Nonnegative tensor completion for dynamic counterfactual prediction on COVID-19 pandemic},
  author={Zhen, Yaoming and Wang, Junhui},
  journal={The Annals of Applied Statistics},
  volume={18},
  number={1},
  pages={224--245},
  year={2024},
  publisher={Institute of Mathematical Statistics}
}

@article{zheng2016convergence,
  title={Convergence analysis for rectangular matrix completion using Burer-Monteiro factorization and gradient descent},
  author={Zheng, Qinqing and Lafferty, John},
  journal={arXiv preprint arXiv:1605.07051},
  year={2016}
}

@article{zhou2024stochastic,
  title={Stochastic low-rank tensor bandits for multi-dimensional online decision making},
  author={Zhou, Jie and Hao, Botao and Wen, Zheng and Zhang, Jingfei and Sun, Will Wei},
  journal={Journal of the American Statistical Association},
  number={just-accepted},
  pages={1--25},
  year={2024},
  publisher={Taylor \& Francis}
}

@article{zhou2024estimating,
  title={Estimating optimal infinite horizon dynamic treatment regimes via pt-learning},
  author={Zhou, Wenzhuo and Zhu, Ruoqing and Qu, Annie},
  journal={Journal of the American Statistical Association},
  volume={119},
  number={545},
  pages={625--638},
  year={2024},
  publisher={Taylor \& Francis}
}

@article{chi2019nonconvex,
  title={Nonconvex optimization meets low-rank matrix factorization: An overview},
  author={Chi, Yuejie and Lu, Yue M and Chen, Yuxin},
  journal={IEEE Transactions on Signal Processing},
  volume={67},
  number={20},
  pages={5239--5269},
  year={2019},
  publisher={IEEE}
}

@inproceedings{javanmard2020multi,
  title={Multi-product dynamic pricing in high-dimensions with heterogeneous price sensitivity},
  author={Javanmard, Adel and Nazerzadeh, Hamid and Shao, Simeng},
  booktitle={2020 IEEE International Symposium on Information Theory (ISIT)},
  pages={2652--2657},
  year={2020},
  organization={IEEE}
}

@article{konishi1996generalised,
  title={Generalised information criteria in model selection},
  author={Konishi, Sadanori and Kitagawa, Genshiro},
  journal={Biometrika},
  volume={83},
  number={4},
  pages={875--890},
  year={1996},
  publisher={Oxford University Press}
}

@article{park2024low,
  title={Low-rank regression models for multiple binary responses and their applications to cancer cell-line encyclopedia data},
  author={Park, Seyoung and Lee, Eun Ryung and Zhao, Hongyu},
  journal={Journal of the American Statistical Association},
  volume={119},
  number={545},
  pages={202--216},
  year={2024},
  publisher={Taylor \& Francis}
}

@article{akaike1987factor,
  title={Factor analysis and AIC},
  author={Akaike, Hirotugu},
  journal={Psychometrika},
  volume={52},
  pages={317--332},
  year={1987},
  publisher={Springer}
}

@article{morimoto2024information,
  title={Information Criterion-Based Rank Estimation Methods for Factor Analysis: A Unified Selection Consistency Theorem and Numerical Comparison},
  author={Morimoto, Toshinari and Hung, Hung and Huang, Su-Yun},
  journal={arXiv preprint arXiv:2407.19959},
  year={2024}
}

@article{fan2013tuning,
  title={Tuning parameter selection in high dimensional penalized likelihood},
  author={Fan, Yingying and Tang, Cheng Yong},
  journal={Journal of the Royal Statistical Society Series B: Statistical Methodology},
  volume={75},
  number={3},
  pages={531--552},
  year={2013},
  publisher={Oxford University Press}
}

@article{lika2014facing,
  title={Facing the cold start problem in recommender systems},
  author={Lika, Blerina and Kolomvatsos, Kostas and Hadjiefthymiades, Stathes},
  journal={Expert systems with applications},
  volume={41},
  number={4},
  pages={2065--2073},
  year={2014},
  publisher={Elsevier}
}

@inproceedings{schein2002methods,
  title={Methods and metrics for cold-start recommendations},
  author={Schein, Andrew I and Popescul, Alexandrin and Ungar, Lyle H and Pennock, David M},
  booktitle={Proceedings of the 25th annual international ACM SIGIR conference on Research and development in information retrieval},
  pages={253--260},
  year={2002}
}

@book{van2000asymptotic,
  title={Asymptotic statistics},
  author={Van der Vaart, Aad W},
  volume={3},
  year={2000},
  publisher={Cambridge university press}
}

@article{cheung2017thompson,
  title={Thompson sampling for online personalized assortment optimization problems with multinomial logit choice models},
  author={Cheung, Wang Chi and Simchi-Levi, David},
  journal={Available at SSRN 3075658},
  year={2017}
}

\end{document}